%% file: cqnt-tqe.tex
\documentclass[10pt,journal,twoside]{IEEEtran}

\input{_default_packages.tex}

\usetikzlibrary{shadows, backgrounds, patterns, decorations.pathreplacing}
\usepackage{tikz-cd}

\usepackage{wrapfig}

\usepackage[absolute]{textpos}
\usepackage{xcolor}
\newcommand{\newtext}[1]{{#1}}

\usepackage{tabularx}

\title{Quantum-Classical Coexistence\\ Network Tomography}
\author{\IEEEauthorblockN{
    Xuchuang Wang\IEEEauthorrefmark{1},
    Joseph C. Chapman\IEEEauthorrefmark{2}, 
    Aneesh Ramaswamy\IEEEauthorrefmark{2}, 
     Matheus Guedes de Andrade\IEEEauthorrefmark{1},
     \\
 Yu-Zhen Janice Chen\IEEEauthorrefmark{1}, 
 Joseph M. Lukens\IEEEauthorrefmark{2}\IEEEauthorrefmark{3}, 
 Gayane Vardoyan\IEEEauthorrefmark{1}, 
 Don Towsley\IEEEauthorrefmark{1}}
 \\
    \IEEEauthorblockA{\IEEEauthorrefmark{1}College of Information and Computer Sciences, University of Massachusetts Amherst, Amherst, MA, USA
    \\
        \IEEEauthorrefmark{2}Quantum Information Science Section,
Oak Ridge National Laboratory, Oak Ridge, TN, USA
        \\
        \IEEEauthorrefmark{3}Elmore Family School of Electrical and Computer Engineering and
        \\
        Purdue Quantum Science and Engineering Institute, Purdue University, West Lafayette, IN, USA
        \\
        Emails: xuchuangw@gmail.com, \{mguedesdeand, yuzhenchen, gvardoyan, towsley\}@cs.umass.edu,
        \\
        \{chapmanjc, ramaswamya\}@ornl.gov, jlukens@purdue.edu}
}

\newlength{\doenoticeheight}
\AddToHook{shipout/after}{%
  \ifnum\value{page}=1\relax
    \global\advance\textheight by \doenoticeheight
  \fi}

\begin{document}

\maketitle

\begin{abstract}
    Quantum-classical coexistence networks (QCNs) share optical fiber between quantum and classical signals via wavelength-division multiplexing, offering a practical path to deploying quantum communication over existing telecom infrastructure.
    However, co- and counter-propagating classical traffic introduce distinct depolarization noise on the quantum channel, complicating channel characterization.
    We develop a tomography framework that infers per-link channel parameters of a QCN from end-to-end measurements alone.
    We first model each coexisting fiber by decomposing the quantum signal evolution into photon loss, successful transmission, and three direction-dependent depolarization components.
    Building on this model, we derive closed-form link-level estimators for all channel parameters, and then extend the approach to star-topology networks through a system of multiplicative equations across end-node pairs\newtext{, along} with a simple classical-signal-direction-switching protocol that resolves the remaining unknowns.
    Using these methods, we analyze single-link experimental testbed data and confirm accurate recovery of per-link depolarization probabilities, with estimated process fidelities closely tracking the Bayesian-process-tomography baseline across multiple fiber lengths and wavelengths; the small residual gaps reflect the depolarization-only model approximation. As no multi-link coexistence testbed yet exists, we validate the star-network estimators on \newtext{emulated multi-link paths composed of copies of measured single-link channels}.
    Furthermore, we extend our framework in two directions: (i) we propose a quantum-channel model that \newtext{factorizes} the coexisting fiber into a depolarizing-with-loss signal channel and a Raman-\newtext{noise-}injection channel \newtext{acting on separate optical modes---a completely-positive, trace-preserving \emph{tensor product}---}and show that the link observables reduce exactly to our basic model; and (ii) we generalize the star-network approach to arbitrary topologies via a peeling algorithm (trees) and a least-squares estimator (meshes), which Monte-Carlo simulations confirm recover per-link parameters on tree and cyclic-mesh networks under a correctly-specified model.
\end{abstract}

\begin{IEEEkeywords}
    Quantum-classical coexistence, network tomography, quantum communications, channel estimation, Raman scattering.
\end{IEEEkeywords}

\begin{textblock}{13.3}(1.4,15)\noindent\fontsize{7}{7}\selectfont\textcolor{black!30}{This manuscript has been co-authored by UT-Battelle, LLC, under contract DE-AC05-00OR22725 with the US Department of Energy (DOE). The US government retains and the publisher, by accepting the article for publication, acknowledges that the US government retains a nonexclusive, paid-up, irrevocable, worldwide license to publish or reproduce the published form of this manuscript, or allow others to do so, for US government purposes. DOE will provide public access to these results of federally sponsored research in accordance with the DOE Public Access Plan (http://energy.gov/downloads/doe-public-access-plan).}\end{textblock}

\input{sections/intro.tex}

\input{sections/model.tex}
\input{sections/cptp-model.tex}

\input{sections/link-estimator.tex}

\input{sections/network-estimator.tex}

\input{sections/general-topology.tex}

\input{sections/experiment.tex}

\input{sections/conclusion.tex}

\section*{Data and Code Availability}
The experimental dataset analyzed in this work was collected by Chapman et al.~\cite{chapman2023coexistent}; access is available from the authors of that work subject to the applicable institutional release procedures. The closed-form link and star-network estimators, the general-topology peeling and least-squares solvers, and the multi-link Monte-Carlo sampler used for the star-network emulation and general-topology simulations will be made available in a public repository (with an archival DOI) upon publication, subject to Oak Ridge National Laboratory and U.S. Department of Energy software-release approval; until then, the code is available from the corresponding author upon reasonable request.

\section*{Acknowledgments}
\textbf{Generative AI disclosure:} During the preparation of this work the author(s) used Gemini Pro to assist with language editing and paragraph rephrasing. After using this tool, the author(s) reviewed and edited the content as needed and take full responsibility for the content of the publication.

This work was performed in part at Oak Ridge National Laboratory, operated by UT-Battelle for the U.S. Department of Energy under contract no.\ DE-AC05-00OR22725. Funding was provided by the U.S. Department of Energy, Office of Science, Office of Advanced Scientific Computing Research, through the Performance Integrated Quantum Scalable Internet program (ERKJ432).

\bibliographystyle{IEEEtran}
\bibliography{bibliography}





\newpage

\end{document}

%% file: _default_packages.tex
\usepackage{cite}

\usepackage{graphicx}
\usepackage{diagbox}
\usepackage{multirow}
\usepackage{booktabs}
\usepackage{subcaption}
\usepackage{threeparttable, array, float}
\usepackage{colortbl}
\usepackage{tkz-graph}
\usepackage{tikz}
\usepackage{pgfplots}

\usepackage{textcomp}
\usepackage{enumitem}
\setlist[itemize]{leftmargin=0pt, itemindent=1em, topsep=2pt, itemsep=1pt, parsep=0pt, partopsep=0pt}
\setlist[enumerate]{leftmargin=0pt, itemindent=1.5em, topsep=2pt, itemsep=1pt, parsep=0pt, partopsep=0pt}

\usepackage{xcolor}

\usepackage{hyperref}
\definecolor{mydarkblue}{rgb}{0,0.08,0.45}
\hypersetup{ %
    pdftitle={},
    pdfsubject={},
    pdfkeywords={},
    pdfborder=0 0 0,
    pdfpagemode=UseNone,
    colorlinks=true,
    linkcolor=mydarkblue,
    citecolor=mydarkblue,
    filecolor=mydarkblue,
    urlcolor=mydarkblue,
}

\usepackage{amsmath,amsfonts,amssymb}
\usepackage{bm, bbm}
\usepackage{mathtools}
\usepackage{nicefrac}
\usepackage{physics}
\usepackage{empheq}

\makeatletter
\newcommand\addstarred[1]{%
    \expandafter\let\csname\string#1@nostar\endcsname#1%
    \edef#1{\noexpand\@ifstar\expandafter\noexpand\csname\string#1@star\endcsname\expandafter\noexpand\csname\string#1@nostar\endcsname}%
    \expandafter\newcommand\csname\string#1@star\endcsname%
}
\makeatother
\newcommand{\1}[1]{\mathbbm{1}{\{#1\}}}
\addstarred\1[1]{\mathbbm{1}{\left\{#1\right\}}}

\renewcommand{\ge}{\geqslant}
\renewcommand{\le}{\leqslant}
\renewcommand{\geq}{\geqslant}

\newcommand{\type}[1]{Type-\uppercase\expandafter{\romannumeral#1}}

\usepackage{amsthm}
\newtheorem{theorem}{Theorem}
\newtheorem{lemma}[theorem]{Lemma}
\newtheorem{proposition}[theorem]{Proposition}

\newtheorem{remark}[theorem]{Remark}

\newtheorem{assumption}[theorem]{Assumption}

\usepackage{algorithm}
\usepackage[noend]{algpseudocode}

\let\oldComment=\Comment
\renewcommand{\Comment}[1]{\oldComment{\texttt{#1}}}
\algnewcommand{\LeftComment}[1]{\Statex $\triangleright$ \texttt{#1}}
\algnewcommand{\RightComment}[1]{\Statex \leavevmode\hfill$\triangleright$ \texttt{#1}}

\algnewcommand\algorithmicinput{\textbf{Input:}}
\algnewcommand\Input{\item[\algorithmicinput]}%

\algnewcommand\algorithmicoutput{\textbf{Output:}}
\algnewcommand\Output{\item[\algorithmicoutput]}%

\algnewcommand\algorithmicinitial{\textbf{initialize:}}
\algnewcommand\Initial{\item[\algorithmicinitial]}%

\usepackage{xspace}
\usepackage{comment}

\usepackage{fontawesome5}

\usepackage[colorinlistoftodos,prependcaption,textsize=tiny,textwidth=1cm,color=green]{todonotes}


%% file: sections/intro.tex
\section{Introduction}

Quantum networks~\cite{azuma2023quantum} are essential infrastructure for emerging quantum technologies, including quantum key distribution (QKD)~\cite{bennett2014quantum}, distributed quantum computing~\cite{jiang2007distributed,cacciapuoti2019quantum}, and quantum sensing~\cite{guo2020distributed}.
However, deploying dedicated optical fibers exclusively for quantum communication is prohibitively expensive and impractical at scale.
A promising alternative is the \emph{quantum-classical coexistence network} (QCN), in which quantum and classical signals share the same optical fiber (called a \emph{coexisting fiber})---typically separated via wavelength-division multiplexing (WDM)~\cite{townsend1997simultaneous}.
By leveraging existing fiber deployments, QCNs offer a cost-effective path toward large-scale quantum networking.
At the same time, classical signals generate noise---primarily through spontaneous Raman scattering---that degrades the fidelity of the co-existing quantum states, making the characterization of these impairments a prerequisite for the reliable design and operation of practical QCNs.

Network tomography~\cite{he2021network} provides a principled approach to this characterization problem: it infers internal link-level parameters from end-to-end measurements alone, without requiring direct access to every individual link.
In classical networks, tomography techniques have been widely studied for estimating link delays, loss rates, and available bandwidth from path-level observations.
For QCNs, network tomography is particularly compelling because, as the network scales, independently characterizing each fiber link becomes infeasible, yet accurate knowledge of per-link parameters---such as photon loss and depolarization induced by classical traffic---is critical for tasks such as entanglement routing~\cite{pant2019routing,wang2025learn}, error budgeting~\cite{azuma2025networking}, and resource allocation~\cite{vardoyan2023quantum}.
Developing tomography methods tailored to the physics of quantum-classical coexistence channels can therefore enable efficient, scalable characterization of large QCNs using only measurements at network edge nodes.

Achieving this goal, however, poses several key challenges.
First, no established mathematical model exists for the \emph{coexisting fiber} channel that captures the quantum signal degradation caused by classical traffic; such a model is a necessary foundation before any tomography inference can be formulated.
Second, a tractable network-level model must also be developed to relate end-to-end observations to internal link parameters across multiple nodes and links.
Third, the coexistence setting introduces a richer parameter space than dedicated quantum or classical networks alone: each link must be characterized not only by its intrinsic transmission loss but also by \emph{direction-dependent} depolarization parameters arising from co- and counter-propagating classical signals, substantially increasing the number of unknowns and making the tomographic inverse problem more challenging.

\vspace{-5pt}
\subsection{Contributions}

The main contributions of this work are as follows.
\begin{itemize}
    \item \textbf{Coexisting-fiber channel model.} We formulate the first channel model for a coexisting fiber, decomposing the quantum signal evolution into photon loss, successful transmission, and \emph{direction-dependent} depolarization with three parameters for the no-classical-signal, co-propagation, and counter-propagation scenarios---capturing the noise structure unique to QCNs (\S\ref{sec:model}).
    \item \textbf{Closed-form link tomography.} We derive a quantum link tomography (QLT) method that estimates all parameters of a single coexisting fiber---the success ratio, the photon-loss ratio, and the three depolarization ratios---in closed form from \newtext{end-to-end} measurements, and we characterize its bias and variance (\S\ref{sec:link-tomography}).
    \item \textbf{Quantum-channel (CPTP) formulation.} \newtext{We recast the coexisting fiber as a CPTP channel on two optical modes: the \emph{tensor product} of a depolarizing-with-loss \emph{signal} channel on the signal mode and a \emph{Raman-noise-injection} channel on the broadband mode into which the classical signal scatters. The excess photons that coexistence delivers are then carried by a two-mode photon-number observable rather than by the trace of the state, and the link observables reduce \emph{exactly} to the basic model (\S\ref{sec:cptp-model}).}
    \item \textbf{Star-network tomography with provable identifiability.} We extend QLT to star-topology QCNs, recovering per-link success ratios from a system of multiplicative equations across end-node pairs and resolving the remaining per-link depolarization parameters via a classical-signal-direction-switching protocol. We prove that a single such flip removes the residual gauge degeneracy (\S\ref{sec:network-tomography}).
    \item \textbf{General topologies.} We prove a path-observable composition rule (Proposition~\ref{prop:path-observables}) showing that the coexistence observables compose multiplicatively along a path \newtext{even though Raman injection makes the received ratio a per-photon \emph{yield} rather than a survival probability}. \newtext{Building on it, we give a tree-peeling estimator that adapts the progressive-etching peeling of~\cite{wang2025quantum}, developed there for bit-/phase-flip channels with SPAM errors. For general graphs, we give a log-linear least-squares solver together with an incidence-rank identifiability condition (\S\ref{sec:general-topology}).}
    \item \textbf{Validation by experiment, emulation, and simulation.} We validate QLT on coexisting-fiber experimental testbed data, confirming model correctness and estimator accuracy against measured depolarization probabilities and process fidelities (matching the Bayesian-process-tomography baseline) across multiple fiber lengths and wavelengths (\S\ref{sec:experiment}). As no multi-link coexistence testbed yet exists, we emulate star networks by recombining the measured link data into multi-link paths (\S\ref{subsec:star-experiment}), and validate the general-topology estimators with fully synthetic Monte-Carlo simulations, together verifying estimator correctness, identifiability, and sample-complexity scaling (\S\ref{subsec:general-topology-experiment}).
\end{itemize}

\subsection{Related Work}

\textbf{Quantum-classical coexistence networks.}
Townsend~\cite{townsend1997simultaneous} first demonstrated the feasibility of transmitting quantum information---specifically for QKD---over a classical fiber network by applying WDM to separate quantum and classical channels.
Subsequent work has focused on improving the capacity and reach of QCNs; an early systematic study of this regime is Patel et al.~\cite{patel2012coexistence}, who quantified the direction-dependent Raman noise that limits high-bit-rate QKD-plus-data coexistence.
For example, Wang et al.~\cite{wang2017long} demonstrated long-distance co-propagation of QKD alongside terabit-rate classical data streams over distances up to \(80\)~km, and Chapman et al.~\cite{chapman2023two} distributed two-mode squeezed light via a coexistence network, a key resource for distributed continuous-variable quantum systems.
Beyond WDM, alternative separation techniques have been proposed, including the electro-optic serrodyne approach of R\"{u}beling et al.~\cite{rubeling2024quantum} and the time-interleaved co-propagation method of Wang et al.~\cite{wang2024time}; reconfigurable deployments have also been demonstrated, e.g.\ the dynamically switched coexistence network field trial of Wang et al.~\cite{wang2023field}.
More recently, Thomas et al.~\cite{thomas2024quantum} demonstrated quantum state teleportation in a coexistence fiber network carrying real internet traffic, and Antesberger et al.~\cite{antesberger2024distribution} proposed hollow-core fibers (HCFs) to further suppress classical-channel noise by confining most \newtext{of the} light in air rather than glass, thereby significantly reducing Raman scattering.

\textbf{Network tomography.}
Network tomography has a long history in classical networks.
Coates et al.~\cite{coates2002internet} \newtext{provided} a comprehensive survey of internet tomography, covering both delay and loss inference from end-to-end measurements.
For loss tomography, C\'{a}ceres et al.~\cite{caceres1999multicast} \newtext{developed} maximum likelihood estimators from multicast probes, while for delay tomography, Lo~Presti et al.~\cite{lopresti2002multicast} and Tsang et al.~\cite{tsang2003network} \newtext{proposed} inference methods for internal delay distributions.
A systematic treatment of identifiability, measurement design, and network state inference \newtext{was given} by He et al.~\cite{he2021network}.
More recently, network tomography has been extended to quantum networks:
De~Andrade et al.~\cite{deandrade2022quantum} \newtext{introduced} quantum network tomography (QNT) using multi-party state distribution, De~Andrade et al.~\cite{deandrade2023characterization} \newtext{further studied} the characterization of quantum flip-star networks, and De~Andrade et al.~\cite{deandrade2024quantum} \newtext{provided} a broader overview of QNT as a principled framework for inferring quantum channel parameters from end-to-end measurements.
Building on these foundations, Wang et al.~\cite{wang2025optimal} \newtext{studied} optimal online probe allocation for both classical and quantum network tomography, formulating the task as an online experimental design problem.
However, these QNT works \newtext{modeled} each link by a single scalar noise parameter under a fixed, static probing configuration---for instance the quantum flip-star of De~Andrade et al.~\cite{deandrade2023characterization}, where every link \newtext{was} an independent bit-flip channel with one flip probability. The coexistence setting we study is structurally different in three ways. First, each coexisting fiber carries three coupled unknowns---the success ratio \(s\) and the co- and counter-propagation depolarization ratios \(d^{(1)}, d^{(2)}\)---that compose \emph{multiplicatively} across a path rather than as a single additive flip probability. Second, this richer per-link structure leaves a residual gauge degeneracy that no static end-to-end probing can resolve \newtext{(Lemma~\ref{lem:star-identifiability} in \S\ref{sec:network-tomography})}. Third, we resolve it with an \emph{active} reconfiguration---reversing the classical-signal direction on a single link, which has no analogue in the passive flip-star---rather than by adding more passive probes. To the best of our knowledge, network tomography for quantum-classical coexistence networks, with this direction-dependent depolarization structure, has not been studied.

%% file: sections/model.tex
\section{Coexisting Fiber Model}
\label{sec:model}

This section formulates the coexisting fiber channel model illustrated in Figure~\ref{fig:qcnt-model}.
We consider a single optical fiber link shared by a quantum channel and a classical channel via WDM.
The quantum signal occupies one wavelength while the classical signal occupies another; at the receiver, a WDM demultiplexer and narrowband filter separate the two.
Despite this spectral separation, noise from the classical channel---primarily spontaneous Raman scattering---leaks into the quantum channel and degrades the transmitted quantum state.
Below, we describe each component of the model and then define the observable quantities and parameters to be estimated.

\begin{figure*}[tb]
    \centering
    \resizebox{0.9\textwidth}{!}{
        \begin{tikzpicture}
            \fill[red, fill opacity=0.3] (0, -1.5) rectangle (2, 0); 
            \fill[green, fill opacity=0.3] (2, -1.5) rectangle (8, 0); 

            \fill[red, fill opacity=0.3] (0, 0) rectangle (5, 1.5);   
            \fill[green, fill opacity=0.3] (5, 0) rectangle (7, 1.5);   

            \fill[blue, fill opacity=0.3] (7, 1) rectangle (8, 1.5);   
            \fill[blue, fill opacity=0.3] (7, 0.5) rectangle (9, 1);   
            \fill[blue, fill opacity=0.3] (7, 0) rectangle (10, 0.5);   

            \draw[draw=black] (0, 0) rectangle (8, 1.5);
            \draw[draw=black] (0, -1.5) rectangle (8, 0);

            \draw[draw=black, loosely dashed] (0, 0.5) -- (10, 0.5);
            \draw[draw=black, loosely dashed] (0, 1) -- (9, 1);

            \node at (2.5, 0.75) {{photon} loss \(q\)};
            \node at (7.5, 0.25) {\(d^{(2)}\)};
            \node at (7.5, 0.75) {\(d^{(1)}\)};
            \node at (7.5, 1.25) {\(d^{(0)}\)};
            \node at (6, 0.75) {success \(s\)};

            \node at (1, -0.75) {loss};
            \node at (5, -0.75) {success};

            \node[left=0.2cm] at (-0.75, 0.75) {Quan.};
            \node[left=0.2cm] at (-0.25, 1.25) {\(\times\)};
            \node[left=0.2cm] at (-0.25, 0.75) {\(\Rightarrow\)};
            \node[left=0.2cm] at (-0.25, 0.25) {\(\Leftarrow\)};
            \node[left=0.2cm] at (0, -0.5) {Class.};
            \node[left=0.2cm] at (-0.25, -1) {\(\Rightarrow\)};

            \draw[<->, thick]   (0,1.6)  -- (5,1.6)  node[midway, above]{\scriptsize \textbf{fiber attenuation}, multiplexing, filter, detector};
            \draw[<->, thick]   (7,1.6)  -- (10,1.6)  node[midway, above]{\scriptsize \textbf{depolarization}};
            \draw[<-|, thick]   (2,-0.1)  -- (5,-0.1)  node[midway, below]{\scriptsize modulation/encoding};

            \draw[|->, thick]   (8,1)  -- (9,1)  node[midway, above]{\scriptsize co-prop.};
            \draw[|->, thick]   (8,0)  -- (10,0)  node[midway, below]{\scriptsize counter-prop.};
            \draw[decorate, decoration={brace, amplitude=5pt, mirror}, thick] (10.2, 0) -- (10.2, 1) node[midway, right=4pt]{\scriptsize \shortstack{Raman\\Noise}};

        \end{tikzpicture}
    }
    \caption{Model of quantum-classical coexistence fiber. The upper block represents the quantum channel with three possible outcomes: photon loss (ratio \(q\)), successful transmission (ratio \(s\)), and depolarization (ratio \(d^{(x)}\)). The three rows (separated with dashed lines) in the upper block correspond to no classical signal (\(\times\), depolarization \(d^{(0)}\)), co-propagation (\(\Rightarrow\), depolarization \(d^{(1)}\)), and counter-propagation (\(\Leftarrow\), depolarization \(d^{(2)}\)), respectively. The lower block represents the classical channel.}
    \label{fig:qcnt-model}
\end{figure*}

\subsection{Physical Background}

When a high-power classical signal propagates through an optical fiber, it generates broadband spontaneous Raman scattering (SpRS) noise that extends over a wide spectral range, overlapping with the quantum channel~\cite{1253508,Peters_2009}.
The SpRS noise photons are indistinguishable from the quantum signal photons \newtext{at} the detector, effectively injecting randomly polarized photons into the quantum channel and \newtext{decohering} the quantum state.
\newtext{Most of the scattered light nevertheless occupies broadband optical modes other than the signal mode, whose ladder operators commute with those of the signal mode to good approximation; \S\ref{sec:cptp-model} uses this to model the Raman photons as an ancilla system, with their indistinguishability from signal photons carried by the detector's measurement operator rather than by the state space~\cite{drummond2001quantum}.}
Crucially, the level of Raman noise depends on the \emph{relative propagation direction} of the classical signal with respect to the quantum signal:
\begin{itemize}
    \item \textbf{Co-propagation} (\(\Rightarrow\)): The classical signal travels in the same direction as the quantum signal. The Raman noise accumulates along the fiber and is partially attenuated before reaching the receiver, resulting in a moderate noise-photon injection rate.
    \item \textbf{Counter-propagation} (\(\Leftarrow\)): The classical signal travels in the opposite direction. The Raman noise generated near the (quantum) receiver end experiences less attenuation, typically producing a higher noise-photon injection rate at longer fiber lengths than co-propagation~\cite{patel2012coexistence}.
    \item \textbf{No classical signal} (\(\times\)): In the absence of classical traffic, only intrinsic fiber and system imperfections convert some photons to depolarized ones, yielding the lowest noise baseline.
\end{itemize}

In addition to noise-photon injection, the fiber also attenuates the quantum signal through standard mechanisms---fiber absorption, scattering, and imperfect coupling at the WDM multiplexer, filter, and detector---which collectively \newtext{represent} \emph{photon loss}.

\subsection{Channel Model}

Based on the physical effects described above, we model each coexisting fiber as a channel with three outcomes for each input photon, as shown in Figure~\ref{fig:qcnt-model}:
\begin{enumerate}
    \item \textbf{Photon loss} (ratio \(q\)): The photon is absorbed or scattered and never reaches the receiver. This accounts for fiber attenuation, coupling losses, and imperfect filtering.
    \item \textbf{Successful transmission} (ratio \(s\)): The photon arrives at the receiver with its polarization state intact, i.e., the quantum information is preserved.
    \item \textbf{Depolarization} (ratio \(d^{(x)}\)):
          Among the input photons, a fraction \(d^{(0)}\) is depolarized by mechanisms that act on the transmitted photon itself and survive the deterministic drift correction of \S\ref{subsec:dataset}---residual fast polarization-mode dispersion, imperfect state preparation and measurement (basis misalignment and finite extinction), and the residual frame error left after drift correction; input-independent detector dark and accidental counts instead act like a small, classical-signal-independent injection and are negligible relative to the signal here.
          When a classical signal coexists, Raman scattering \emph{injects} additional depolarized photons, contributing an extra fraction \(d^{(1)} - d^{(0)}\) under co-propagation and \(d^{(2)} - d^{(0)}\) under counter-propagation, both of which are indistinguishable from quantum-signal photons at the receiver.
          The total depolarization ratio \(d^{(x)}\) therefore depends on the classical-signal configuration \(x\), with \(d^{(0)}\) for no classical signal, \(d^{(1)}\) for co-propagation, and \(d^{(2)}\) for counter-propagation.
\end{enumerate}

The success ratio \(s\) and photon loss ratio \(q\) are intrinsic to the fiber and do not depend on the classical traffic, whereas the depolarization ratio \(d^{(x)}\) varies across the three scenarios, with \(d^{(0)} \le d^{(1)}\) and \(d^{(0)} \le d^{(2)}\) in general.
Since the three outcomes are mutually exclusive and exhaustive, in the no-classical-signal scenario (\(x=0\)) the channel parameters satisfy the normalization condition
\begin{align}\label{eq:channel-normalization}
    q + s + d^{(0)} = 1.
\end{align}
Under co- or counter-propagation, the additional Raman-induced photons inflate the received-photon count beyond the input population, so this normalization no longer holds; instead\newtext{,}
\begin{align}\label{eq:channel-normalization-coex}
    \newtext{q + s + d^{(x)} \geq 1, \quad x = 1, 2.}
\end{align}
\newtext{This population accounting is given a quantum-channel form in \S\ref{sec:cptp-model}, where the coexisting fiber is written as a \emph{product}---specifically a tensor product---of two CPTP maps acting on separate optical modes, rather than as a sum of maps on a single mode. There the surplus of~\eqref{eq:channel-normalization-coex} is carried by a two-mode photon-number observable, so that the state itself remains normalized.}

\subsection{Observable Quantities}

In practice, we do not directly observe the channel parameters \((q, s, d^{(x)})\); instead we measure aggregate photon counts over many transmissions. Let \(T \in \mathbb{N}^+\) denote the total number of input photons (each prepared in a fixed state, e.g., \(\ket{0}\)), \(R \in \mathbb{N}^+\) the number of received photons, and \(M \in \mathbb{N}^+\) the number of received photons whose measurement outcome matches the input state (e.g., outcome \(\ket{0}\) under the basis \(\{\ket{0}, \ket{1}\}\)). Further details on the experimental setup appear in \S\ref{subsec:dataset}.

\begin{figure}[t]
    \centering
    \resizebox{0.5\textwidth}{!}{
        \begin{tikzcd}[row sep=large, column sep=large, every cell/.append style={draw, rectangle, inner sep=5pt}, ampersand replacement=\&]
            \text{ \(q+s+d^{(0)}\)} \arrow[r, "\text{Receive}"] \& \text{\(s+d^{(x)}\)} \arrow[r, "\text{Measure}"] \& \text{\(s + \frac{d^{(x)}}{2}\)}
            \\[-2em]
            \text{Input \(T\)} \arrow[r, "\text{Receive}"] \& \text{Received \(R\)} \arrow[r, "\text{Measure}"] \& \text{Outcome \(M\)}
        \end{tikzcd}
    }
    \caption{Channel parameters (top row) and photon numbers (bottom row) in the coexisting fiber model. Here \(x=0,1,2\) for no classical signal, co-propagation, and counter-propagation depolarization, respectively.
    }
    \label{fig:channel-params}
\end{figure}

As illustrated in Figure~\ref{fig:channel-params}, the measurement process proceeds in two stages:
\begin{itemize}
    \item \textbf{Reception.}
          Of the \(T\) input photons, \((s + d^{(0)})T\) reach the receiver via the channel itself; under co- or counter-propagation, an additional \((d^{(x)} - d^{(0)})T\) Raman-injected depolarized photons are also detected, so the total number of received photons is \(R = (s + d^{(x)})T\).
          Among these \(R\) received photons, a fraction \(s/(s+d^{(x)})\) are signal photons with their polarization intact, while the remaining fraction \(d^{(x)}/(s+d^{(x)})\) are depolarized photons with uniformly random polarization.
    \item \textbf{Measurement.} Each received photon is measured in the predetermined polarization basis. A successfully transmitted photon always yields the matched outcome (i.e., the same polarization as sent), while a depolarized photon yields the matched outcome with probability \(1/2\) (since its polarization is uniformly random). We denote by \(M\) the number of photons that yield the matched outcome.
\end{itemize}
From these two stages, the expected values of the observables relate to the channel parameters via
\begin{align}
    \mathbb{E}\left[\frac{R}{T}\right]  = s + d^{(x)} \eqqcolon r^{(x)}, \quad
    \mathbb{E}\left[\frac{M}{T}\right]  = s + \frac{d^{(x)}}{2}. \label{eq:measure-ratio}
\end{align}
We call \(r^{(x)} \coloneqq s + d^{(x)} = \mathbb{E}[R/T]\) the \emph{received ratio}: the expected number of received photons per input photon. \newtext{It is therefore a per-photon \emph{yield} rather than a probability, is consistent with \(q+s+d^{(x)}\ge 1\) in~\eqref{eq:channel-normalization-coex}, and can exceed the loss-limited baseline \(1-q\) once Raman photons are injected (\(x=1,2\)); for \(x=0\) it equals \(1-q\).} (The lowercase \(r\) echoes the received count \(R\).)
    As we show later, received ratios compose \emph{multiplicatively} across concatenated links (Section~\ref{sec:network-tomography} and its general-topology extension in \S\ref{sec:general-topology}), which is the basis of the network estimators.
These equations form the basis for the tomography estimators developed in \S\ref{sec:link-tomography}.

\subsection{Estimation Objective}

\textbf{Data.} The experiments of Chapman et al.~\cite{chapman2023coexistent} collect \((T, R, M)\) triples for three coexistence scenarios: no classical signal (\(T^{(0)}, R^{(0)}, M^{(0)}\)), co-propagation (\(T^{(1)}, R^{(1)}, M^{(1)}\)), and counter-propagation (\(T^{(2)}, R^{(2)}, M^{(2)}\)).

\textbf{Depolarization channel and probability.}
If we focus on the quantum state of the received photons (post-selection), each coexisting fiber can be modeled as one of three \emph{depolarizing channels}, indexed by \(x\).
It is often convenient to express the depolarization in terms of the \emph{depolarization probability} \(p^{(x)} \in [0,1]\), defined as the conditional probability that a received photon is a depolarized noise photon rather than a successfully transmitted signal photon:
\begin{align}
    p^{(x)} \coloneqq \frac{d^{(x)}}{s + d^{(x)}} = \frac{d^{(x)}}{r^{(x)}}, \quad x = 0,1,2.
\end{align}
A larger \(p^{(x)}\) indicates a noisier channel with greater quantum-state degradation; equivalently, a fraction \(s/r^{(x)} = 1 - p^{(x)}\) of the received photons carry intact polarization.

\textbf{Objective.} Our goal is to estimate the channel parameters \(s, q, d^{(0)}, d^{(1)}, d^{(2)}\) (with \(q = 1 - s - d^{(0)}\), and consequently the depolarization probabilities \(p^{(0)}, p^{(1)}, p^{(2)}\)) of a coexisting fiber from the observed photon counts across the three scenarios.

%% file: sections/cptp-model.tex
\section{Quantum-Channel Model for the Coexisting Fiber}\label{sec:cptp-model}

In this section we recast the coexisting-fiber model of Section~\ref{sec:model} as a quantum channel. \newtext{The fiber acts on a \emph{two-mode} space: the signal mode, together with the broadband mode into which the classical signal scatters Raman photons. On that space we model it as the \emph{tensor product} $\mathcal{E}=\mathcal{E}_1\otimes\mathcal{E}_2$ of a depolarizing-with-loss \emph{signal} channel and a \emph{Raman-noise-injection} channel, and we show that the link observables $R/T$ and $M/T$ reduce \emph{exactly} to the basic-model parameters of \S\ref{sec:model}. The two-mode split is itself an approximation, made precise below. It does not discard the indistinguishability of the Raman and signal photons, which the detector's measurement operator retains.}

\subsection{Channel Model Formulation}

\newtext{\paragraph{Mode structure} Spontaneous Raman scattering does not act exclusively on the signal mode. The classical pump scatters photons into a broad continuum of optical modes, and the receiver's filter admits those that overlap the signal band~\cite{drummond2001quantum,1253508}. We collect the admitted photons into a single ancilla mode $S_2$, held distinct from the signal mode $S_1$. Treating $S_2$ as a system separate from $S_1$ requires the two modes to be independent degrees of freedom: formally, that the creation and annihilation (ladder) operators of one commute with those of the other, which holds exactly when the two mode profiles are orthogonal~\cite{drummond2001quantum}. Here this is an \emph{approximation}, since the scattering continuum is not exactly orthogonal to the signal mode. Their residual overlap $\eta$ sets its accuracy: the ladder operators commute up to $\eta$, and equivalently one may take the modes orthogonal by construction and neglect the $O(\eta)$ fraction of Raman photons scattered into the signal mode itself. The receiver's mode selectivity keeps this small, since the signal occupies one narrow spatiotemporal mode inside a much broader filter passband, so $\eta\ll1$. We accordingly work on the two-mode space $S_1\otimes S_2$, where $S_1$ carries the transmitted signal and the ancilla $S_2$ carries the Raman photons admitted by the filter.}

\newtext{\paragraph{Where indistinguishability enters} Separating the modes is not a claim that the two photon species can be told apart. Raman photons inside the filter passband remain operationally indistinguishable from signal photons: nothing in the experiment resolves which of $S_1$, $S_2$ a detected photon came from. We therefore carry that indistinguishability in the \emph{measurement operator} rather than in the state space. We model the detector by the mode-symmetric two-mode photon-number observable $\hat{N}$ of~\eqref{eq:two-mode-number} below. It carries no mode label and so cannot distinguish the two contributions; $S_1$ and $S_2$ are internal bookkeeping that the measurement erases. Placing indistinguishability at the detector is what makes the tensor-product form~\eqref{eq:cptp-map} available: because the detector sums over the two modes, the channel may factorize across modes it never has to identify.}

\newtext{Each mode is truncated to at most one photon: we write $\mathcal{H}=\operatorname{span}\{\ket{\Omega},\ket{0},\ket{1}\}$, where the vacuum $\ket{\Omega}$ records the absence of a photon and $\{\ket{0},\ket{1}\}$ spans the polarization qubit. Writing $I=\ket{0}\bra{0}+\ket{1}\bra{1}$ for the qubit identity, $I/2$ is the maximally mixed (fully depolarized) qubit. We adopt the standard weak-injection assumption and drop higher-order photon-number states: spontaneous Raman scattering into the filtered mode overlapping the signal is weak enough that multi-photon terms contribute negligibly~\cite{patel2012coexistence,Peters_2009}. At the photon-count level at which $R$ and $M$ are defined in \S\ref{sec:model}, the two link observables below then depend on $\mathcal{E}_2$ only through the probability that it injects a photon into $S_2$.}

\newtext{\paragraph{Signal channel} The channel $\mathcal{E}_1$ acts on the input photon $\rho_{\text{in}}$ in $S_1$: it transmits the photon intact with probability $s$, depolarizes it with the intrinsic probability $d^{(0)}$, and loses it with probability $1-s-d^{(0)}$,}
\begin{equation}
    \mathcal{E}_1(\rho) = s\, \rho + d^{(0)} \frac{I}{2} + (1-s-d^{(0)}) \ket{\Omega}\bra{\Omega}. \label{eq:cptp-signal}
\end{equation}
With $s, d^{(0)} \ge 0$ and the normalization $q + s + d^{(0)} = 1$ of \eqref{eq:channel-normalization}, the weights sum to one, so $\mathcal{E}_1$ is trace-preserving and hence a genuine CPTP channel \newtext{on $S_1$}.

\newtext{\paragraph{Raman-noise-injection channel} The channel $\mathcal{E}_2$ acts on the ancilla $S_2$ and captures the crosstalk of the coexisting classical signal. It is the \emph{depolarizing channel} of strength $d^{(x)}-d^{(0)}$,}
\begin{equation}
    \mathcal{E}_2(\rho) = (d^{(x)}-d^{(0)}) \frac{I}{2} + \bigl(1-(d^{(x)}-d^{(0)})\bigr) \newtext{\rho}. \label{eq:cptp-raman}
\end{equation}
\newtext{Being a convex combination of the completely depolarizing map $\rho\mapsto\operatorname{Tr}[\rho]I/2$ and the identity, $\mathcal{E}_2$ is CPTP whenever $0\le d^{(x)}-d^{(0)}\le1$. Unlike a replacement map, it acts on its argument, so it remains well defined on an arbitrary ancilla state. Here the ancilla is prepared in vacuum, $\rho_{\text{vac}}=\ket{\Omega}\bra{\Omega}$, and}
\begin{equation}
    \mathcal{E}_2(\rho_{\text{vac}}) = (d^{(x)}-d^{(0)}) \frac{I}{2} + \bigl(1-(d^{(x)}-d^{(0)})\bigr) \ket{\Omega}\bra{\Omega}: \label{eq:cptp-raman-vac}
\end{equation}
\newtext{a randomly polarized photon is injected with probability $d^{(x)}-d^{(0)}$, and otherwise the mode stays empty. The map takes vacuum to a one-photon state because the classical pump supplying the scattered photon is treated as an external classical drive rather than as part of the quantum system; photon number therefore need not be conserved on $S_1\otimes S_2$ alone.}

\newtext{\paragraph{The coexisting fiber} Both processes occur on the same fiber and are registered by the same detector. Because that detector, rather than the state space, carries their indistinguishability, the two may be assigned to separate factors, and the link channel is their \emph{tensor product} rather than a sum,}
\begin{equation}
    \newtext{\mathcal{E} \;=\; \mathcal{E}_1\otimes\mathcal{E}_2, \qquad \mathcal{E}(\rho_{\text{in}}\otimes\rho_{\text{vac}}) = \mathcal{E}_1(\rho_{\text{in}})\otimes\mathcal{E}_2(\rho_{\text{vac}}).} \label{eq:cptp-map}
\end{equation}
\newtext{A tensor product of CPTP maps is CPTP, so $\mathcal{E}$ is a quantum channel and $\operatorname{Tr}[\mathcal{E}(\rho_{\text{in}}\otimes\rho_{\text{vac}})]=1$. The state therefore stays normalized, as the measurement postulate requires, and the excess photons that coexistence delivers are carried instead by the photon-number observable, as we show next.}

\subsection{Reduction to the Basic Model}

The basic model of \S\ref{sec:model} estimates $s$ and $d^{(x)}$ from the observables $R/T$ and $M/T$ in~\eqref{eq:measure-ratio}. We now recover those observables directly from the channel $\mathcal{E}$. \newtext{As set out above, the detector cannot tell which mode a photon arrived in, so it counts the \emph{two-mode photon-number observable}}
\begin{equation}
    \newtext{\hat{N} \;=\; \Pi\otimes I_{\mathcal{H}} \;+\; I_{\mathcal{H}}\otimes\Pi, \qquad \Pi = \ket{0}\bra{0}+\ket{1}\bra{1} = I,} \label{eq:two-mode-number}
\end{equation}
\newtext{where $\Pi$ projects onto the one-photon subspace of a mode and is orthogonal to the vacuum, and $I_{\mathcal{H}}$ is the identity on the full single-mode space $\mathcal{H}$. The two differ on the vacuum: $I_{\mathcal{H}}$ retains it, whereas $\Pi=I$ annihilates it, which is what keeps an undetected photon from being counted in~\eqref{eq:cptp-R}. Because $\mathcal{E}$ factorizes, so does the expectation, and the expected number of received photons per input photon is}
\begin{align}
    \mathbb{E}\left[\frac{R}{T}\right] & = \operatorname{Tr}_{12}\bigl[\hat{N}\, \mathcal{E}(\rho_{\text{in}}\otimes\rho_{\text{vac}})\bigr] \nonumber                                                 \\
                                       & \newtext{= \operatorname{Tr}_1\bigl[\Pi\mathcal{E}_1(\rho_{\text{in}})\bigr]\operatorname{Tr}_2\bigl[\mathcal{E}_2(\rho_{\text{vac}})\bigr]} \nonumber        \\
                                       & \newtext{\qquad + \operatorname{Tr}_1\bigl[\mathcal{E}_1(\rho_{\text{in}})\bigr]\operatorname{Tr}_2\bigl[\Pi\mathcal{E}_2(\rho_{\text{vac}})\bigr]} \nonumber \\
                                       & \newtext{= \bigl(s+d^{(0)}\bigr)\cdot 1 \;+\; 1\cdot\bigl(d^{(x)}-d^{(0)}\bigr) \;=\; s + d^{(x)},} \label{eq:cptp-R}
\end{align}
\newtext{where the two middle factors are unity because $\mathcal{E}_1$ and $\mathcal{E}_2$ are trace-preserving. Replacing $\Pi$ by the matched-outcome projector $\ket{\psi_{\text{in}}}\bra{\psi_{\text{in}}}$ in both slots (retaining, in either mode, only the photons whose measured polarization matches the prepared state) gives}
\begin{align}
    \mathbb{E}\left[\frac{M}{T}\right] & \newtext{= \bra{\psi_{\text{in}}}\mathcal{E}_1(\rho_{\text{in}})\ket{\psi_{\text{in}}} + \bra{\psi_{\text{in}}}\mathcal{E}_2(\rho_{\text{vac}})\ket{\psi_{\text{in}}}} \nonumber \\
                                       & \newtext{= \Bigl(s+\frac{d^{(0)}}{2}\Bigr) + \frac{d^{(x)}-d^{(0)}}{2} \;=\; s + \frac{d^{(x)}}{2},} \label{eq:cptp-M}
\end{align}
since a depolarized photon ($I/2$) matches with probability $1/2$ and the vacuum never clicks. These are exactly the link observables of~\eqref{eq:measure-ratio}, with the received ratio $r^{(x)} = s + d^{(x)} = \mathbb{E}[R/T]$. \newtext{The received ratio is thus the expectation of $\hat{N}$, whose spectrum is $\{0,1,2\}$, and not a probability. It may therefore exceed the loss-limited baseline $1-q$, consistent with $q+s+d^{(x)}\ge1$ in~\eqref{eq:channel-normalization-coex}, while the state itself stays normalized, $\operatorname{Tr}[\mathcal{E}(\rho_{\text{in}}\otimes\rho_{\text{vac}})]=1$.} The link estimators of \S\ref{sec:link-tomography} therefore estimate the physical parameters of this channel directly.

Beyond reproducing the basic-model observables, the channel form makes the per-link figure of merit explicit. \newtext{Since $\hat{N}$ has spectrum $\{0,1,2\}$, a reception event need not be a single photon. The signal mode delivers one with probability $1-q=s+d^{(0)}$ and, independently, $\mathcal{E}_2$ injects one with probability $d^{(x)}-d^{(0)}$, so both fire together with probability $(1-q)(d^{(x)}-d^{(0)})$.
    The polarization state of the surviving photon is then}
\begin{equation}
    \newtext{\frac{s\,\rho_{\text{in}} + d^{(x)}\,I/2 \;+\; \overbrace{O\bigl((1-q)(d^{(x)}-d^{(0)})\bigr)}^{\text{coincidences }(N\ge 2)}}{s+d^{(x)}},} \label{eq:cptp-postselect}
\end{equation}
\newtext{where the braced remainder is of the order of that discarded coincidence weight (i.e., receive two or more photons), left over because the three single-click weights are used as if they exhausted the outcome space.
    Setting the higher order term aside, $\mathcal{E}$ conditioned on a received photon acts as} a single-qubit depolarizing channel with depolarization probability $p^{(x)} = d^{(x)}/(s+d^{(x)})$ of \S\ref{sec:model}, whose process (entanglement) fidelity with the identity is $F^{(x)} = 1 - \tfrac34 p^{(x)}$. This is the quantity we compare against Bayesian process tomography in \S\ref{sec:experiment} (Fig.~\ref{fig:link-tomography-process-fidelity-comparison}). The depolarizing fixed point $I/2$ also makes the received ratios $r^{(x)}$ compose multiplicatively across concatenated links in \S\ref{sec:network-tomography}--\S\ref{sec:general-topology}.

\subsection{Physical Interpretation of $d^{(x)}$}\label{subsec:physical-interpretation}

The Raman excess $d^{(x)}-d^{(0)}$ that weights the injection channel $\mathcal{E}_2$ in~\eqref{eq:cptp-raman} grows with the classical-signal power $P$ and the fiber length $L$, and depends on the relative propagation direction $x$. Writing $d^{(x)} = d^{(0)} + d_{\text{Raman}}^{(x)}(P, L)$, a first-order accumulation model gives, up to a direction-dependent prefactor,
\begin{align}
    d_{\text{Raman}}^{(1)} & \propto P \int_0^L e^{-\gamma_Q(L-z)} e^{-\gamma_C z}\, dz, \label{eq:raman-co}   \\
    d_{\text{Raman}}^{(2)} & \propto P \int_0^L e^{-\gamma_Q z} e^{-\gamma_C z}\, dz, \label{eq:raman-counter}
\end{align}
for co-propagation ($x=1$) and counter-propagation ($x=2$), with $\gamma_Q,\gamma_C$ the quantum- and classical-band attenuation coefficients~\cite{patel2012coexistence,1253508,Peters_2009}. \newtext{These integrals reproduce the qualitative power- and length-scaling of the injected noise. Attenuation alone, however, does not account for the magnitude of the co/counter asymmetry. That asymmetry is instead dominated by the direction-dependent spontaneous-Raman scattering and capture coefficients. We therefore treat $d^{(1)}$ and $d^{(2)}$ as empirical per-link, per-direction parameters, estimated directly by the tomography of \S\ref{sec:link-tomography} rather than predicted from first principles.}

%% file: sections/link-estimator.tex
\begin{figure}[t]
    \centering
    \begin{subfigure}{0.45\textwidth}
        \centering
        \resizebox{\textwidth}{!}{
            \begin{tikzpicture}[>=stealth]
                \fill[red] (0, -0.15) rectangle (0.3, 0.15);
                \fill[red] (6.7, -0.15) rectangle (7, 0.15);

                \draw[->, blue, line width=3pt] (0.3, 0) -- (6.7, 0);
                \node[blue, font=\scriptsize, above] at (3.5, 0.05) {\textbf{Classical}};

                \draw[->, green!60!black, line width=1.5pt] (6.7, 0.55) -- (0.3, 0.55);
                \node[green!60!black, font=\scriptsize, above] at (3.5, 0.55) {Quantum ($\Leftarrow$)};

                \draw[->, purple!80!black, line width=1.5pt] (0.3, -0.55) -- (6.7, -0.55);
                \node[purple!80!black, font=\scriptsize, below] at (3.5, -0.55) {Quantum ($\Rightarrow$)};
            \end{tikzpicture}
        }
        \caption{Single link tomography}
        \label{subfig:single-link-tomography}
    \end{subfigure}
    \\
    \begin{subfigure}{0.4\textwidth}
        \centering
        \resizebox{\textwidth}{!}{
            \begin{tikzpicture}[>=stealth]
                \coordinate (C) at (0, 0);        
                \coordinate (N1) at (-3, -2.5);   
                \coordinate (N2) at (0, 3);       
                \coordinate (N3) at (3, -2.5);    

                \fill[black] (C) circle (0.18);

                \fill[red, rotate around={39.8:(-2.625,-2.195)}] (-2.80, -2.37) rectangle (-2.45, -2.02);
                \fill[red] (-0.175, 2.43) rectangle (0.175, 2.78);
                \fill[red, rotate around={-39.8:(2.625,-2.195)}] (2.45, -2.37) rectangle (2.80, -2.02);

                \node[below left, font=\footnotesize\bfseries] at (-2.85, -2.50) {1};
                \node[above, font=\footnotesize\bfseries] at (0, 2.80) {2};
                \node[below right, font=\footnotesize\bfseries] at (2.85, -2.50) {3};

                \draw[->, blue, line width=3pt] (-2.45, -2.04) -- (-0.14, -0.12);
                \draw[->, blue, line width=3pt] (0, 0.18) -- (0, 2.43);
                \draw[->, blue, line width=3pt] (0.14, -0.12) -- (2.45, -2.04);


                \draw[->, green!60!black, line width=1.2pt] (-2.58, -1.89) -- (-0.43, -0.10);  
                \draw[->, green!60!black, line width=1.2pt] (-0.20, 0.30) -- (-0.20, 2.43);    
                \draw[->, cyan!60!black, line width=1.2pt] (-0.38, 2.43) -- (-0.38, 0.30);     
                \draw[->, cyan!60!black, line width=1.2pt] (-0.54, 0.04) -- (-2.69, -1.75);    

                \draw[->, orange, line width=1.2pt] (0.20, 2.43) -- (0.20, 0.30);              
                \draw[->, orange, line width=1.2pt] (0.43, -0.10) -- (2.58, -1.89);            
                \draw[->, red!70!black, line width=1.2pt] (2.69, -1.75) -- (0.54, 0.04);       
                \draw[->, red!70!black, line width=1.2pt] (0.38, 0.30) -- (0.38, 2.43);        

                \draw[->, purple!80!black, line width=1.2pt] (2.21, -2.33) -- (0.06, -0.54);   
                \draw[->, purple!80!black, line width=1.2pt] (-0.06, -0.54) -- (-2.21, -2.33); 
                \draw[->, magenta!70!black, line width=1.2pt] (-2.32, -2.19) -- (-0.17, -0.40);
                \draw[->, magenta!70!black, line width=1.2pt] (0.17, -0.40) -- (2.32, -2.19);  

            \end{tikzpicture}
        }
        \caption{Star network tomography}
        \label{subfig:star-network-tomography}
    \end{subfigure}
    \caption{Link and network tomography illustrations.
        \textcolor{blue}{Blue} thick arrows represent classical signals, and other thinner arrows represent quantum signals.
        In Figure~\ref{subfig:single-link-tomography}, the quantum and classical signals share a single fiber link via WDM.
        In Figure~\ref{subfig:star-network-tomography}, a star-topology network carries two or more classical signals on different wavelengths (from the bottom left node) into the center node, where wavelength switching routes each classical signal to a separate output fiber.
    }
\end{figure}

\section{Coexisting Quantum Link Tomography}\label{sec:link-tomography}

In this section, we present the quantum link tomography (QLT) method for estimating the parameters of a single coexisting fiber channel (Figure~\ref{subfig:single-link-tomography}).
Specifically, we estimate the success ratio \(s\) and the depolarization ratios \(d^{(0)}, d^{(1)}, d^{(2)}\) corresponding to the three coexistence scenarios (no classical signal, co-propagation, and counter-propagation).

From the channel model in~\eqref{eq:measure-ratio}, the observable ratios satisfy the system
\begin{equation}
    \label{eq:link-tomography-equations}
    \begin{dcases}
        s + d^{(x)}            = \frac{R^{(x)}}{T^{(x)}}, \quad x = 0,1,2, \\
        s + \frac{d^{(x)}}{2}  = \frac{M^{(x)}}{T^{(x)}}, \quad x = 0,1,2.
    \end{dcases}
\end{equation}
Subtracting the two equations within each scenario and rearranging yields closed-form estimators for all parameters:
\begin{align}
    \label{eq:link-tomography-estimator}
    \begin{cases}
        \hat{d}^{(x)}
         & = \dfrac{2(R^{(x)}-M^{(x)})}{T^{(x)}}, \quad x = 0,1,2,
        \\[0.4em]
        \hat{s}
         & =
        \dfrac{1}{3}\sum_{x=0}^2
        \dfrac{2M^{(x)}-R^{(x)}}{T^{(x)}},
    \end{cases}
\end{align}
where each \(\hat{d}^{(x)}\) follows directly from the difference between the received count \(R^{(x)}\) and the matched-outcome count \(M^{(x)}\), and \(\hat{s}\) combines the three per-scenario success estimates to reduce measurement noise.
\newtext{Explicitly, each scenario supplies its own unbiased estimate \(\hat{s}^{(x)}\coloneqq(2M^{(x)}-R^{(x)})/T^{(x)}\) of the same \(s\), and \(\hat{s}=\sum_{x=0}^{2}w_x\hat{s}^{(x)}\) with \(\sum_x w_x=1\); \eqref{eq:link-tomography-estimator} uses the equal weights \(w_x=1/3\). The three scenarios are independent runs, so by~\eqref{eq:link-variances} below the minimum-variance choice would be \(w_x\propto T^{(x)}/[s(1-s)+d^{(x)}]\), which reduces to \(1/3\) exactly when that ratio is constant in \(x\); we keep equal weights because the optimal ones depend on the parameters being estimated.}
The photon-loss ratio then follows from the normalization in~\eqref{eq:channel-normalization}, \(\hat{q} = 1 - \hat{s} - \hat{d}^{(0)}\), completing the closed-form estimate of all link parameters \((\hat{s},\hat{q},\hat{d}^{(0)},\hat{d}^{(1)},\hat{d}^{(2)})\).
From these estimates, the depolarization probabilities are \(\hat{p}^{(x)} = \hat{d}^{(x)} / (\hat{s} + \hat{d}^{(x)})\) for \(x = 0, 1, 2\).

\begin{remark}
    The estimators in~\eqref{eq:link-tomography-estimator} assume that data from all three scenarios---including the no-classical-signal baseline \((T^{(0)}, R^{(0)}, M^{(0)})\)---are available.
    If only the two coexistence scenarios are measured, the same form applies with
    \(\hat{d}^{(x)} = 2(R^{(x)} - M^{(x)})/T^{(x)}\) for \(x = 1, 2\) and
    \(\hat{s} = \tfrac{1}{2}\sum_{x=1}^2 (2M^{(x)} - R^{(x)})/T^{(x)}\).
    In this two-scenario case the loss ratio \(\hat{q}\) is not identifiable, as it requires the \(x=0\) baseline \(\hat{d}^{(0)}\).
\end{remark}

\subsection{Statistical Properties}\label{subsec:link-statistics}
The estimators in~\eqref{eq:link-tomography-estimator} are not merely algebraic inversions: they are unbiased method-of-moments estimators with closed-form variances. Under the binomial statistics implied by \S\ref{sec:model}---for each scenario \(x\), the intact (signal) photons and the depolarized photons (intrinsic plus Raman-injected) form two independent binomial populations over the \(T\) inputs, \(R_s\sim\mathrm{Binom}(T,s)\) and \(R_d\sim\mathrm{Binom}(T,d^{(x)})\), with \(R=R_s+R_d\) and \(M=R_s+\mathrm{Binom}(R_d,\tfrac12)\) (so \(R\) may exceed \(T\) under Raman injection)---moment matching gives \(\mathbb{E}[\hat{d}^{(x)}]=d^{(x)}\) and \(\mathbb{E}[\hat{s}]=s\) for any fixed \(T\), with
\begin{equation}\label{eq:link-variances}
    \operatorname{Var}(\hat{d}^{(x)}) = \frac{d^{(x)}(2-d^{(x)})}{T^{(x)}},\qquad
    \operatorname{Var}(\hat{s}^{(x)}) = \frac{s(1-s)+d^{(x)}}{T^{(x)}},
\end{equation}
where \(\hat{s}^{(x)}\) is the single-scenario success estimate defined above; averaging the three independent equal-\(T\) scenarios yields \(\operatorname{Var}(\hat{s})=[3s(1-s)+\sum_{x}d^{(x)}]/(9T)\). All estimators are therefore \(\sqrt{T}\)-consistent, and the depolarization probability \(\hat{p}^{(x)}=\hat{d}^{(x)}/(\hat{s}+\hat{d}^{(x)})\) inherits an \(O(1/\sqrt{T})\) standard error by the delta method. These closed forms match Monte-Carlo simulation and provide the \(\pm\)standard-error bands against which the QLT-versus-BPT agreement of \S\ref{sec:experiment} should be read.



%% file: sections/network-estimator.tex
\section{Coexisting Quantum Network Tomography}\label{sec:network-tomography}

We now turn to the network tomography setting~\cite{he2021network}, in which a quantum network has multiple nodes and links and we aim to estimate the parameters of each link from \emph{end-to-end} probes.
In this section, we focus on the network tomography method for a star network; an extension to general multi-node, multi-link topologies is presented in \S\ref{sec:general-topology} below and validated by simulation in \S\ref{subsec:general-topology-experiment}.


We take the star network in Figure~\ref{subfig:star-network-tomography} as a running example.
The network has three links connecting a central hub node to three end nodes; each link \(k\) has its own parameters \(s_k\) and \(d_k^{(x)}\) for \(k \in \{1,2,3\}\) and \(x \in \{0,1,2\}\).
Because every network probe is sent while the classical traffic is present, each hop is always under coexistence, so only \(x\in\{1,2\}\) (co- or counter-propagation) appears in the network estimators below; the no-classical-signal case \(x=0\) enters only in single-link tomography (\S\ref{sec:link-tomography}).
We assume a steady classical signal flow (blue thick arrows) from the bottom-left end node to the central node, where it bifurcates toward the other two end nodes.
To estimate per-link parameters, we perform end-to-end \emph{quantum} measurements between each ordered pair of end nodes (illustrated by the surrounding thinner arrows in Figure~\ref{subfig:star-network-tomography}) and then solve a system of equations relating these measurements to the link parameters.
We describe the method formally below.

Consider a probe sent from end node \(k\), through a coexisting fiber, to the central hub node, which then forwards the received quantum signal through another coexisting fiber to a different end node \(\ell\) (with \(k, \ell \in \{1,2,3\}\) and \(k \neq \ell\)).
Let \(T_{k,\ell}\) denote the total number of input photons, \(R_{k,\ell}\) the number received at node \(\ell\), and \(M_{k,\ell}\) the number that yield matched measurement outcomes.
Since the quantum signal traverses two hops---first link \(k\) (from node \(k\) to the hub) and then link \(\ell\) (from the hub to node \(\ell\))---the coexistence scenario (co- or counter-propagation) on each link depends on the direction of its classical signal relative to the quantum signal.
To make this dependence explicit, we define an indicator \(x(e,i) \in \{1,2\}\), where \(e\) is the link index and \(i\) is the quantum-signal direction: \(i=1\) for end-to-hub and \(i=2\) for hub-to-end; \(x(e,i)=1\) denotes co-propagation and \(x(e,i)=2\) denotes counter-propagation.
For example, \(x(k,1)=1\) means that on link \(k\) the end-to-hub quantum signal is co-propagating with the classical signal, while \(x(k,2)=2\) means that on the same link the hub-to-end quantum signal is counter-propagating.
With this notation, the first hop uses \(x(k,1)\) and the second hop uses \(x(\ell,2)\).

Following the procedure in Figure~\ref{fig:channel-params}, and writing \(r_k^{(x)} \coloneqq s_k + d_k^{(x)}\) for the received ratio of link~\(k\) under scenario~\(x\) (the per-link instance of~\eqref{eq:measure-ratio}), the parameters of the two links involved (those incident to nodes \(k\) and \(\ell\)) satisfy, for \(k, \ell \in \{1,2,3\}\) and \(k \neq \ell\),
\begin{empheq}[left=\empheqlbrace]{align}
    r_k^{(x(k,1))}\, r_\ell^{(x(\ell,2))}
    & \!=\! \frac{R_{k,\ell}}{T_{k,\ell}}
    \label{eq:group2}
    \\
    s_k\!\! \left(\!\! s_\ell {+} \frac{d_\ell^{(x(\ell,2))}}{2} \!\!\right)\! +\! d_k^{(x(k,1))}\frac{\! s_\ell {+} d_\ell^{(x(\ell,2))}}{2}\!
    & \!=\! \frac{M_{k,\ell}}{T_{k,\ell}}
    \label{eq:network-tomography-measurement}
\end{empheq}
The first equation, \eqref{eq:group2}, accounts for all photons reaching node \(\ell\) (those successfully transmitted plus those injected by Raman noise on either link): it is the product of the two per-link received ratios, the two-link instance of a multiplicative law that we generalize to arbitrary paths in Proposition~\ref{prop:path-observables}.
The second equation accounts for the photons that yield matched measurement outcomes:
the term \(s_k(s_\ell + d_\ell^{(x(\ell,2))}/2)\) corresponds to photons successfully transmitted on link \(k\) and then either successfully transmitted on link \(\ell\) (\(s_k s_\ell\), all matched) or depolarized on link \(\ell\) (\(s_k d_\ell^{(x(\ell,2))}/2\), only half matched);
the term \(d_k^{(x(k,1))}(s_\ell + d_\ell^{(x(\ell,2))})/2\) corresponds to depolarized photons emerging from link \(k\) that then traverse link \(\ell\)---all of which are depolarized, so only half yield matched outcomes.
Rearranging, we obtain, for \(k, \ell \in \{1,2,3\}\) and \(k \neq \ell\),
\begin{equation}
    s_k s_\ell  =  \frac{2M_{k,\ell} - R_{k,\ell}}{T_{k,\ell}}. \label{eq:group1}
\end{equation}
The six instances of~\eqref{eq:group1} (one per ordered pair \(k\neq \ell\)) involve only the success ratios \(\{s_k\}\); we can therefore solve for them independently as
\begin{align}
    \label{eq:parameter-s-estimation}
    \hat s_k  \coloneqq \sqrt{\frac{ Q_{k,\ell}  Q_{k,m}}{ Q_{\ell,m}}},
\end{align}
for distinct \(k, \ell, m \in \{1,2,3\}\),
where \(Q_{k,\ell} \coloneqq \frac{(2  M_{k,\ell} -  R_{k,\ell}) + (2 M_{\ell,k} -  R_{\ell,k})}{2  T_{k,\ell}}\) symmetrizes over the two probe directions between nodes \(k\) and \(\ell\) (assuming \(T_{k,\ell} = T_{\ell, k}\) for simplicity).

\newtext{The quantity \(2M_{k,\ell}-R_{k,\ell}\) isolates the photons that survive \emph{polarization-intact} through both hops. By~\eqref{eq:group1}, the intact population, \(s_k s_\ell\), composes multiplicatively---exactly as per-link pass-probabilities do in classical multicast loss tomography~\cite{caceres1999multicast}.} The success-ratio estimator~\eqref{eq:parameter-s-estimation} is therefore the quantum analogue of the classical tree loss estimator, with \(s_e\) playing the role of a link pass-probability, which lets us inherit its identifiability and sample-complexity results. What is \emph{new} to the coexistence setting is the direction-dependent received ratio \(r_e^{(x)}=s_e+d_e^{(x)}\) and the resulting depolarization-gauge degeneracy in~\eqref{eq:group2}, which has no classical loss-tomography counterpart and motivates the single-link classical-flip strategy below.

\newtext{Physically, that degeneracy is the following. Each link contributes two unknowns to~\eqref{eq:group2}---the ratio a probe sees travelling end-to-hub, \(r_e^{(x(e,1))}\), and the one it sees travelling hub-to-end, \(r_e^{(x(e,2))}\)---and every probe traverses exactly one link in each direction. Scaling all three end-to-hub ratios by \(\lambda>0\) and all three hub-to-end ratios by \(1/\lambda\) therefore leaves all six products unchanged, so static probing determines the products \(r_e^{(1)}r_e^{(2)}\) but not the ratios \(r_e^{(1)}/r_e^{(2)}\) that separate co- from counter-propagation. Reversing the classical signal on one link is the \(\mathbb{Z}_2\) operation on \(x(e,i)\) that breaks this symmetry; Lemma~\ref{lem:star-identifiability} makes it precise.}


Although~\eqref{eq:group2} also produces six equations (one per ordered pair), only five are independent, so this group alone does not determine all six unknowns \(\{d_k^{(x)} : k\in\{1,2,3\},\, x\in\{1,2\}\}\). We make this precise, together with its resolution, after stating the assumption that the flip relies on.

\begin{assumption}[Flip invariance]\label{ass:flip-invariance}
Reversing the classical-signal direction on a link changes which propagation mode (co- vs.\ counter-) a given quantum probe encounters, but leaves the two per-link depolarization ratios \(d_e^{(1)},d_e^{(2)}\) unchanged in value. Equivalently, the flip interchanges \(x(e,1)\leftrightarrow x(e,2)\) without introducing new unknowns, so the reversed-direction batch reuses the same six unknowns \(\{d_k^{(x)}\}\).
\end{assumption}

\begin{lemma}[\newtext{Gauge degeneracy and gauge fixing}]\label{lem:star-identifiability}
\newtext{Taking logarithms of the six instances of~\eqref{eq:group2} gives a linear system \(A\mathbf{u}=\mathbf{b}\) in the six log-received-ratios \(u_{e,x}=\log r_e^{(x)}\), with \(A\) the \(6\times6\) probe-by-unknown incidence matrix and \(b_{k,\ell}=\log(R_{k,\ell}/T_{k,\ell})\). Then \(\operatorname{rank}A=5\) and \(\ker A=\operatorname{span}\{v\}\), where \(v\) assigns \(+1\) to each link's end-to-hub received ratio and \(-1\) to its hub-to-end one. Consequently \(\mathbf{u}\)---and hence \(d_e^{(x)}=r_e^{(x)}-s_e\), with \(\{s_e\}\) already fixed by~\eqref{eq:parameter-s-estimation}---is determined only up to the gauge orbit \(\mathbf{u}+tv\). Under Assumption~\ref{ass:flip-invariance}, reversing the classical-signal direction on a single link supplies one further probe equation, i.e.\ a row \(F\) with \(Fv\neq0\); appending it \emph{fixes the gauge}, giving a stacked system of full column rank \(6\) with a unique solution. Flipping any single link suffices. Note that \(F\) is a \emph{measured} row, not an arbitrary constraint, so it selects the physical point of the orbit rather than a conventional one.}
\end{lemma}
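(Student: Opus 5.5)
I will make the star-network setup concrete and reduce everything to linear algebra on a $6\times6$ incidence matrix.

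\textbf{Setting up the system.} For each link $e\in\{1,2,3\}$ I introduce two unknowns: $a_e\coloneqq u_{e,x(e,1)}$, the log-received-ratio seen by an end-to-hub probe, and $c_e\coloneqq u_{e,x(e,2)}$, the one seen by a hub-to-end probe. Since $x(e,1)\neq x(e,2)$ for a fixed classical direction on link $e$, the pair $(a_e,c_e)$ is just a relabelling of $(u_{e,1},u_{e,2})$. Taking logarithms of~\eqref{eq:group2} then gives, for each ordered pair $k\neq\ell$, the equation $a_k+c_\ell=b_{k,\ell}$.

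So the row of $A$ for probe $(k,\ell)$ has a $1$ in column $a_k$ and a $1$ in column $c_\ell$. This is exactly the incidence matrix of the bipartite graph on vertex sets $\{a_1,a_2,a_3\}$ and $\{c_1,c_2,c_3\}$ with edges $a_kc_\ell$ for $k\ne\ell$. That graph is $K_{3,3}$ minus a perfect matching, which is a $6$-cycle $a_1c_2a_3c_1a_2c_3a_1$ and hence connected and bipartite.

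\textbf{Computing the kernel and rank.} I will use the standard fact that the unsigned incidence matrix of a connected bipartite graph with $n$ vertices has rank $n-1$. Its kernel is spanned by the vector that is $+1$ on one side of the bipartition and $-1$ on the other.

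To keep the argument self-contained, I will verify this directly. Any $w\in\ker A$ satisfies $w_{a_k}=-w_{c_\ell}$ for every edge $a_kc_\ell$. Walking around the $6$-cycle propagates a single value $t$, with alternating sign, to every vertex. The cycle has even length, so this assignment is consistent and $\ker A=\operatorname{span}\{v\}$, where $v$ equals $+1$ on every $a_e$ and $-1$ on every $c_e$. This $v$ is exactly the end-to-hub$/$hub-to-end gauge vector of the statement, and rank--nullity gives $\operatorname{rank}A=5$.

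The gauge-orbit claim follows immediately: the solution set of $A\mathbf u=\mathbf b$ is $\mathbf u_0+tv$. Since $s_e$ is already pinned down by~\eqref{eq:parameter-s-estimation}, the map $u\mapsto d_e^{(x)}=e^{u_{e,x}}-s_e$ is injective, so $d$ inherits exactly the same one-parameter ambiguity.

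\textbf{Gauge fixing by a flip.} Now reverse the classical signal on a single link $e^\ast$ and send one probe through it, say from end node $e^\ast$ to some $\ell\ne e^\ast$. By Assumption~\ref{ass:flip-invariance}, no new unknowns appear; the flip only swaps $x(e^\ast,1)\leftrightarrow x(e^\ast,2)$. The new measured equation is therefore $c_{e^\ast}+c_\ell=b'_{e^\ast,\ell}$, and its row $F$ has entries $1$ at $c_{e^\ast}$ and $1$ at $c_\ell$. Evaluating, $Fv=-1-1=-2\neq0$.

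By symmetry, a probe from $k$ into the flipped link gives $F=e_{a_k}+e_{a_{e^\ast}}$ with $Fv=2$. Either way the appended row is not orthogonal to $v$. Hence $\ker\begin{pmatrix}A\\F\end{pmatrix}=\ker A\cap\ker F=\{0\}$, the stacked system has full column rank $6$, and it has a unique solution.

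Because $e^\ast$ and the probe partner were arbitrary, flipping any single link suffices. Uniqueness, combined with the fact that the true parameter vector satisfies all the measured equations, shows that the solution is the physical point of the orbit.

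\textbf{Main obstacle.} The delicate step is not the rank computation, which is routine once the $6$-cycle structure is visible. It is justifying that after the flip both hops traversing $e^\ast$ reference the \emph{same} two unknowns, so that $F$ really is a row in the original six columns. This is precisely what Assumption~\ref{ass:flip-invariance} provides, and I would state explicitly where it is invoked.

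I would also note a degenerate case: a flipped probe of the form $a+c$ reproduces an unflipped-type row and gives $Fv=0$, so it adds nothing. The argument therefore requires choosing a probe whose two hops land on the same side of the bipartition. I expect any probe touching the flipped link to do so, as computed above.
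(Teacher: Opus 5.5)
Your proof is correct and follows the same route as the paper: exhibit the alternating null vector $v$ of the probe incidence matrix, then append a row from the flipped-link batch that satisfies $Fv\neq 0$. Your identification of $A$ with the incidence matrix of the $6$-cycle ($K_{3,3}$ minus a perfect matching) also proves that $\ker A$ is exactly one-dimensional; the paper only checks $Av=0$ and asserts rank $5$, then exhibits the unique solution by explicit back-substitution.
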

\begin{proof}
In the ordering \((r_1^{(1)},r_2^{(1)},r_3^{(1)},r_2^{(2)},r_1^{(2)},r_3^{(2)})\), the incidence matrix of the six ordered pairs in~\eqref{eq:combined-network-tomography-equation-original} has rank \(5\) with null space spanned by \(v=(+1,-1,-1,+1,-1,+1)\) (each row sums two entries of \(v\) with opposite signs, so \(Av=0\)). Stacking any single flipped-link batch adds a row not orthogonal to \(v\), removing the null direction and giving rank \(6\); the unique solution then follows by back-substitution, e.g.\ \(r_1^{(1)}=\sqrt{(R_{1,2}/T_{1,2})(R_{1,3}'/T_{1,3})/(R_{3,2}/T_{3,2})}\) with \(R_{1,3}'\) the flipped-link measurement.
\end{proof}

To \newtext{fix the gauge} we therefore flip the direction of the classical signal on one link---say link~\(3\)---so that the mapping on that link is swapped between the two probe directions (i.e., \(x(3,1)\) and \(x(3,2)\) are interchanged).
This yields a second batch of measurements, from which we obtain a complementary set of equations~\eqref{eq:combined-network-tomography-equation-alternative}.
Putting the original and alternative groups together gives
\begin{align}
    \label{eq:combined-network-tomography-equation-original}
    \begin{cases}
        r_1^{(1)}\, r_2^{(1)}
         & = \frac{R_{1,2}}{T_{1,2}}
        \\
        r_1^{(1)}\, r_3^{(1)}
         & = \frac{R_{1,3}}{T_{1,3}}
        \\
        r_2^{(2)}\, r_1^{(2)}
         & = \frac{R_{2,1}}{T_{2,1}}
        \\
        r_2^{(2)}\, r_3^{(1)}
         & = \frac{R_{2,3}}{T_{2,3}}
        \\
        r_3^{(2)}\, r_1^{(2)}
         & = \frac{R_{3,1}}{T_{3,1}}
        \\
        r_3^{(2)}\, r_2^{(1)}
         & = \frac{R_{3,2}}{T_{3,2}}
    \end{cases},
    \\
    \label{eq:combined-network-tomography-equation-alternative}
    \begin{cases}
        r_1^{(1)}\, r_2^{(1)}
         & = \frac{R_{1,2}}{T_{1,2}}
        \\
        r_1^{(1)}\, \textcolor{red}{r_3^{(2)}}
         & = \frac{R_{1,3}}{T_{1,3}}
        \\
        r_2^{(2)}\, r_1^{(2)}
         & = \frac{R_{2,1}}{T_{2,1}}
        \\
        r_2^{(2)}\, \textcolor{red}{r_3^{(2)}}
         & = \frac{R_{2,3}}{T_{2,3}}
        \\
        \textcolor{red}{r_3^{(1)}}\, r_1^{(2)}
         & = \frac{R_{3,1}}{T_{3,1}}
        \\
        \textcolor{red}{r_3^{(1)}}\, r_2^{(1)}
         & = \frac{R_{3,2}}{T_{3,2}}
    \end{cases}.
\end{align}

Combining~\eqref{eq:parameter-s-estimation},~\eqref{eq:combined-network-tomography-equation-original}, and~\eqref{eq:combined-network-tomography-equation-alternative} yields estimators for all per-link parameters of the star network.

We note that flipping the classical-signal direction on a single link is the asymmetry that breaks this degeneracy. Crucially, this flip is power-preserving under mild and commonly satisfied conditions: the Raman-induced depolarization that a probe accumulates scales with the \emph{time-averaged} classical power rather than with the data payload, since the injected noise photons integrate over a detection window vastly longer than a classical symbol period~\cite{Peters_2009,patel2012coexistence}. Whenever the classical launch power is held at a fixed target---for instance a strong carrier on which the modulation is a small perturbation, a constant-envelope (phase- or polarization-) modulation format, or an amplifier output clamped by automatic power control---reversing the propagation direction on a link leaves $d^{(1)}$ and $d^{(2)}$ unchanged in magnitude and merely interchanges which one a given probe sees. The single-link flip therefore reuses exactly the same per-link unknowns. Such a reversal is practical in several deployments---bidirectional links with matched transceivers, a dedicated fixed-power pilot tone, or software-defined/ROADM networks where the control plane reroutes an individual channel per link---and in a star it is a \emph{local} operation between the hub and one end node, performed during a scheduled maintenance window rather than a network-wide reconfiguration.

Assumption~\ref{ass:flip-invariance} can moreover be relaxed. Suppose the reversed batch perturbs the flipped link's depolarization to \(d_3^{(x)}\to d_3^{(x)}(1+\varepsilon_x)\) through a launch-power or transceiver mismatch. Because the success-ratio estimator~\eqref{eq:parameter-s-estimation} uses only \(2M-R = T\prod_k s_k\), it is independent of \(d^{(x)}\) and hence unbiased to all orders in \(\varepsilon\); the mismatch enters only the depolarization recovery, where the flipped entries of~\eqref{eq:combined-network-tomography-equation-alternative} carry a relative bias of order \(\varepsilon_x\,p_3^{(x)}\) that vanishes as \(\varepsilon_x\to0\). \newtext{Here \(p_3^{(x)}=d_3^{(x)}/r_3^{(x)}\) is link~3's depolarization probability of \S\ref{sec:model}, not a received ratio: the perturbation moves \(r_3^{(x)}\) to \(r_3^{(x)}+\varepsilon_x d_3^{(x)}\), a relative change of \(\varepsilon_x p_3^{(x)}\).}

%% file: sections/general-topology.tex
\subsection{Extension to General Topologies}\label{sec:general-topology}

While the preceding development in this section focused on the star network, the multiplicative nature of the success ratio $s_e$ and the received ratio $r_e^{(x_e)} = s_e + d_e^{(x_e)}$ (defined in Section~\ref{sec:model}) allows for an extension to general topologies.
In a general network, an end-to-end quantum probe traverses a path $P$ consisting of a sequence of coexisting fiber links. Based on the model in Section~\ref{sec:model}, the following proposition expresses the end-to-end observables of such a path in terms of the per-link parameters, generalizing the two-link star-network relations~\eqref{eq:group2}--\eqref{eq:group1}.

\begin{proposition}\label{prop:path-observables}
    Consider an end-to-end quantum probe that traverses a path $P$ of coexisting fiber links, where each link $e\in P$ has success ratio $s_e$ and depolarization ratio $d_e^{(x_e)}$, with $x_e \in \{1, 2\}$ denoting the coexistence scenario (co- or counter-propagation) encountered on link $e$. Let $T_P$, $R_P$, and $M_P$ denote the numbers of input, received, and matched-outcome photons of the probe, respectively. Then the expected end-to-end observables satisfy
    \begin{align}
        \frac{R_P}{T_P}      & = \prod_{e \in P} (s_e + d_e^{(x_e)}) \eqqcolon \prod_{e \in P} r_e^{(x_e)}, \label{eq:general-R} \\
        \frac{2M_P - R_P}{T_P} & = \prod_{e \in P} s_e. \label{eq:general-S}
    \end{align}
\end{proposition}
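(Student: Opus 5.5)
I will prove both identities by induction on the number of links $|P|$, working at the level of expectations. The inductive state I track after the first $j$ links is the vector of expected photon populations per input photon, split into two classes: \emph{intact} photons, whose polarization equals the prepared state, and \emph{depolarized} photons, whose polarization is maximally mixed $I/2$. Write these as $(a_j, b_j)$, with $(a_0,b_0)=(1,0)$. The claims then reduce to
\begin{equation*}
  a_j=\prod_{i\le j}s_{e_i}, \qquad a_j+b_j=\prod_{i\le j}r_{e_i}^{(x_{e_i})}.
\end{equation*}
Using the measurement rule of~\eqref{eq:measure-ratio}, the received count has expectation $a_j+b_j$ per input photon. The matched count has expectation $a_j+b_j/2$, because intact photons always match and depolarized photons match with probability $1/2$. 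Hence $\mathbb{E}[(2M_P-R_P)/T_P]=a_{|P|}$, and the two displayed identities give exactly~\eqref{eq:general-R} and~\eqref{eq:general-S}. The base case $|P|=1$ is just~\eqref{eq:measure-ratio}, or equivalently~\eqref{eq:cptp-R}--\eqref{eq:cptp-M}. The case $|P|=2$ reproduces~\eqref{eq:group2}--\eqref{eq:group1}.

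For the inductive step, I apply link $e=e_{j+1}$ with scenario $x=x_e$ to the population $(a_j,b_j)$, using the per-photon action of the channel $\mathcal{E}_1\otimes\mathcal{E}_2$ from \S\ref{sec:cptp-model}. The key is to apply it by linearity to each arriving photon separately.

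Each arriving intact photon has three possible outcomes:
\begin{itemize}
  \item it stays intact with probability $s_e$;
  \item it is depolarized with probability $d_e^{(0)}$;
  \item it is lost with probability $q_e$.
\end{itemize}

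Each arriving depolarized photon survives with probability $s_e+d_e^{(0)}$ and remains depolarized. This uses the depolarizing fixed point: $s_e\,I/2+d_e^{(0)}I/2$ is again a multiple of $I/2$.

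Independently, the link's Raman channel $\mathcal{E}_2$ injects depolarized photons, and this injection must be normalized per incoming photon. This normalization is the delicate point. I will adopt the model's convention that the Raman excess is accounted as $(d_e^{(x)}-d_e^{(0)})$ per photon entering the link, i.e.\ the per-photon yield interpretation of $r_e^{(x)}$ stated after~\eqref{eq:measure-ratio}. Each arriving photon of either class therefore carries expected yield $r_e^{(x)}=s_e+d_e^{(x)}$. This gives the recursion
\begin{equation*}
  a_{j+1}=s_e\,a_j, \qquad b_{j+1}=d_e^{(x)}a_j+r_e^{(x)}b_j.
\end{equation*}
Summing the two gives $a_{j+1}+b_{j+1}=r_e^{(x)}(a_j+b_j)$, which closes the induction.

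The main obstacle is justifying that the injected contribution scales with the incoming population rather than being an additive constant per probe. If injection were additive, the received ratio would be affine rather than multiplicative along the path. I will handle this by treating the Raman yield as attached to each transmitted photon slot, exactly as in the single-link derivation~\eqref{eq:cptp-R} where $T$ normalizes both terms. I will also note that coincidence terms of order $(1-q)(d^{(x)}-d^{(0)})$, flagged in~\eqref{eq:cptp-postselect}, are neglected consistently with the weak-injection assumption.

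The second identity is robust to this issue. Injected photons are always depolarized and so never contribute to $2M-R$, so~\eqref{eq:general-S} holds regardless of how injection is normalized. Finally, since the statement is about expected observables, I will read "$=$" as equality of expectations. Linearity of expectation over independent per-link events, as in the binomial model of \S\ref{subsec:link-statistics}, is all that is needed.
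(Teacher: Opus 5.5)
Your proof is correct and is essentially the paper's. Your populations $(a_j,b_j)$ and recursion $a_{j+1}=s_e a_j$, $b_{j+1}=d_e^{(x)}a_j+r_e^{(x)}b_j$ are exactly the intact/depolarized recursion~\eqref{eq:ID-recursion}; $2M-R$ isolates the intact branch, and summing the recursion telescopes to $\prod_{e\in P} r_e^{(x_e)}$. Your per-arriving-photon injection is also the convention the paper adopts (Remark~\ref{rem:injection-convention}), but state it as an assumption rather than as a consequence of $\mathcal{E}_1\otimes\mathcal{E}_2$: the single-link identity~\eqref{eq:cptp-R} cannot distinguish it from per-launched-photon injection, which would give the affine law of that remark instead.
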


\begin{proof}
We generalize the two-link composition~\eqref{eq:group2}--\eqref{eq:group1} to an $n$-link path by tracking how the received-photon population splits according to polarization.
Order the path as $P=(e_1,\dots,e_n)$, with the probe entering $e_1$ first and exiting $e_n$ last, and abbreviate $s_j\coloneqq s_{e_j}$, $d_j\coloneqq d_{e_j}^{(x_{e_j})}$, and $r_j\coloneqq s_j+d_j$ for the success, depolarization, and received ratios of the $j$-th link.
\newtext{Split the population surviving the first $j$ links into two disjoint groups. Let $I_j$ be the expected number that are polarization-\emph{intact}, i.e.\ successfully transmitted by every one of those $j$ links, so the encoded state is preserved. Let $D_j$ be the expected number that are \emph{depolarized}, with uniformly random polarization---either signal photons depolarized along the way or Raman photons injected by the classical signal.}
Every input photon starts intact, so $I_0=T_P$ and $D_0=0$.
Because an intact photon always yields the matched outcome whereas a depolarized one does so with probability $\tfrac12$ (cf.\ the single-link rule~\eqref{eq:measure-ratio}), the received and matched counts after $j$ links are
\begin{equation}
    R^{(j)}=I_j+D_j,\quad M^{(j)}=I_j+\tfrac12 D_j
    \;\Longrightarrow\;
    2M^{(j)}-R^{(j)}=I_j .
    \label{eq:twoMminusR}
\end{equation}
That is, $2M-R$ counts exactly the photons that survive \emph{intact} through all $j$ links.

\newtext{Adding an additional link $e_j$ results in population changes analogous to those of the single-link channel of \S\ref{sec:model}.
An intact photon stays intact only when it is successfully transmitted (ratio $s_j$); otherwise it is depolarized (intrinsic fraction $d_j^{(0)}$) or joined by a freshly injected Raman photon (fraction $d_j-d_j^{(0)}$), and the two together contribute $d_j\,I_{j-1}$ depolarized photons.
An already-depolarized photon, by contrast, can never recover its polarization: it survives depolarized with probability $s_j+d_j^{(0)}$, and the Raman photons scattered from the classical signal make up the remainder.
The depolarized stream is therefore scaled by exactly $r_j=s_j+d_j$, a population factor rather than the survival probability of a single photon.}
This gives the linear recursion
\begin{equation}
    I_j=s_j\,I_{j-1},\qquad
    D_j=d_j\,I_{j-1}+r_j\,D_{j-1}.
    \label{eq:ID-recursion}
\end{equation}
For $n=2$ this reproduces the two-link relations exactly: $M_2=I_2+\tfrac12 D_2$ recovers~\eqref{eq:network-tomography-measurement}, while $R_2=I_2+D_2=(s_1+d_1)(s_2+d_2)\,T_P$ recovers~\eqref{eq:group2}.

The intact branch of~\eqref{eq:ID-recursion} is decoupled and telescopes immediately to
$I_n=\big(\prod_{j=1}^n s_j\big)I_0=T_P\prod_{e\in P}s_e$.
Substituting this into the identity~\eqref{eq:twoMminusR} at the end of the path ($j=n$) yields
\begin{equation*}
    \frac{2M_P-R_P}{T_P}=\frac{I_n}{T_P}=\prod_{e\in P}s_e,
\end{equation*}
which is~\eqref{eq:general-S}; the two-link estimator~\eqref{eq:group1} is the special case $n=2$.
Likewise, summing the two parts of~\eqref{eq:ID-recursion} gives $R^{(j)}=I_j+D_j=r_j\big(I_{j-1}+D_{j-1}\big)=r_j R^{(j-1)}$, so the received count is multiplicative as well,
\begin{equation*}
    R_P=R^{(n)}=\Big(\prod_{j=1}^n r_j\Big)T_P=T_P\prod_{e\in P}\big(s_e+d_e^{(x_e)}\big),
\end{equation*}
which is~\eqref{eq:general-R}.
\end{proof}

\newtext{\begin{remark}[Injection convention]\label{rem:injection-convention}
With $\nu_j\coloneqq d_j-d_j^{(0)}$ the Raman excess, the recursion~\eqref{eq:ID-recursion} adds $\nu_j R^{(j-1)}$ at link $j$: injected photons are proportional to the population \emph{arriving} there. This matches the per-link calibration, in which $\nu_j$ is an excess per received photon, and is what the emulation of \S\ref{subsec:star-experiment} generates. Spontaneous Raman is pumped by the classical signal alone, however, so a deployed path would attenuate link $j$'s injection only \emph{downstream}, giving $R_P/T_P=\prod_j t_j+\sum_j\nu_j\prod_{k>j}t_k$ with $t_j\coloneqq s_j+d_j^{(0)}$. The two agree when one link dominates the injection or the per-hop transmittances approach unity. We keep the multiplicative form, under which the estimators below are derived and validated.
\end{remark}}

\paragraph{Relation to prior work and novelty} The recursive peeling/promotion machinery we use below for trees, and the path-decomposition idea for meshes, are the progressive-etching procedure of our prior work~\cite{wang2025quantum}, developed there for bit-/phase-flip channels with SPAM errors; we do not claim that recursion as new. The contribution of this section is what makes that machinery \emph{applicable} to coexistence channels: (i)~Proposition~\ref{prop:path-observables}, which shows that the coexistence observables \(R_P/T_P\) and \((2M_P-R_P)/T_P\) compose \emph{multiplicatively} along a path \newtext{even though Raman injection makes the received ratio \(r_e^{(x_e)}\) a per-photon \emph{yield} rather than a survival probability (\S\ref{sec:cptp-model})}---a property absent from the additive flip-channel observables of~\cite{wang2025quantum}; and (ii)~the \emph{direction-dependent} unknowns \(d_e^{(1)},d_e^{(2)}\), resolved by the single-link classical-flip protocol of \S\ref{sec:network-tomography}, which have no analogue there.

\subsubsection{Peeling Algorithm for Tree Topologies}

\newtext{\paragraph{Notation} The network is an undirected graph in which each link $e=\{u,v\}$ is one coexisting fiber with parameters $s_e$ and $d_e^{(1)},d_e^{(2)}$. Here it is a \emph{tree} $\mathcal{T}=(V,E)$: connected and acyclic, so $|E|=|V|-1$ and any two nodes are joined by a unique path, whose link set we write $P(i,j)$. A \emph{leaf} has degree one, a \emph{leaf link} is incident to a leaf, and \emph{monitors} (set $\mathcal{M}$, initially the leaves) are the nodes where probes are prepared and measured. We use the calligraphic $\mathcal{T}$ for the tree to avoid a collision with the launched-photon count $T$ of \S\ref{sec:model}.}

\newtext{By a \emph{probe ratio} we mean one of the two end-to-end path observables of Proposition~\ref{prop:path-observables}, formed from the counts of probes exchanged between monitors $i$ and $j$ and, as in \S\ref{sec:network-tomography}, averaged over the two probe directions:}
\begin{align}
    \newtext{Q_{i,j}}       & \newtext{\;\coloneqq\; \tfrac12\bigl(Q^{\rightarrow}_{i,j}+Q^{\rightarrow}_{j,i}\bigr) \;\overset{\eqref{eq:general-S}}{=}\; \prod_{e\in P(i,j)} s_e ,} \label{eq:probe-ratio-Q} \\
    \newtext{G_{i,j}}       & \newtext{\;\coloneqq\; \sqrt{\Lambda_{i,j}\,\Lambda_{j,i}} \;\overset{\eqref{eq:general-R}}{=}\; \prod_{e\in P(i,j)} g_e ,} \label{eq:probe-ratio-Lambda}
\end{align}
\newtext{where $Q^{\rightarrow}_{i,j}\coloneqq(2M_{i,j}-R_{i,j})/T_{i,j}$ and $\Lambda_{i,j}\coloneqq R_{i,j}/T_{i,j}$ are the one-directional forms and $g_e\coloneqq\sqrt{r_e^{(1)}r_e^{(2)}}$. We call $Q_{i,j}$ the \emph{success probe ratio} and $G_{i,j}$ the \emph{received probe ratio}; these are the only two path quantities the estimators below consume. Averaging is optional for $Q$, whose factors $s_e$ do not depend on the propagation mode, but necessary for the received side: reversing the probe direction reverses the mode on every link, so $\Lambda_{i,j}\Lambda_{j,i}$ contributes $r_e^{(1)}r_e^{(2)}$ per link and $G_{i,j}$ alone is multiplicative in a direction-independent quantity. Note that $g_e$ is the gauge-invariant combination of Lemma~\ref{lem:star-identifiability}.}

Building on the path observables of Proposition~\ref{prop:path-observables}, the peeling estimator reuses the leaf-link-then-promote schedule of the progressive-etching procedure~\cite{wang2025quantum}, instantiated on the multiplicative success- and received-ratio observables of coexisting links.
For tree-structured networks, we can employ a ``peeling'' algorithm to identify all per-link parameters using probes between leaf nodes. A link $e$ incident to a leaf node $v$ can be isolated by considering paths to two other leaf nodes $j$ and $k$ located in different subtrees relative to $v$'s parent. Let $Q_{v,j}$, $Q_{v,k}$, and $Q_{j,k}$ be the end-to-end success ratios for the respective paths. Then, the success ratio of the leaf link $e$ is given by
\begin{equation}
    \hat{s}_e = \sqrt{\frac{Q_{v,j} Q_{v,k}}{Q_{j,k}}},
\end{equation}
which is identical in form to the star-network estimator in~\eqref{eq:parameter-s-estimation}. Once $\hat{s}_e$ is determined, its effect can be ``peeled off'' from other paths containing $e$, allowing for the recursive estimation of internal link parameters. \newtext{The identical construction on the symmetrized received probe ratios returns $\hat{g}_e=\sqrt{G_{v,j}G_{v,k}/G_{j,k}}$. (The unsymmetrized $\Lambda_{i,j}$ cannot be used here: the mode indices on the shared sub-paths of $v\!\to\!j$ and $j\!\to\!k$ are then opposite and fail to cancel.)} This recursive procedure---identify a current leaf link, peel it off, and promote its far endpoint to a monitor so that interior links in turn become leaf links---is summarized in Algorithm~\ref{alg:peeling} and illustrated on a five-link tree in Fig.~\ref{fig:peeling-steps}.
Two preconditions make each step well defined. First, the chosen leaf link must be \emph{branching}: its interior endpoint $u$ must have two further monitors in distinct components of \newtext{$\mathcal{T}\setminus\{u\}$}, so that the paths $v\!\to\! j$ and $v\!\to\! k$ traverse $e$ and diverge at $u$ while $j\!\to\! k$ avoids $e$; this is exactly the condition under which $\sqrt{Q_{v,j}Q_{v,k}/Q_{j,k}}=s_e$. Any tree with at least three leaves contains such a link, and peeling preserves the property---removing $e$ and promoting $u$ keeps the residual graph a tree with the promoted node available as a required monitor---so the loop terminates after $|E|$ peels. (A pure degree-two chain, in which no interior node branches, is the lone exception and falls back to the log-linear least-squares formulation \newtext{described} below.) Second, every $Q$ entering the square root must be strictly positive, which holds once the sample size is large enough that the empirical $Q=(2M-R)/T$ stays positive.

\newtext{\paragraph{Separating the two propagation modes} Symmetrization is deliberately lossy: it discards the direction asymmetry $\Lambda_{i,j}/\Lambda_{j,i}$, which is exactly what distinguishes $r_e^{(1)}$ from $r_e^{(2)}$. Writing $\delta_e\coloneqq\tfrac12\log\bigl(r_e^{(1)}/r_e^{(2)}\bigr)$, so that $\log r_e^{(1,2)}=\log g_e\pm\delta_e$, each probe contributes a \emph{signed} row in the $\delta_e$ whose signs are fixed by the modes it meets; one traffic configuration therefore leaves only the kernel of the resulting matrix $A_1$ undetermined (one dimensional for the three-link star, matching Lemma~\ref{lem:star-identifiability}). A second round reversing the classical signal on a known subset $S\subseteq E$ contributes further rows $A_S$, and the $d_e^{(x)}$ are identifiable if and only if the \emph{stacked} matrix $[A_1;A_S]$ has full column rank $|E|$---which the single-link flip of \S\ref{sec:network-tomography} achieves on the star. Being signed and supported only on $S$, this system admits no leaf-link cancellation, so we solve it by least squares rather than by peeling.}

\subsubsection{General Graphs and Path Selection}

For general mesh topologies, the estimation problem can be transformed into a system of linear equations by taking the logarithm of the path observables in Proposition~\ref{prop:path-observables} (Eqs.~\eqref{eq:general-R} and~\eqref{eq:general-S}). For instance, for the success ratios, we have
\begin{equation}
    \sum_{e \in P} \log s_e = \log \left( \frac{2M_P - R_P}{T_P} \right).
\end{equation}
This is precisely the log-additive linear system of classical loss tomography~\cite{caceres1999multicast}, here applied to the intact-photon success ratios $s_e$ rather than to classical packet pass-probabilities.
Given a sufficient set of independent paths $\{P\}$, this system can be solved using standard linear regression or maximum likelihood techniques. To resolve the direction-dependent depolarization parameters $d_e^{(1)}$ and $d_e^{(2)}$, one can perform multiple measurement rounds with different classical signal configurations, ensuring that each link is probed in both co- and counter-propagation modes.

The condition for this system to be solvable is exactly the classical route-identifiability condition~\cite{he2021network}.
\begin{proposition}[Identifiability]\label{prop:identifiability}
Fix a probe set $\mathcal{P}$ and let $A\in\{0,1\}^{|\mathcal{P}|\times|E|}$ be the path-by-link incidence matrix with $A_{P,e}=1$ iff $e\in P$. The per-link log-success-ratios $\{\log s_e\}_{e\in E}$ (resp.\ the log-received-ratios $\{\log r_e^{(x)}\}$) are uniquely identifiable from the expected end-to-end observables~\eqref{eq:general-S} (resp.~\eqref{eq:general-R}) if and only if $A$ has full column rank, $\operatorname{rank}(A)=|E|$.
\end{proposition}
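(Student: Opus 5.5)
The plan is to reduce identifiability to injectivity of a linear map and then invoke elementary linear algebra. By Proposition~\ref{prop:path-observables}, the expected observables for each probe $P\in\mathcal{P}$ are $\prod_{e\in P}s_e$ and $\prod_{e\in P}r_e^{(x_e)}$. All of these are strictly positive, since $s_e>0$ is needed for the logarithm and $r_e^{(x)}\ge s_e>0$. Taking logarithms therefore gives the bijective reparametrization $u_e=\log s_e$ (resp.\ $u_e=\log r_e^{(x)}$). In these coordinates the observable map becomes the linear map $\mathbf{u}\mapsto A\mathbf{u}$, with $\mathbf{b}=A\mathbf{u}$ where $b_P=\log\bigl((2M_P-R_P)/T_P\bigr)$ (resp.\ $\log(R_P/T_P)$). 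Since $\log$ is a bijection from $(0,\infty)$ to $\mathbb{R}$ applied coordinatewise, the per-link parameters are uniquely identifiable exactly when $\mathbf{u}\mapsto A\mathbf{u}$ is injective on the parameter set.

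For sufficiency, I assume $\operatorname{rank}A=|E|$. Then $A^\top A$ is invertible, and $\mathbf{u}=(A^\top A)^{-1}A^\top\mathbf{b}$ is the unique preimage, which is also the least-squares solution the paper uses. Exponentiating recovers $s_e$, or $r_e^{(x)}$, and then $d_e^{(x)}=r_e^{(x)}-s_e$.

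For necessity, I take a nonzero $v\in\ker A$, as in the construction of Lemma~\ref{lem:star-identifiability}. Then $\mathbf{u}$ and $\mathbf{u}+tv$ produce identical $\mathbf{b}$ for every $t$. To turn this into two genuinely admissible parameter points with identical observables, I will choose the true parameters in the interior of the admissible region, for example $0<s_e<1$ with slack in the normalization. Exponentiating gives $s_e e^{tv_e}$, which stays admissible for small $|t|$. The model is therefore not identifiable.

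For the received ratios, the same argument applies with one detail to handle. The unknowns are indexed by $(e,x)$ pairs, or by the direction-specific $r_e^{(x_e)}$ of the configuration under consideration. In that setting $A$ should be read as the incidence matrix over those unknowns, and each probe row marks the mode it meets on each link. I will state this interpretation explicitly so that the statement covers both the symmetric $g_e$ version and the signed/flipped version discussed above.

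The main obstacle is the necessity direction: the kernel perturbation must respect the physical constraints, namely positivity, $s_e\le1$, and $r_e^{(x)}\ge s_e$, so the conclusion requires interior points. The plan is to add the mild genericity qualifier ``for parameters in the interior of the admissible set'' and to note that identifiability is meant in the usual local/generic sense of~\cite{he2021network}.
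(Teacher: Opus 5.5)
Your proposal is correct and follows the paper's own argument: log-linearize the path observables of Proposition~\ref{prop:path-observables} and use the fact that a linear map is injective iff its matrix has full column rank. Your two refinements make explicit what the paper's one-line proof leaves implicit: you perturb along $\ker A$ only from an interior admissible point, and you re-index $A$ over the mode-specific unknowns $(e,x)$ for the received ratios, without which the star (whose link-level incidence matrix has rank $3=|E|$) would contradict Lemma~\ref{lem:star-identifiability}.
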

\begin{proof}
Taking logarithms of~\eqref{eq:general-S} gives the linear system $A\mathbf{x}=\mathbf{b}$ with $x_e=\log s_e$ and $b_P=\log((2M_P-R_P)/T_P)$. A linear map is injective iff its matrix has full column rank, so $\mathbf{x}$ is determined uniquely by $\mathbf{b}$ iff $\operatorname{rank}(A)=|E|$; the same argument applies to $\{\log r_e^{(x)}\}$.
\end{proof}
For trees, the leaf-to-leaf paths yield full column rank and Algorithm~\ref{alg:peeling} is the constructive solver realizing it; for meshes, any probe set with $\operatorname{rank}(A)=|E|$ gives a unique least-squares estimate. Separating $d_e^{(1)}$ from $d_e^{(2)}$ additionally requires each link to appear in $\mathcal{P}$ under both propagation modes, generalizing the single-link flip of \S\ref{sec:network-tomography}.

\newtext{\paragraph{Relation to the star gauge fixing} This multi-round protocol is the general-topology form of the single-link classical flip of \S\ref{sec:network-tomography}, and the rank condition of Proposition~\ref{prop:identifiability} plays the role that Lemma~\ref{lem:star-identifiability} plays for the star. Each probe touches, on every link it traverses, only the one propagation mode it happens to encounter; so when monitors sit at the network edge and no probe isolates a single link, a path can meet different links in opposite modes, which may leave $A$ column-rank deficient---its kernel being the residual gauge. The two-link star is the canonical case. Reconfiguring the classical traffic adds no unknowns (Assumption~\ref{ass:flip-invariance}) but adds rows, so the incidence-rank condition is really a statement about how many distinct traffic configurations must be exercised, not merely how many paths are probed.}

\subsubsection{Sample Size and Accuracy}

As suggested in related network tomography studies, the accuracy of these estimators depends on the number of probes $T_P$. As the network depth increases, the end-to-end success ratio $\prod s_e$ decreases exponentially, requiring larger sample sizes to maintain low relative error. We confirm these properties with Monte-Carlo simulations in the Validation section (\S\ref{subsec:general-topology-experiment}), where the peeling and least-squares estimators recover all per-link parameters on tree and cyclic-mesh networks and the required sample size is shown to grow exponentially with path depth; there we also observe that symmetrized probes (averaging $Q_{k,\ell}$ and $Q_{\ell,k}$) significantly reduce the variance in noisy environments.

\newtext{Since these estimators are $\sqrt{T}$-consistent (\S\ref{subsec:link-statistics}), a variance reduction of factor $\kappa$ is a sample-size reduction of the same factor. Averaging the two probe directions combines two unbiased estimates of equal variance and correlation $c$, giving $\kappa_{\mathrm{sym}}=2/(1+c)\in[1,2]$ for $c\in[0,1]$: at most a factor of two, attained only when the directions are independent, and less in practice since both traverse the same fiber. For calibration, classical loss tomography ties estimator variance directly to probe count~\cite{caceres1999multicast}, and Lawrence et al.~\cite{lawrence2007statistical} report, in a delay-tomography study, that least-squares moment estimators need roughly $1.3\times$--$9\times$ the probes of the MLE for equal variance ($1.1\times$--$1.3\times$ under generalized least squares). The larger classical savings thus come from estimator choice and probe \emph{design}~\cite{he2015fisher,kveton2022optimal,wang2025optimal} rather than from directional symmetrization, for which we know of no classical analogue.}

\begin{algorithm}[!t]
    \caption{Peeling estimator for tree topologies}
    \label{alg:peeling}
    \begin{algorithmic}[1]
        \Input \newtext{Tree $\mathcal{T}=(V,E)$; probe ratios $Q_{i,j}$ and $G_{i,j}$ of~\eqref{eq:probe-ratio-Q}--\eqref{eq:probe-ratio-Lambda} between monitor (leaf) nodes}
        \Output \newtext{Per-link estimates $\{\hat{s}_e,\hat{g}_e\}_{e\in E}$; splitting $\hat{g}_e$ into $\hat{r}_e^{(1)},\hat{r}_e^{(2)}$ needs a second traffic configuration}
        \Initial \newtext{$\mathcal{M}\gets\{\text{leaf nodes of }\mathcal{T}\}$};\quad $\mathcal{D}\gets\emptyset$
        \While{$\mathcal{D}\neq E$}
            \State pick a link $e=\{v,u\}$ with $v\in\mathcal{M}$, $u\notin\mathcal{M}$, and $e\notin\mathcal{D}$ \Comment{a current leaf link}
            \State choose monitors $j,k\in\mathcal{M}\setminus\{v\}$ whose paths to $v$ first meet at $u$
            \State \newtext{$\hat{s}_e \gets \bigl(Q_{v,j}\,Q_{v,k}\,/\,Q_{j,k}\bigr)^{1/2}$} \Comment{leaf-link estimator, Eq.~\eqref{eq:parameter-s-estimation}}
            \State \newtext{$\hat{g}_e \gets \bigl(G_{v,j}\,G_{v,k}\,/\,G_{j,k}\bigr)^{1/2}$} \Comment{received-ratio analogue}
            \State \textbf{peel:} \newtext{\textbf{for all} $m\in\mathcal{M}\setminus\{v\}$ \textbf{do} $\;Q_{u,m}\gets Q_{v,m}/\hat{s}_e$, $\;G_{u,m}\gets G_{v,m}/\hat{g}_e$} \Comment{$u$ inherits the residual probe ratios}
            \State $\mathcal{D}\gets\mathcal{D}\cup\{e\}$;\quad $\mathcal{M}\gets\mathcal{M}\cup\{u\}$ \Comment{promote $u$ to a monitor}
        \EndWhile
        \State \Return \newtext{$\{\hat{s}_e,\hat{g}_e\}_{e\in E}$}
    \end{algorithmic}%
    
\end{algorithm}

\begin{figure*}[!t]
    \centering
    \tikzset{
        mon/.style   ={draw,fill=blue!70,minimum size=3.6mm,inner sep=0pt},
        inode/.style ={draw,circle,fill=white,minimum size=3.6mm,inner sep=0pt},
        done/.style  ={green!55!black,line width=1.3pt},
        next/.style  ={orange,line width=1.3pt},
        last/.style  ={red,line width=1.5pt},
        todo/.style  ={black!45,line width=0.9pt},
        etch/.style  ={->,>=stealth,violet,line width=1pt},
    }
    \begin{subfigure}{0.32\textwidth}
        \centering
        \resizebox{\textwidth}{!}{%
        \begin{tikzpicture}
            \node[mon,label=left:$A$]    (A) at (0,0.85){};
            \node[mon,label=left:$B$]    (B) at (0,-0.85){};
            \node[inode,label=above:$C$] (C) at (1.05,0){};
            \node[inode,label=above:$D$] (D) at (2.55,0){};
            \node[mon,label=right:$E$]   (E) at (3.6,0.85){};
            \node[mon,label=right:$F$]   (F) at (3.6,-0.85){};
            \draw[done] (A)--(C); \draw[done] (B)--(C);
            \draw[todo] (C)--(D);
            \draw[done] (E)--(D); \draw[done] (F)--(D);
            \draw[etch] (0.50,0.62) -- (0.82,0.18);
            \draw[etch] (0.50,-0.62) -- (0.82,-0.18);
            \draw[etch] (3.10,0.62) -- (2.78,0.18);
            \draw[etch] (3.10,-0.62) -- (2.78,-0.18);
        \end{tikzpicture}}
        \caption{Identify leaf links}
    \end{subfigure}\hfill
    \begin{subfigure}{0.32\textwidth}
        \centering
        \resizebox{\textwidth}{!}{%
        \begin{tikzpicture}
            \node[mon,label=left:$A$]    (A) at (0,0.85){};
            \node[mon,label=left:$B$]    (B) at (0,-0.85){};
            \node[mon,label=above:$C$]   (C) at (1.05,0){};
            \node[mon,label=above:$D$]   (D) at (2.55,0){};
            \node[mon,label=right:$E$]   (E) at (3.6,0.85){};
            \node[mon,label=right:$F$]   (F) at (3.6,-0.85){};
            \draw[done] (A)--(C); \draw[done] (B)--(C);
            \draw[next] (C)--(D);
            \draw[done] (E)--(D); \draw[done] (F)--(D);
            \draw[etch] (1.32,0.36) -- (1.65,0.08);
            \draw[etch] (2.28,0.36) -- (1.95,0.08);
        \end{tikzpicture}}
        \caption{Peel and push monitors to $C,D$}
    \end{subfigure}\hfill
    \begin{subfigure}{0.32\textwidth}
        \centering
        \resizebox{\textwidth}{!}{%
        \begin{tikzpicture}
            \node[mon,label=left:$A$]    (A) at (0,0.85){};
            \node[mon,label=left:$B$]    (B) at (0,-0.85){};
            \node[mon,label=above:$C$]   (C) at (1.05,0){};
            \node[mon,label=above:$D$]   (D) at (2.55,0){};
            \node[mon,label=right:$E$]   (E) at (3.6,0.85){};
            \node[mon,label=right:$F$]   (F) at (3.6,-0.85){};
            \draw[done] (A)--(C); \draw[done] (B)--(C);
            \draw[last] (C)--(D);
            \draw[done] (E)--(D); \draw[done] (F)--(D);
        \end{tikzpicture}}
        \caption{Identify interior link $C\!-\!D$}
    \end{subfigure}
    \caption{Progressive peeling on a five-link tree (the same network
    used for the Monte-Carlo simulations in \S\ref{subsec:general-topology-experiment}),
    mirroring the etching illustration in
    \cite[Fig.~6]{wang2025quantum}. Blue squares are monitor (leaf) nodes and white
    circles are internal nodes; violet arrows indicate the peeling order from the
    periphery inward. (a) The four leaf links $A\!-\!C$, $B\!-\!C$, $D\!-\!E$, $D\!-\!F$
    (green) are identified from leaf-to-leaf probes via
    $\hat{s}_e=\sqrt{Q_{v,j}Q_{v,k}/Q_{j,k}}$, with monitors at the leaves $A,B,E,F$.
    (b) Each identified link is ``peeled'' and its far endpoint promoted to a monitor,
    pushing the monitors inward to $C$ and $D$ and exposing the interior link $C\!-\!D$
    (orange) as a link between two monitors. (c) The interior link $C\!-\!D$ is then
    identified by \newtext{the same leaf-link estimator $\hat{s}_e=\sqrt{Q_{v,j}Q_{v,k}/Q_{j,k}}$
    of~\eqref{eq:parameter-s-estimation}, now applied with $C$ and $D$ as monitors} (red).}
    \label{fig:peeling-steps}
\end{figure*}


%% file: sections/experiment.tex
\section{Validation}\label{sec:experiment}

\begin{figure*}[!t]
    \centering
    \begin{subfigure}{1.2\columnwidth}
        \centering
        \includegraphics[width=\columnwidth]{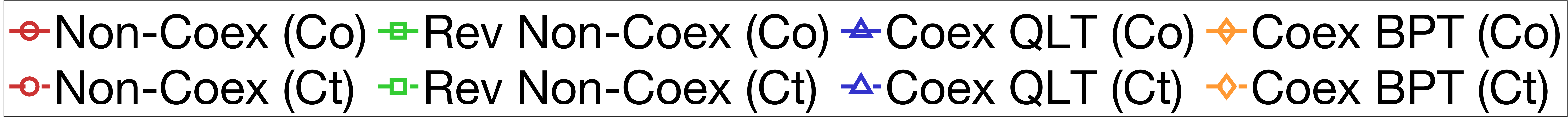}
    \end{subfigure}
    \\
    \begin{subfigure}{0.65\columnwidth}
        \centering
        \includegraphics[width=\columnwidth]{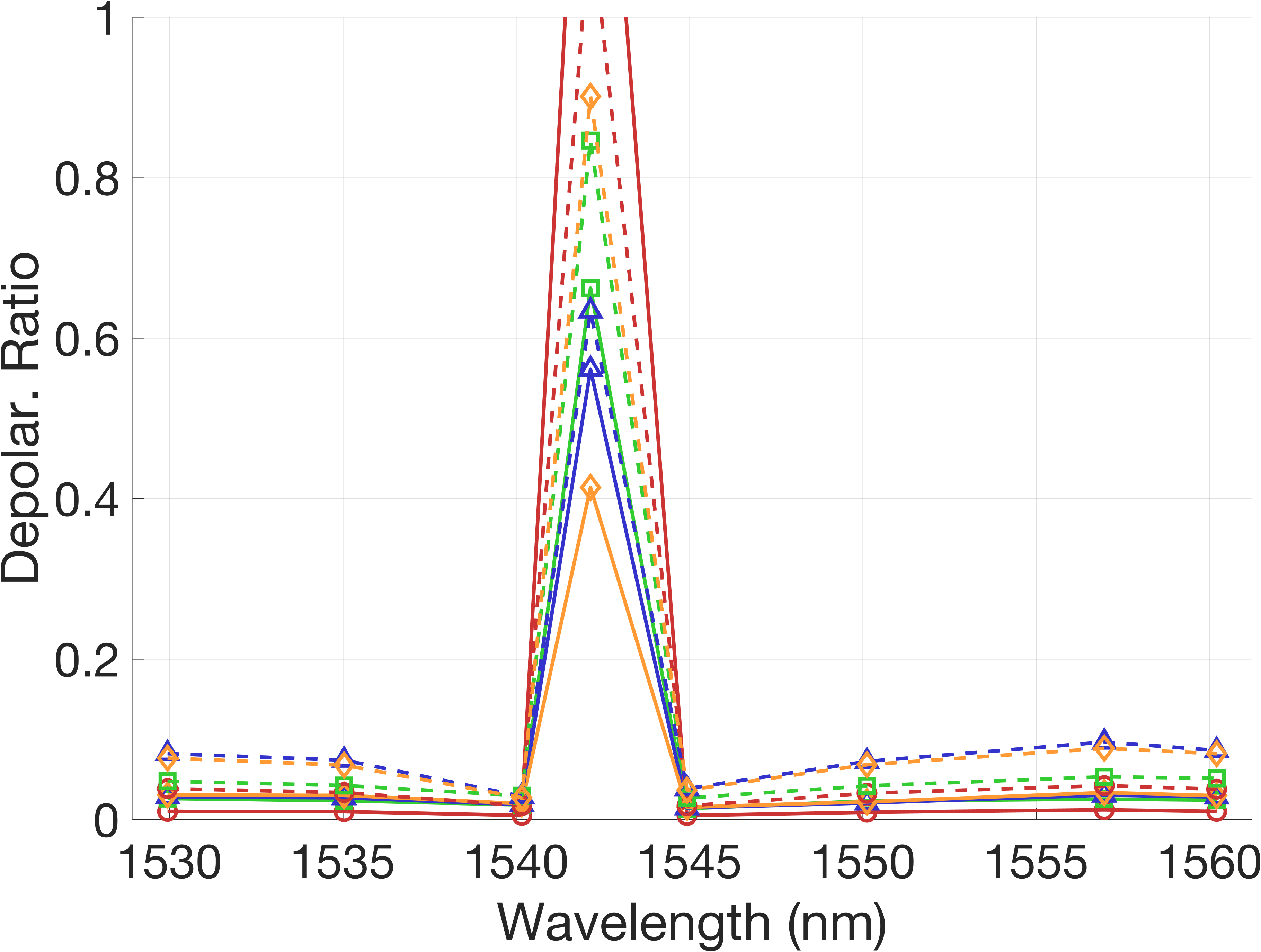}
        \caption{Depolar. probability at 0.5~km}
    \end{subfigure}
    \begin{subfigure}{0.65\columnwidth}
        \centering
        \includegraphics[width=\columnwidth]{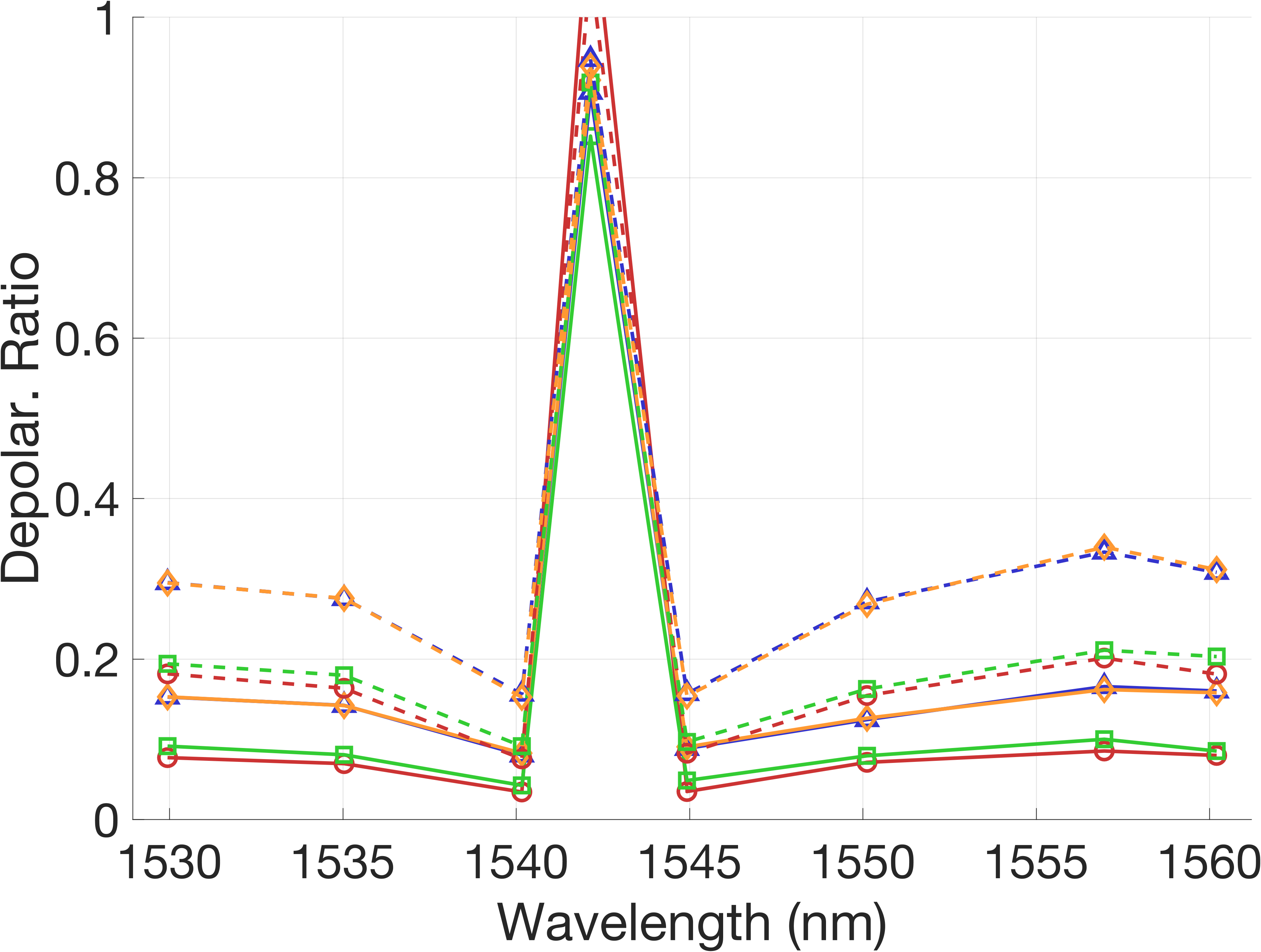}
        \caption{{Depolar. probability at 5.0~km}}
    \end{subfigure}
    \begin{subfigure}{0.65\columnwidth}
        \centering
        \includegraphics[width=\columnwidth]{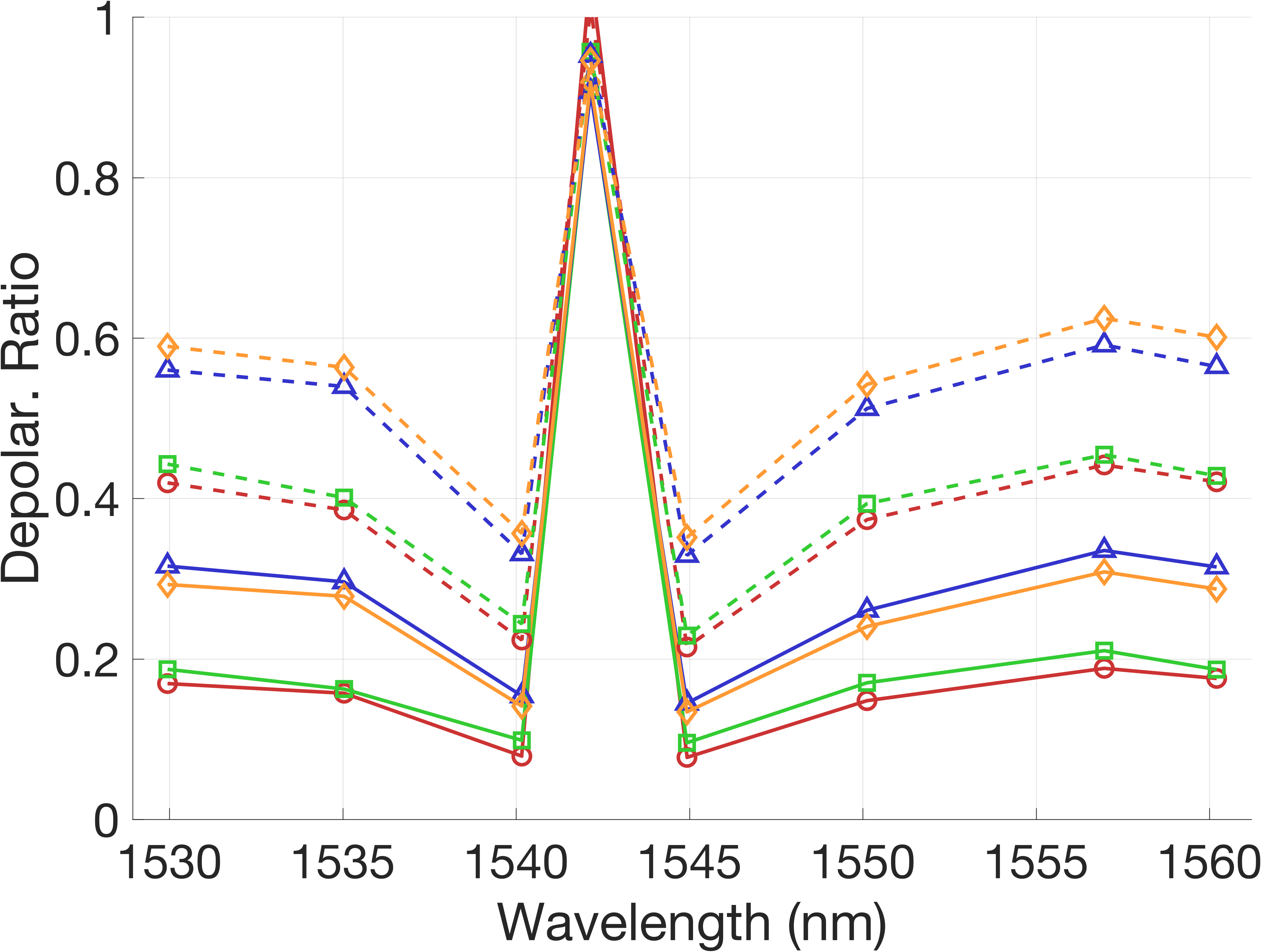}
        \caption{Depolar. probability at 15~km}
    \end{subfigure}
    \caption{{Depolarization probability vs.\ wavelength for different fiber lengths and propagation directions, where (Co) denotes co-propagation and (Ct) denotes counter-propagation. The feature near 1542~nm is a measurement artifact of the notch filter used to suppress the 1542.5~nm classical networking laser (imperfect filter isolation near the laser line)~\cite{chapman2023coexistent}, not a property of the depolarization channel.}}
    \label{fig:link-tomography-depolarization-probability-comparison}
\end{figure*}

\begin{figure*}
    \centering
    \begin{subfigure}{0.65\columnwidth}
        \centering
        \includegraphics[width=\columnwidth]{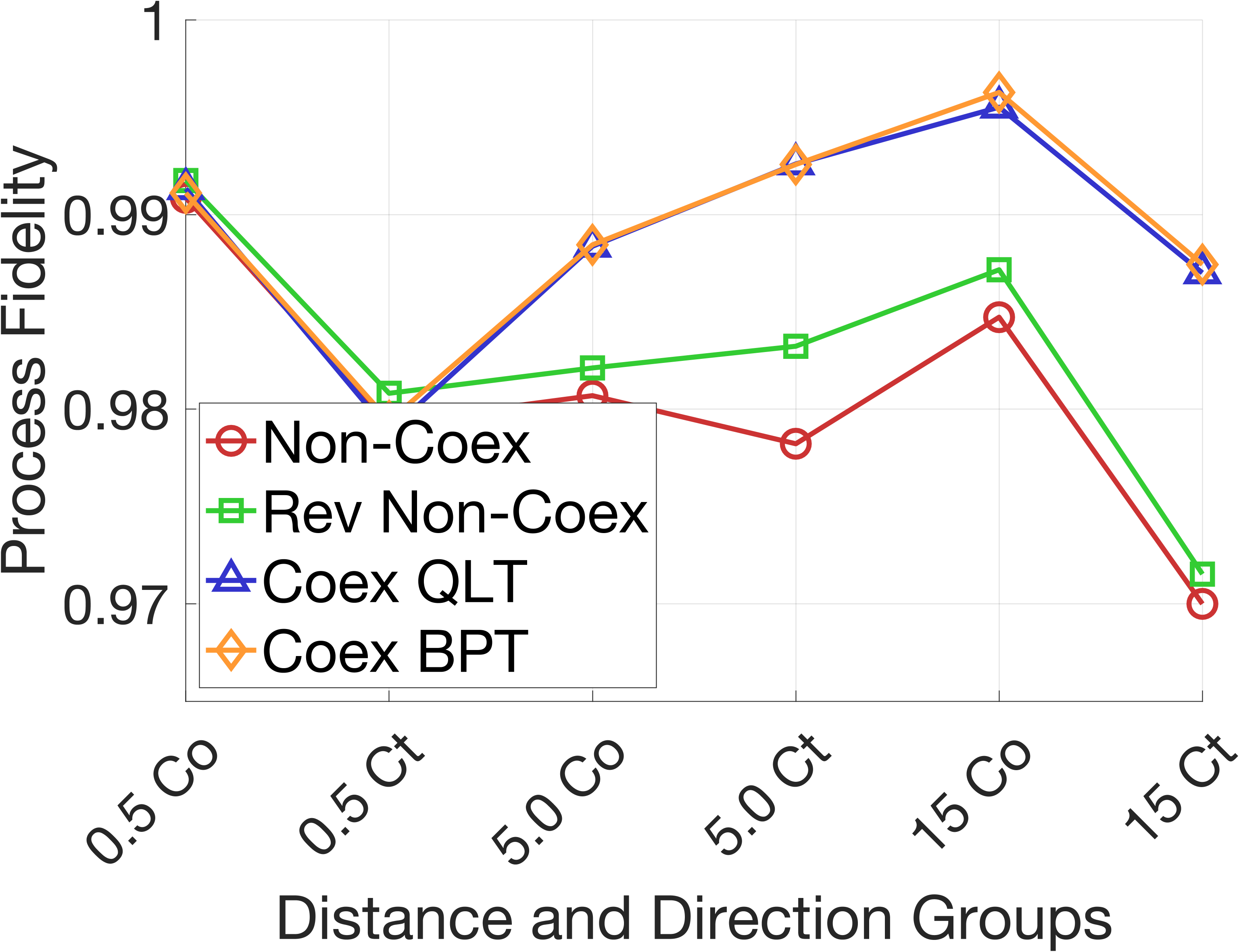}
        \caption{1560~nm wavelength}
    \end{subfigure}
    \begin{subfigure}{0.65\columnwidth}
        \centering
        \includegraphics[width=\columnwidth]{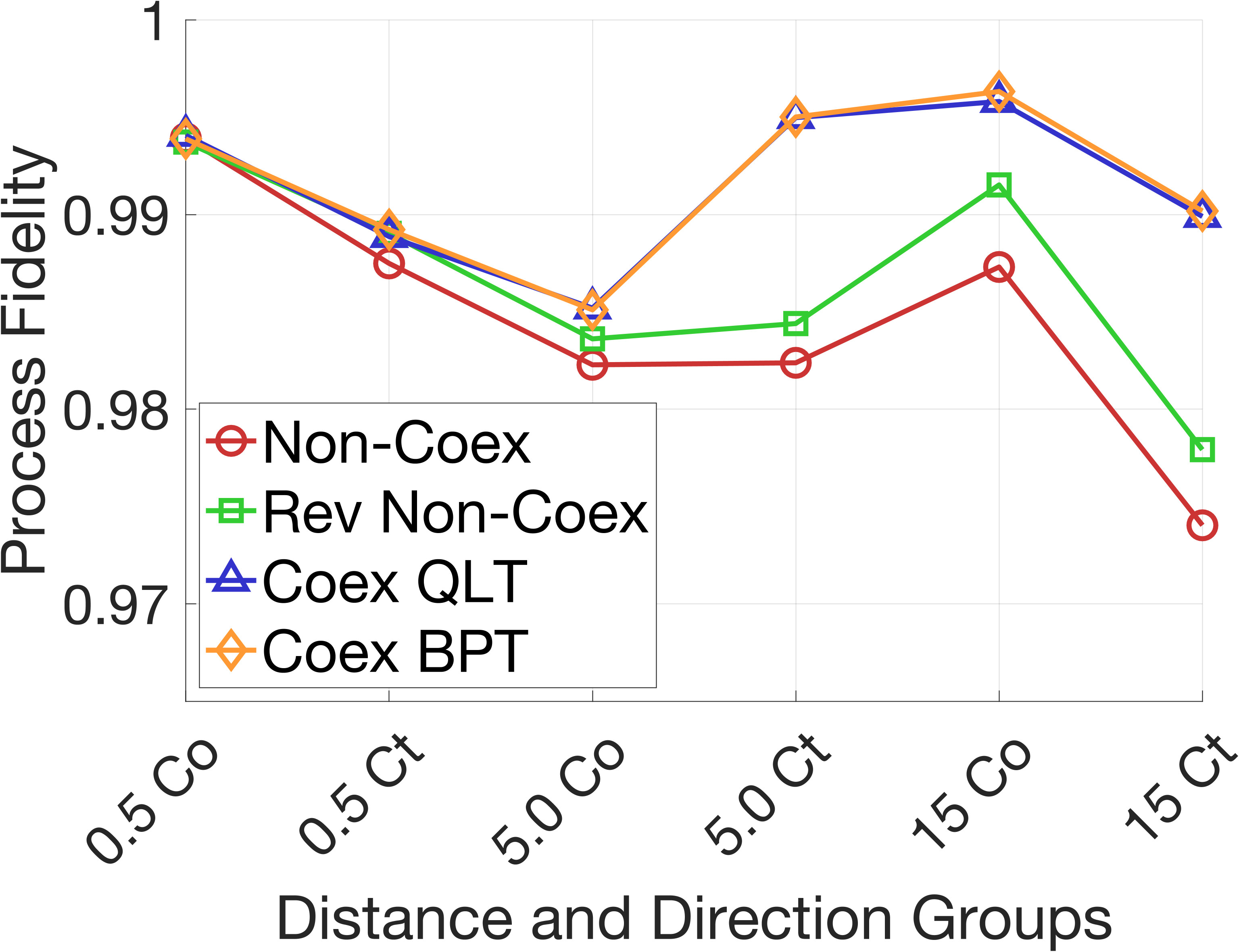}
        \caption{1550~nm wavelength}
    \end{subfigure}
    \begin{subfigure}{0.65\columnwidth}
        \centering
        \includegraphics[width=\columnwidth]{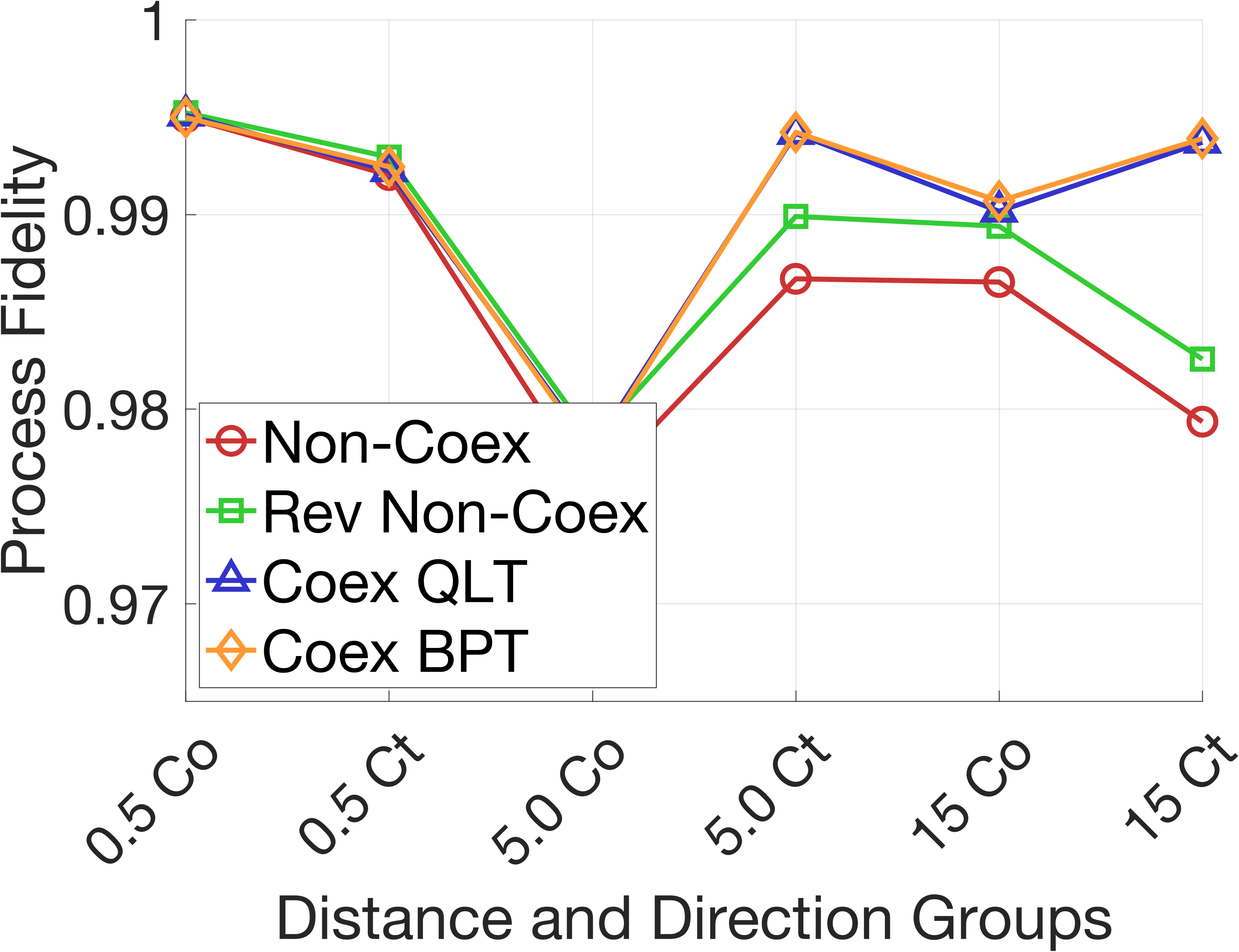}
        \caption{1545~nm wavelength}
    \end{subfigure}
    \\
    \begin{subfigure}{0.65\columnwidth}
        \centering
        \includegraphics[width=\columnwidth]{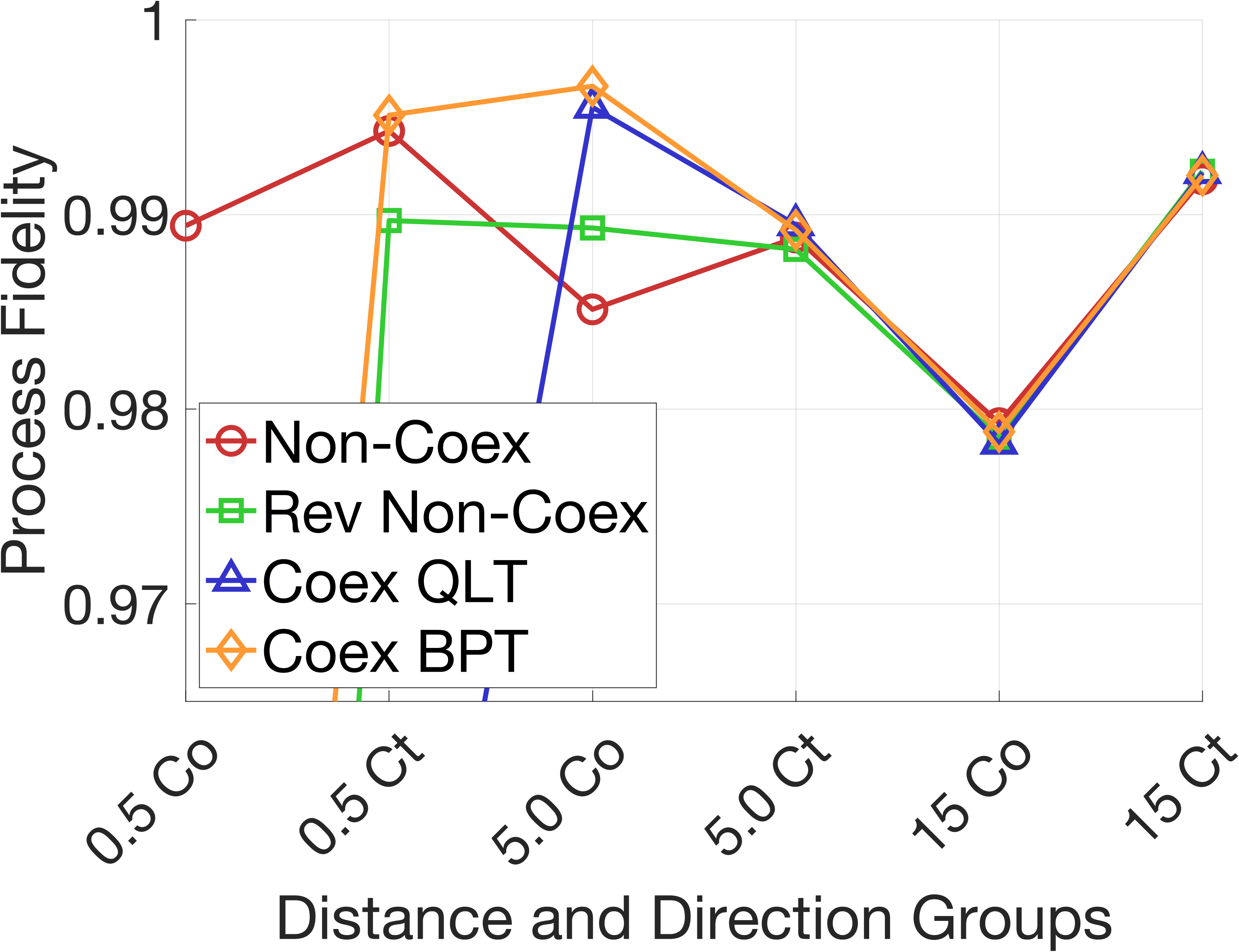}
        \caption{1542~nm wavelength (artifact)}
    \end{subfigure}
    \begin{subfigure}{0.65\columnwidth}
        \centering
        \includegraphics[width=\columnwidth]{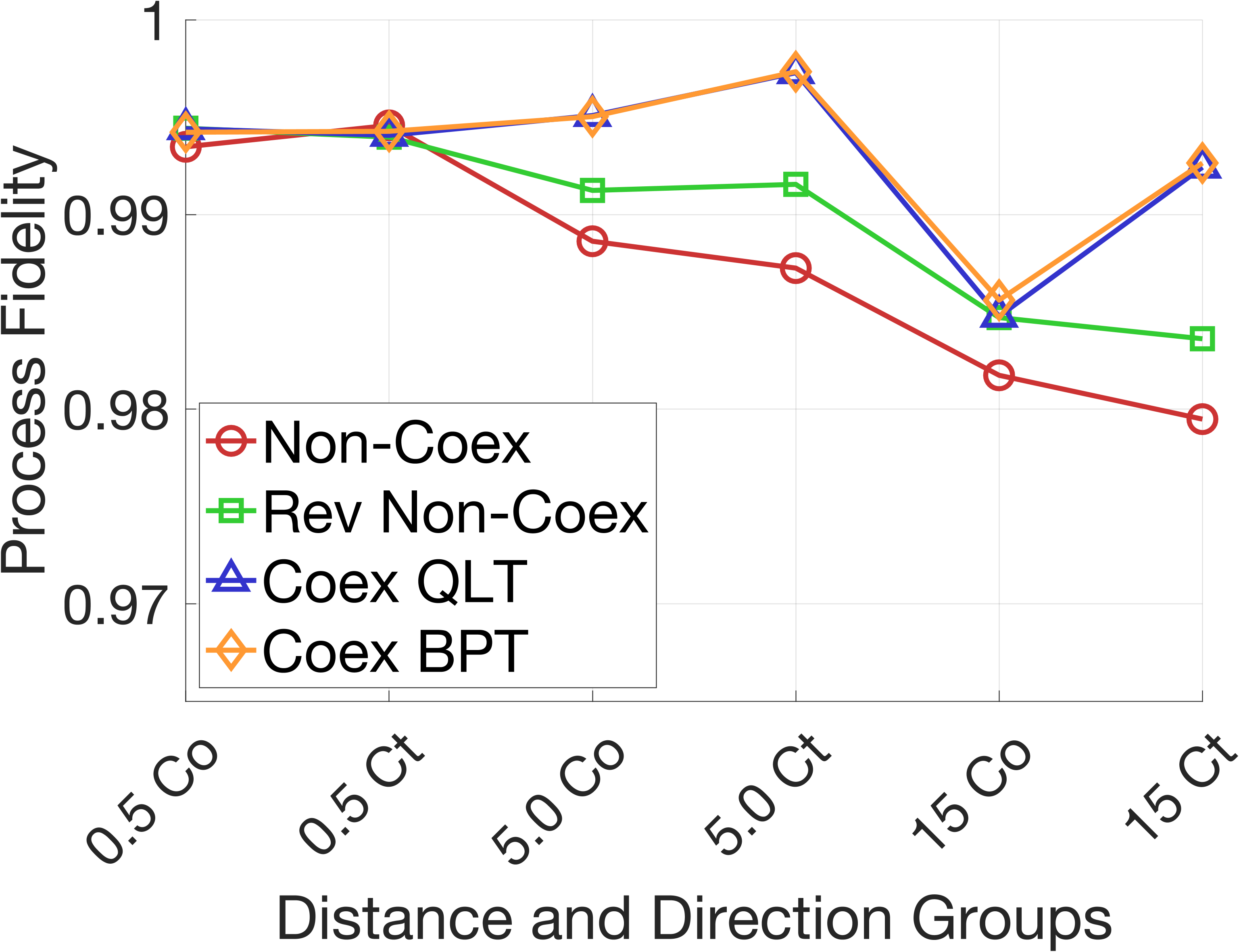}
        \caption{1540~nm wavelength}
    \end{subfigure}
    \begin{subfigure}{0.65\columnwidth}
        \centering
        \includegraphics[width=\columnwidth]{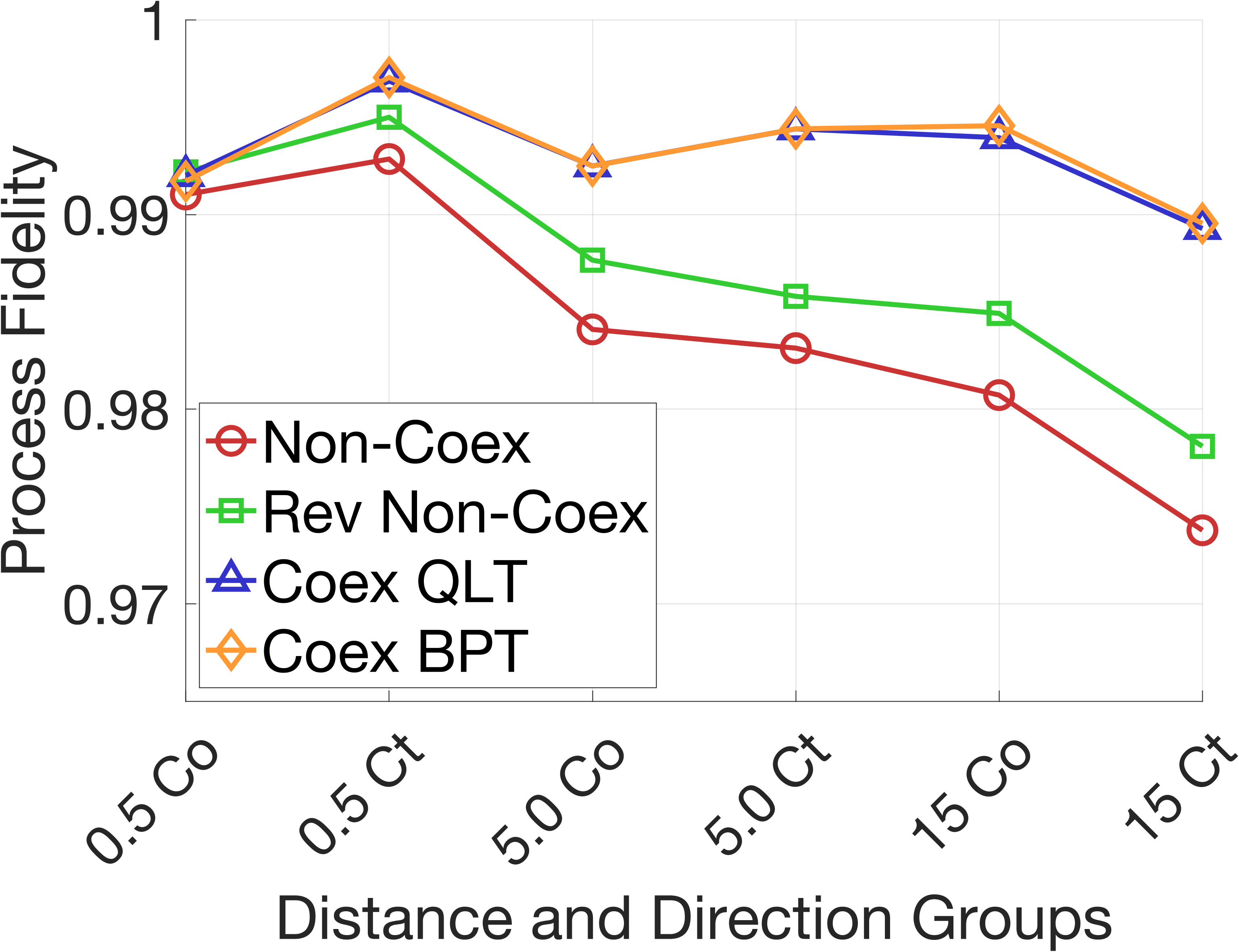}
        \caption{1530~nm wavelength}
    \end{subfigure}
    \caption{{Process fidelity vs.\ distance (km) and propagation direction (co- vs.\ counter-propagating) for six representative wavelengths (the 1535 and 1555~nm panels are omitted; their behavior is qualitatively identical) and different methods}: (1) Non-Coex, the method of Chapman et al.~\cite[Figure~10]{chapman2023coexistent};
        (2) Rev Non-Coex, accounting for the depolarization observed under pure quantum transmission;
        (3) Coex QLT, the proposed coexistence quantum link tomography estimator using both co- and counter-propagation data (\S\ref{sec:link-tomography});
        (4) Coex BPT, Bayesian process tomography from Chapman et al.~\cite[\S II]{chapman2023coexistent}. The residual fidelity gap between QLT and BPT reflects the off-depolarizing (coherent/non-unital) weight that the depolarizing model cannot represent.
    }
    \label{fig:link-tomography-process-fidelity-comparison}
\end{figure*}

This section reports the performance of our proposed estimators for link, star-network, and general-topology tomography.
Throughout, we distinguish three validation modes: \emph{experiments} denote lab-based testbed measurements; \emph{emulations} denote studies that recombine the measured single-link data into multi-link configurations; and \emph{simulations} denote Monte-Carlo studies in which all data, including every per-link parameter, are generated on a computer with no testbed provenance.
For link tomography (\S\ref{subsec:link-experiment}), we utilize experimental data collected by Chapman et al.~\cite{chapman2023coexistent} and focus on verifying the correctness of our channel model and the accuracy of our proposed estimators.
For star network tomography (\S\ref{subsec:star-experiment}), we emulate a multi-link star network from link-level data drawn from the same experimental dataset, under two distinct link-to-network compilations.
Finally, we validate the general-topology estimators of \S\ref{sec:general-topology} on simulated tree and mesh networks (\S\ref{subsec:general-topology-experiment}).

\subsection{Dataset and Testbed Description}\label{subsec:dataset}

We use subsets of the coexisting fiber dataset collected by Chapman et al.~\cite{chapman2023coexistent}, which consists of polarization-encoded single-photon measurements over fibers of three different lengths: 0.5~km, 5.0~km, and 15~km.
For each fiber length, the data span the three coexistence/depolarization scenarios \(x\in\{0,1,2\}\)---(0)~no classical signal (pure quantum signal), (1)~co-propagation, (2)~counter-propagation---together with an auxiliary classical-only run (no quantum signal, pure classical signal) used only as a baseline in this section. We refer to scenario (0) and the classical-only run as the \emph{non-coexistence} cases.
The quantum-channel wavelength varies across 91 C-band wavelengths from 1528.38~nm to 1564.68~nm (i.e., 196.150~THz to 191.600~THz), enabling the study of wavelength-dependent Raman noise effects. Eight wavelengths are selected for analysis.

\textbf{Input states and measurements.}
In each experimental run, single photons are prepared in one of six polarization states spanning three mutually unbiased bases \newtext{(MUBs)}:
horizontal/vertical (\(\ket{H}, \ket{V}\)),
diagonal/anti-diagonal (\(\ket{D}, \ket{A}\)),
and right/left circular (\(\ket{R}, \ket{L}\)).
For each input state, the received photons are measured in each of the three bases, giving three sub-experiments per input state.
In each sub-experiment, photons are prepared in the fixed input state, transmitted over the link, and measured in the fixed basis, with a detected count on the order of \(10^5\) (the \(\sim\!1.25\times10^5\) baseline reported by Chapman et al.~\cite{chapman2023coexistent}).
In total, the experimental configuration yields 36 input-measurement combinations (6 input states \(\times\) 6 measurement outcomes).


\textbf{Deriving the observables \(T\), \(R\), and \(M\).}
\newtext{From each experimental configuration we form the \(6\times6\) count matrix of the six input states against the six measurement outcomes (the three MUBs). From this matrix we extract the three quantities required by our channel model (\S\ref{sec:model}). The first is the total number of input photons \(T\), i.e.\ the effective launched-photon count. We back-calibrate \(T\) from the detected counts using the measured fiber-plus-system insertion-loss budget, rather than measuring the source flux directly.}

\newtext{The second is the total number of received (detected) photons \(R\), summed across all input states and all three measurement bases. The third is the number of matched-outcome photons \(M\), which we obtain from the same aggregated counts together with the estimated depolarization probability via \(M = R\,(1-p/2)\). This relation is exactly the model identity \(\mathbb{E}[M/T]=s+d^{(x)}/2 = r^{(x)}(1-p^{(x)}/2)\).}

\newtext{Aggregating over bases leaves the link relations intact. Because \(R\) and \(M\) share the same aggregation and the same \(T\), the number of bases summed cancels in the ratios \(R/T\) and \(M/T\). The single-basis relations of~\eqref{eq:measure-ratio} therefore carry over unchanged.}

\newtext{We subtract detector dark counts and background before forming the observables. Such background is independent of the classical-traffic configuration, so it cancels in the success-ratio estimator \(\hat{s}=(2M-R)/T\): a background of \(C\) counts adds \(C\) to \(R\) and \(C/2\) to \(M\), and \(2(C/2)-C=0\). It biases each \(\hat{d}^{(x)}\) only by a common \(+C/T\), which leaves the Raman excess \(d^{(x)}-d^{(0)}\) unaffected.}

\newtext{We compute these quantities for the three depolarization scenarios \(x \in \{0,1,2\}\), giving the triples \((T^{(x)}, R^{(x)}, M^{(x)})\). An auxiliary classical-only run (no quantum signal) yields one further triple, which only the non-coexistence baselines below consult.}

\newtext{Finally, we note that the two headline link quantities reported below---the depolarization probability \(p^{(x)}=d^{(x)}/r^{(x)}=2(R-M)/R\) and the process fidelity, which is a function of \(p^{(x)}\) alone---are exactly invariant to the absolute scale of \(T\). The QLT-versus-BPT comparisons are thus robust to any miscalibration of the launched-photon count. Only the absolute success ratio \(s\), together with the Table~\ref{tab:star-network-parameters} parameters that depend on it, scales with \(T\).}

\textbf{Polarization drift correction.}
Due to fiber birefringence, the polarization reference frame drifts slowly, misaligning the measurement bases with the prepared states by an unknown \(SU(2)\) rotation \(U_t\) that is constant within a batch but varies across batches.
Parameterizing \(U_t\) by three angles, we estimate it per batch by maximum likelihood \newtext{estimation} over the full 36-entry input--measurement table and apply \(U_t^{-1}\) to realign the bases.
This is not circular: \(U_t\) is a purity-preserving frame rotation (\(U_t(I/2)U_t^\dagger=I/2\)), so it cannot absorb the basis-independent depolarization \(p\,I/2\); the unitary and depolarization parameters are separately identifiable in the joint fit, and genuine coherent errors are corrected rather than mislabeled as success.
After realignment, the residual channel is well approximated by the isotropic depolarizing channel of \S\ref{sec:model}.

\textbf{Baseline: Bayesian process tomography (BPT).}
As a baseline, we compare our link tomography estimators against the Bayesian process tomography (BPT) method of Chapman et al.~\cite{chapman2023coexistent}, which reconstructs the full single-qubit quantum process matrix from the 36-entry input-measurement combinations via Bayesian inference.
While BPT provides a complete channel characterization, it is computationally far heavier: it is a member of the broad family of quantum process tomography (QPT) methods~\cite{chuang1997prescription,mohseni2008quantum} that reconstruct an unconstrained single-qubit process matrix (a \(4\times4\) complex matrix with \(d^4-d^2=12\) free real parameters under complete positivity and trace preservation) by drawing samples from a posterior via iterative Bayesian inference; structured variants such as compressed-sensing QPT~\cite{shabani2011efficient} and gate-set tomography~\cite{blumekohout2017demonstration} reduce this cost under sparsity or calibration-robustness assumptions but still target the full process matrix. By contrast, our QLT estimator~\eqref{eq:link-tomography-estimator} exploits the depolarization structure to recover the four physical parameters \((s,d^{(0)},d^{(1)},d^{(2)})\) in closed form, requiring only \(O(1)\) arithmetic on the aggregate counts---no iterative optimization or sampling---thus providing a complete \emph{structured} channel characterization at far lower computational cost, at the price of the modeling assumptions BPT does not require.

\subsection{Link Tomography Analysis}\label{subsec:link-experiment}

We evaluate the link-level tomography estimators from \S\ref{sec:link-tomography} on the coexisting fiber dataset described above, using quantum-signal wavelengths \(\{1530, 1535, 1540, 1542, 1545, 1550, 1555, 1560\}~\text{nm}\).

\paragraph{\newtext{Method comparison}}
We compare four methods, organized into two groups: two non-coexistence baselines that estimate parameters from data collected in isolation (either purely quantum or purely classical traffic), and two coexistence methods that jointly exploit data from all three coexistence scenarios.

The non-coexistence baselines exploit a simple separation: when only the quantum signal is present, the observed depolarization is attributable solely to the fiber and system itself (i.e., \(d^{(0)}\)); when only the classical signal is present, the observed depolarized photons are attributable solely to co- or counter-propagating classical traffic\footnote{To make the parameter interpretation consistent with the coexistence scenarios, the co- and counter-propagation cases here correspond to receiving the (would-be) quantum signal at the same end as the classical signal or at the opposite end, respectively.} (i.e., \(d^{(x)} - d^{(0)}\) for \(x = 1, 2\)).
This separation lets these baselines read off depolarization parameters directly from each isolated measurement, without solving the coupled system in~\eqref{eq:link-tomography-equations}.
Concretely:
(1)~\textbf{Non-Coex}~\cite[Figure~10]{chapman2023coexistent} estimates the classical-traffic-induced component \(d^{(x)} - d^{(0)}\) from the classical-only data and ignores the baseline depolarization \(d^{(0)}\);
(2)~\textbf{Rev Non-Coex} is our revision that adds back the baseline term, yielding \((d^{(x)} - d^{(0)}) + d^{(0)} = d^{(x)}\).

The coexistence methods, in contrast, use measurements from all three coexistence scenarios (no classical signal, co-propagation, and counter-propagation) jointly:
(3)~\textbf{Coex QLT}, our coexistence quantum link tomography estimator (\S\ref{sec:link-tomography}), which solves~\eqref{eq:link-tomography-equations} in closed form using both co- and counter-propagation data; and
(4)~\textbf{Coex BPT}, the Bayesian process tomography baseline of Chapman et al.~\cite{chapman2023coexistent}.

\paragraph{Depolarization probability estimation}
Figure~\ref{fig:link-tomography-depolarization-probability-comparison} shows the estimated depolarization probabilities \(p^{(x)} = d^{(x)} / r^{(x)}\) as a function of quantum-channel wavelength for the three fiber lengths.
Several observations are worth noting.
First, counter-propagation consistently produces greater depolarization than co-propagation, confirming the direction-dependent noise structure predicted by our model \newtext{(\S\ref{subsec:physical-interpretation}, Eqs.~\eqref{eq:raman-co}--\eqref{eq:raman-counter})}.
Second, depolarization increases with fiber length, as expected from the accumulation of Raman scattering noise over longer propagation distances.
Third, our QLT estimates closely match the BPT baseline across all wavelengths and fiber lengths, validating the isotropic-depolarization approximation that remains after frame-drift correction.
Finally, the anomalous feature near 1542~nm is a notch-filter artifact, not a property of the depolarization channel: Chapman et al.~\cite{chapman2023coexistent} attribute it to the notch filter used to remove the 1542.5~nm classical networking laser, whose imperfect isolation lets the laser tails leak through near 1542~nm. It is therefore unrelated to any misspecification of our channel model.

\paragraph{Process fidelity estimation}
We report the \emph{process fidelity} as the Choi--Jamio\l{}kowski state fidelity~\newtext{\cite{nielsen2010quantum}} between the depolarizing channel implied by each estimator and the experimentally reconstructed channel. Writing \(\rho_{\mathcal{P}}\) for the trace-normalized Choi state of a channel \(\mathcal{P}\), an estimate \(\hat{p}^{(x)}\) yields the depolarizing-channel Choi \(\rho_{\mathrm{dep}}(\hat{p}^{(x)})\), and we report the Uhlmann fidelity \(F=\big(\operatorname{Tr}\sqrt{\sqrt{\rho_{\mathrm{dep}}}\,\rho_{\mathrm{exp}}\sqrt{\rho_{\mathrm{dep}}}}\,\big)^2\), maximized over a local input rotation to remove the residual frame drift, with \(\rho_{\mathrm{exp}}\) the Choi state reconstructed by BPT from the same dataset. For an ideal depolarizing channel this reduces to the entanglement/process fidelity \(F=1-\tfrac34\hat{p}^{(x)}\) (not the average gate fidelity \(1-\tfrac12\hat{p}^{(x)}\)); all four curves of Fig.~\ref{fig:link-tomography-process-fidelity-comparison} use this identical definition and the same \(\rho_{\mathrm{exp}}\), so the comparison is on equal footing.
Figure~\ref{fig:link-tomography-process-fidelity-comparison} shows the estimated process fidelity as a function of fiber distance for several wavelengths.
The four methods disentangle three effects. Comparing Non-Coex with Rev Non-Coex isolates the accounting value of the intrinsic baseline \(d^{(0)}\): Non-Coex omits it, producing an optimistically biased fidelity, and adding it back removes a systematic gap that is an information/accounting effect, not a modeling one.
\newtext{Concretely, Non-Coex reads the depolarization off the classical-only run and so estimates \(\hat{d}^{(x)}_{\mathrm{NC}}=d^{(x)}-d^{(0)}\), whereas Rev Non-Coex restores the intrinsic term to give \(d^{(x)}\). Since \(F=1-\tfrac34 p^{(x)}\) with \(p^{(x)}=d^{(x)}/(s+d^{(x)})\), omitting \(d^{(0)}\) raises the reported fidelity by exactly}
\begin{equation}\label{eq:accounting-gap}
    \newtext{\Delta F = \frac{3}{4}\cdot\frac{s\,d^{(0)}}{\bigl(s+d^{(x)}\bigr)\bigl(s+d^{(x)}-d^{(0)}\bigr)} > 0,}
\end{equation}
\newtext{largest at short reach, where \(d^{(x)}\) is small and \(d^{(0)}\) is a larger share of the total. Both methods assume the same channel and the same estimator algebra and differ only in which runs they consult, which is why we call the gap an accounting rather than a modeling effect.}
Comparing Rev Non-Coex with Coex QLT isolates the value of measuring under genuine coexistence rather than additively combining isolated quantum-only and classical-only measurements. Finally, comparing Coex QLT with Coex BPT---both consuming the full coexistence data---isolates the adequacy of the closed-form depolarization model: Coex QLT closely tracks the BPT reference, with only a small residual gap attributable to the non-depolarizing (coherent/asymmetric) component the two-parameter model omits. Together this establishes that the depolarization channel model is an accurate abstraction for process-fidelity prediction in QCNs, while the first two comparisons quantify the separate benefit of jointly exploiting all coexistence scenarios.


\subsection{Star Network Tomography Emulation}\label{subsec:star-experiment}

Lacking a multi-link quantum testbed, we evaluate the star network tomography methods of \S\ref{sec:network-tomography} by composing experimentally derived link-level descriptions---specifically, the per-link channel parameter estimates and the BPT-estimated channel maps---into emulated multi-link paths.
We emulate a three-link star network using link-level estimates from the 192.550~THz ($\approx$1555~nm) channel, with fiber lengths of 0.5~km, 5.0~km, and 15~km.
The hub node connects all three links, and end-to-end statistics for each pair of end nodes are computed by composing two link-level channel descriptions.
We consider two compilation strategies that differ in how the individual link channels are modeled (Figure~\ref{fig:two-consecutive-links}).

\begin{figure}[!t]
    \centering
    \begin{subfigure}[c]{0.4\textwidth}
        \resizebox{\textwidth}{!}{%
            \begin{tikzpicture}[yscale=0.7, every node/.style={font=\fontsize{15}{18}\selectfont}]
                \definecolor{inputcol}{RGB}{40,40,40}
                \definecolor{chonecol}{RGB}{29,106,60}
                \definecolor{chtwocol}{RGB}{45,95,160}
                \definecolor{ramancol}{RGB}{244,162,97}
                \fill[inputcol] (0,0) rectangle (0.75,8);
                \fill[chonecol]  (4.8,0)   rectangle (5.55,3.2);
                \fill[ramancol]  (4.8,3.2) rectangle (5.55,4.6);
                \fill[chtwocol]  (9.2,0)   rectangle (9.95,1.9);
                \fill[ramancol]  (9.2,1.9) rectangle (9.95,2.4);
                \draw[dashed] (0,3.2) -- (4.8,3.2);
                \draw[dashed] (0.75,8) -- (4.8,3.2);
                \draw[dashed] (5.55,4.6) -- (9.2,1.9);
                \draw[dashed] (5.55,1.9) -- (9.2,1.9);
                \node at (2.1,4.8)                              {Loss 1};
                \node[anchor=south]      at (5.175,4.6)  {Raman 1};
                \node       at (2.1,1.6)    {Pass 1};
                \node at (6.77,2.8)                             {Loss 2};
                \node[anchor=south] at (9.5,2.4)   {Raman 2};
                \node       at (6.77,0.95)  {Pass 2};
                \draw[dashed] (0.75,0) -- (4.8,0);
                \draw[dashed] (5.55,0) -- (9.2,0);
                \node[anchor=north] at (0.375,-0.15) {Input};
                \node[anchor=north] at (5.175,-0.15) {Middle};
                \node[anchor=north] at (9.575,-0.15) {Output};
            \end{tikzpicture}%
        }
        \caption{{Illustration of two consecutive links}
        }
    \end{subfigure}
    \begin{subfigure}[c]{0.4\textwidth}
        \centering
        \begin{tabular}{lcc}
            \hline
                                        & Type~I    & Type~II            \\
            \hline
            Loss \(q\)                  & Link Est. & Link Est.          \\
            Raman \(d^{(x)} - d^{(0)}\) & Link Est. & Link Est.          \\
            Pass \(s+d^{(0)}\)          & Link Est. & \emph{Bayesian PT} \\
            \hline
        \end{tabular}
        \caption{Two types of compilation}
    \end{subfigure}
    \caption{Star network tomography compilation strategies. Each end-to-end path traverses two links through the hub. Type~I uses idealized depolarization channels; Type~II uses BPT-estimated channel maps.}
    \vspace{-0.2cm}
    \label{fig:two-consecutive-links}
\end{figure}

\paragraph{Compilation Type~I (Idealized Depolarization)}
In the first compilation, each link is represented by the idealized depolarization channel from \S\ref{sec:model}.
The channel parameters \((s, d^{(0)}, d^{(1)}, d^{(2)})\) for each link are taken from the link tomography estimates of \S\ref{subsec:link-experiment}.
Given these parameters, end-to-end measurement statistics are generated by composing two depolarization channels according to~\eqref{eq:network-tomography-measurement}: a photon traverses link~\(k\) from the sender to the hub and then link~\(\ell\) from the hub to the receiver, experiencing the appropriate co- or counter-propagation depolarization on each segment depending on the classical traffic direction.
This compilation provides a controlled setting where the channel model is exactly matched, isolating the estimation accuracy of the network tomography algorithms.

\paragraph{Compilation Type~II (BPT-Based Channel Maps)}
The second compilation uses the full quantum process matrices obtained from Bayesian process tomography (BPT) on the experimental data, rather than the idealized depolarization model.
Specifically, for each link and each classical-traffic configuration, we use the BPT-estimated channel map \(\mathcal{P}\) to generate end-to-end statistics. Given an input population \(T\), the expected number of received photons is
\begin{align}
    R & = T \, \bigl(s_k + d_k^{(x(k,1))}\bigr)\bigl(s_\ell + d_\ell^{(x(\ell,2))}\bigr),
\end{align}
with \(s_k\) and \(d_k^{(x)}\) taken from the link tomography estimates and each factor evaluated at the propagation scenario actually seen on that hop (cf.~\eqref{eq:group2}),
and the number of matched outcomes \(M\) is obtained by drawing each of the \(R\) received photons' outcome from the composed BPT map: we sample \(\ket{0}\) with probability \(\bra{0}\mathcal{P}_2 \circ \mathcal{P}_1 (\ketbra{0}{0})\ket{0}\), where \(\mathcal{P}_1, \mathcal{P}_2\) are the BPT-estimated qubit process maps for the two links.
Note that the received \emph{count} \(R\) follows the depolarization count model (loss and Raman injection are not represented by the \(2\times2\) process maps), while the per-photon matched fraction \(M/R\) follows the BPT maps; thus Type~II stresses model mismatch in the polarization/process-fidelity channel---and hence in \(s_k\), and through \(d_k = r_k - s_k\) in the depolarization estimates---while the received ratios \(r_k = R/T\) are governed by the depolarization count model.

\newtext{\paragraph{Simulation statistics} The counts are drawn as follows. \(T\) is deterministic---the batch size, \(10^5\) or \(10^{10}\). Given \(T\), each hop applies the binomial cascade implied by \S\ref{sec:model}: the first hop draws the intact and depolarized populations as two \emph{independent} binomials, \(\mathrm{Binom}(T,s_k)\) and \(\mathrm{Binom}(T,d_k^{(x(k,1))})\)---which is what lets \(R\) exceed \(T\) under Raman injection---and the second hop splits each of them again by \(s_\ell\) and \(d_\ell^{(x(\ell,2))}\). Then \(R\) is the sum of the four resulting populations and \(M=R^{\mathrm{ss}}+\mathrm{Binom}(R-R^{\mathrm{ss}},\tfrac12)\), with \(R^{\mathrm{ss}}\) the doubly-intact population, giving \(\mathbb{E}[R]=T r_k r_\ell\) and \(\mathbb{E}[2M-R]=T s_k s_\ell\) as required by~\eqref{eq:group2} and~\eqref{eq:group1}. For \(T>10^{9}\) each binomial is replaced by \(\mathcal{N}(np,np(1-p))\), rounded and floored at zero---the ``normal-distribution approximation'' of Fig.~\ref{fig:star-network-tomography-parameter-estimation-large-sample-size}. Type~II compilation changes only \(M\): the count \(R\) is drawn exactly as above, since the \(2\times2\) BPT maps carry no loss or injection information, while each received photon is propagated through the composed maps in Kraus form, \(\rho_{\mathrm{out}}=\mathcal{P}_2(\mathcal{P}_1(\ketbra{0}{0}))\), and matched with probability \(\bra{0}\rho_{\mathrm{out}}\ket{0}\). Since the implementation cycles deterministically through the BPT posterior samples, \(M\mid R\) is Poisson-binomial rather than binomial, and the error bars reflect shot noise at the posterior-averaged map rather than posterior uncertainty.}

\paragraph{Network tomography results}

\begin{figure*}
    \centering
    \begin{subfigure}{0.65\columnwidth}
        \centering
        \includegraphics[width=\columnwidth]{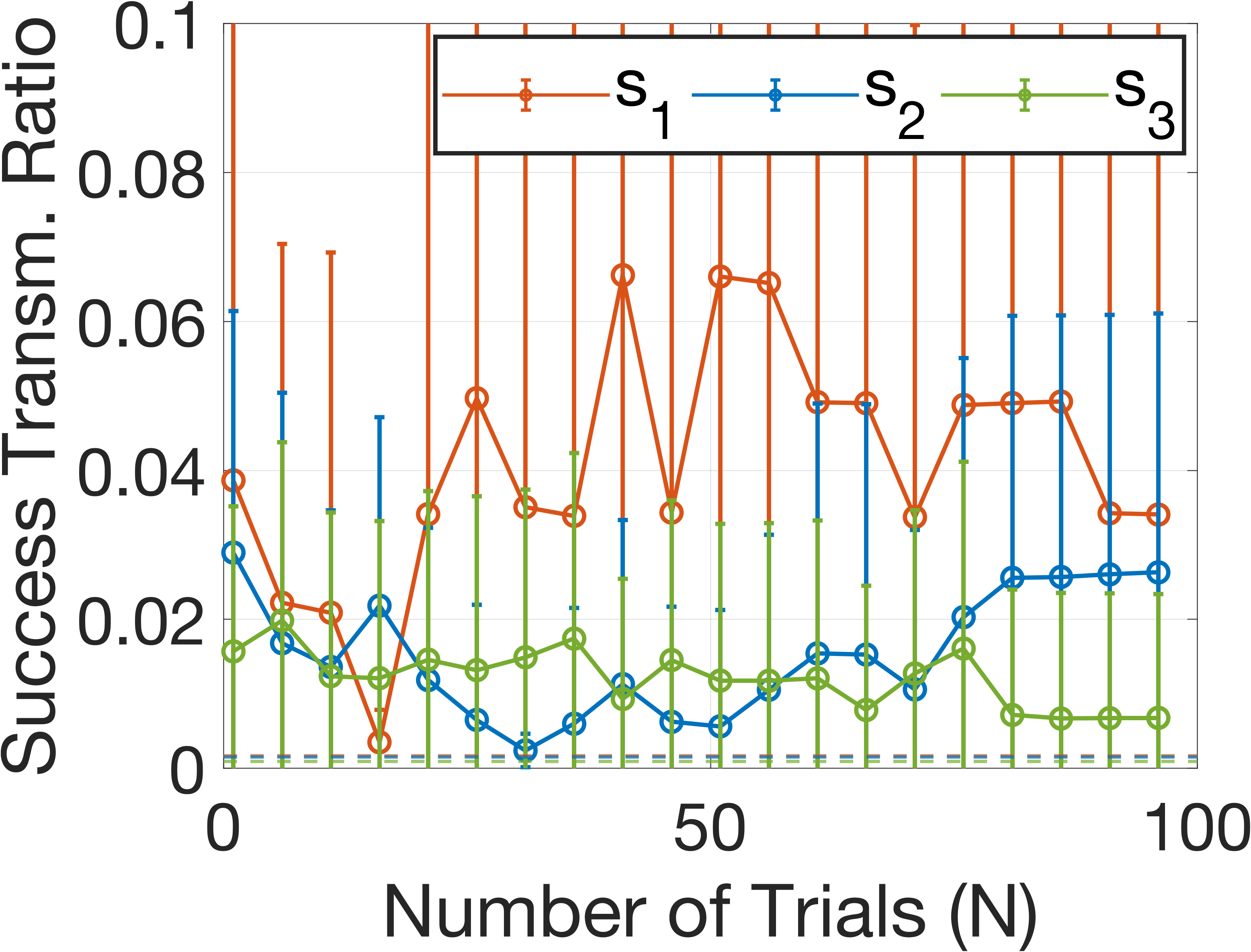}
        \caption{Per-link success ratios \(s_k\)}
    \end{subfigure}
    \begin{subfigure}{0.65\columnwidth}
        \centering
        \includegraphics[width=\columnwidth]{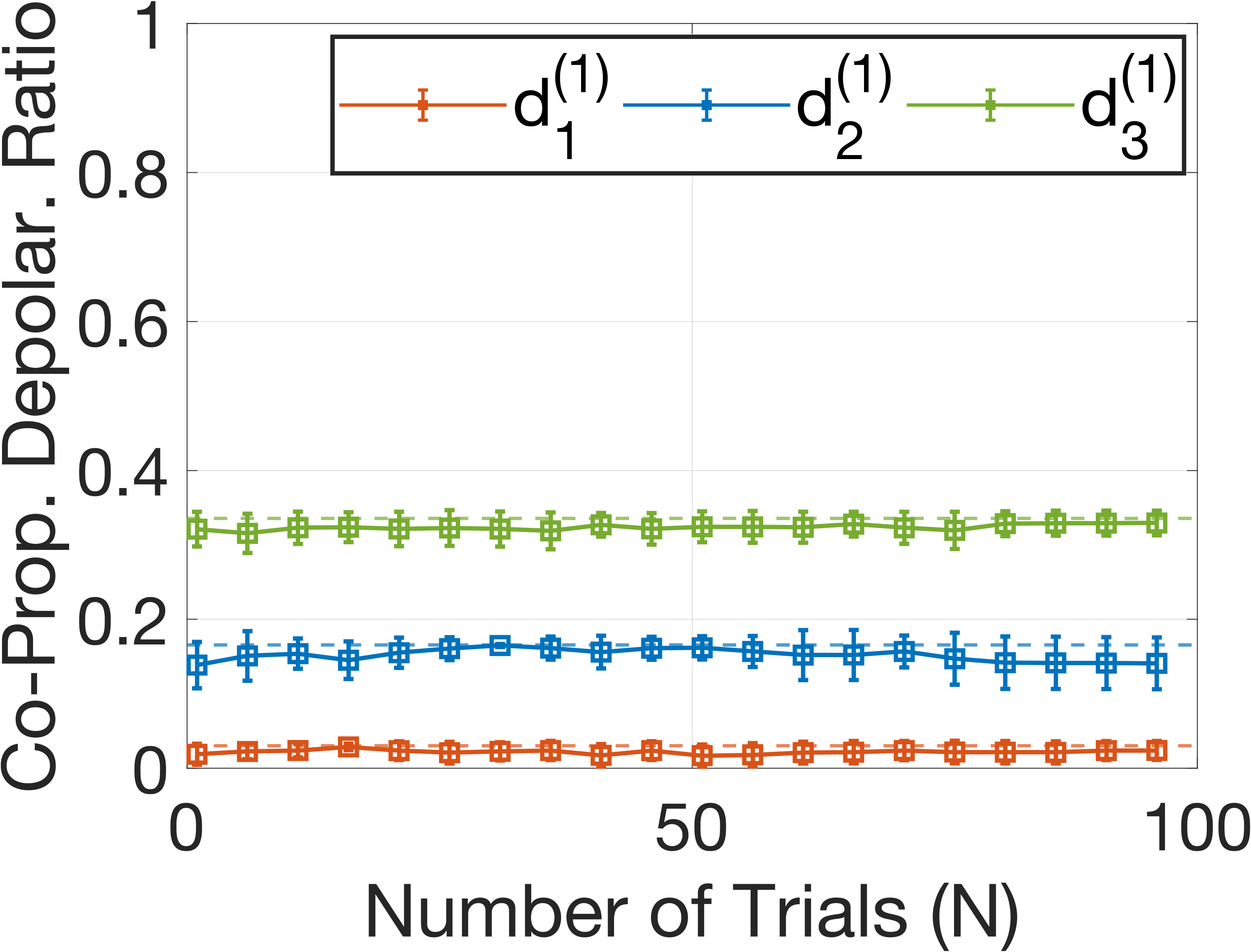}
        \caption{Per-link co-prop.\ depolar. ratios \(d_k^{(1)}\)}
    \end{subfigure}
    \begin{subfigure}{0.65\columnwidth}
        \centering
        \includegraphics[width=\columnwidth]{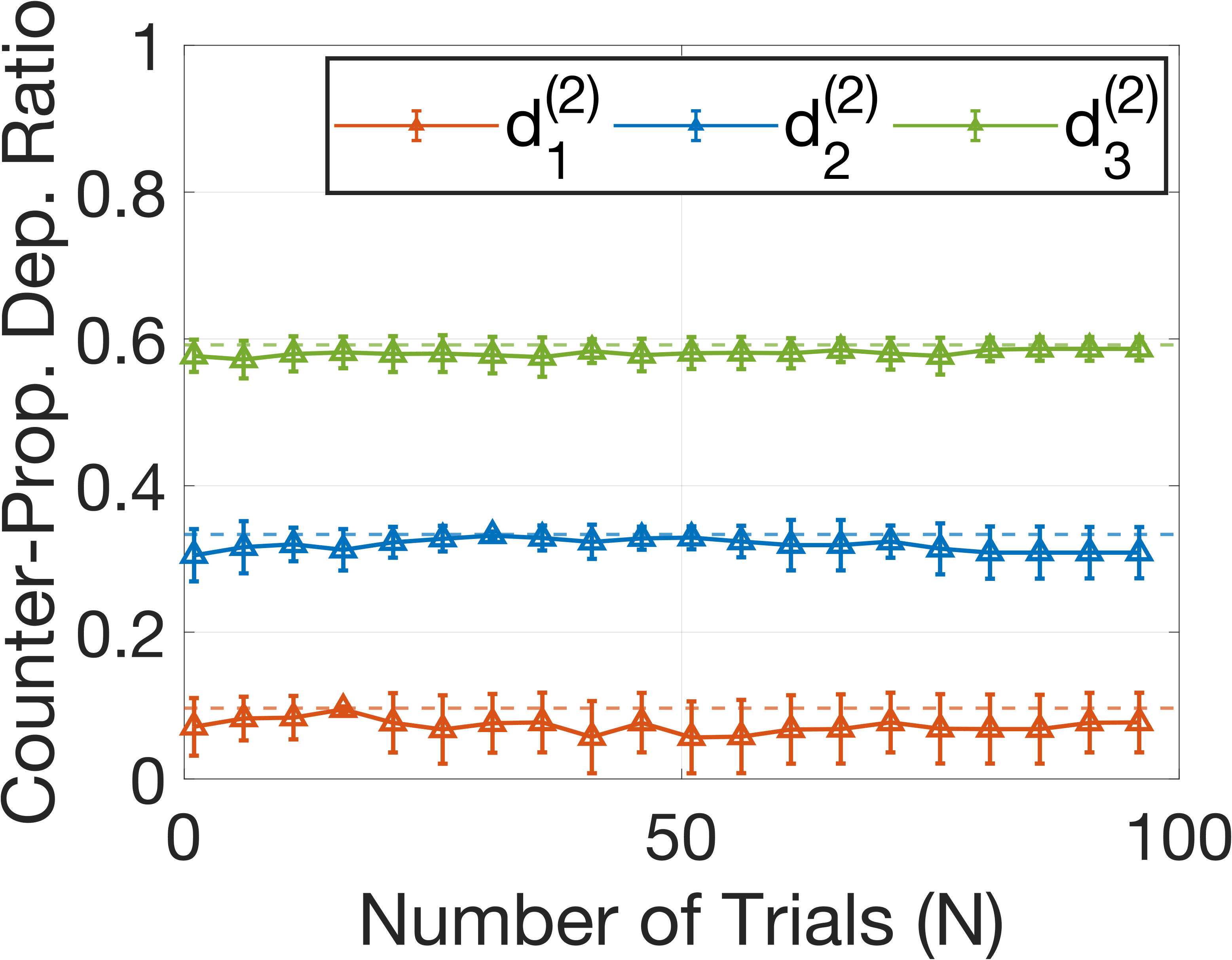}
        \caption{Per-link counter-prop.\ depolar. ratios \(d_k^{(2)}\)}
    \end{subfigure}
    \caption{{Estimated channel parameters under Type~I compilation (sample size \(10^5\)).} \newtext{Solid markers are the star-network estimates as a function of the cumulative number of trials; horizontal dashed lines mark the ground-truth link parameters of Table~\ref{tab:star-network-parameters}; error bars show \(\pm1\) sample standard deviation across 10 independent macro-replicates, each aggregating 100 batches of \(T\) photons.}}
    \label{fig:star-network-tomography-parameter-estimation-small-sample-size}
\end{figure*}

\begin{figure*}
    \centering
    \begin{subfigure}{0.65\columnwidth}
        \centering
        \includegraphics[width=\columnwidth]{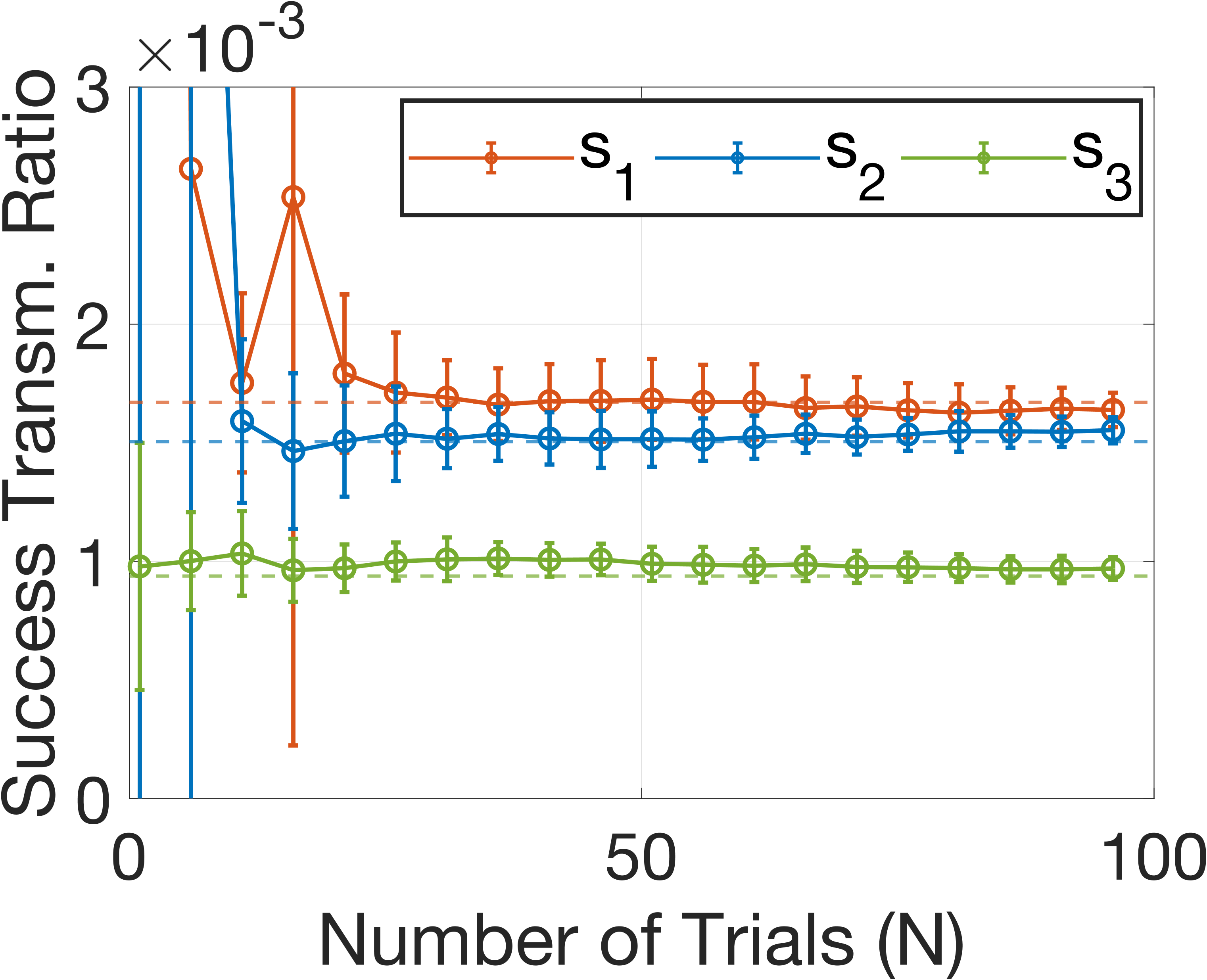}
        \caption{Per-link success ratios \(s_k\)}
    \end{subfigure}
    \begin{subfigure}{0.65\columnwidth}
        \centering
        \includegraphics[width=\columnwidth]{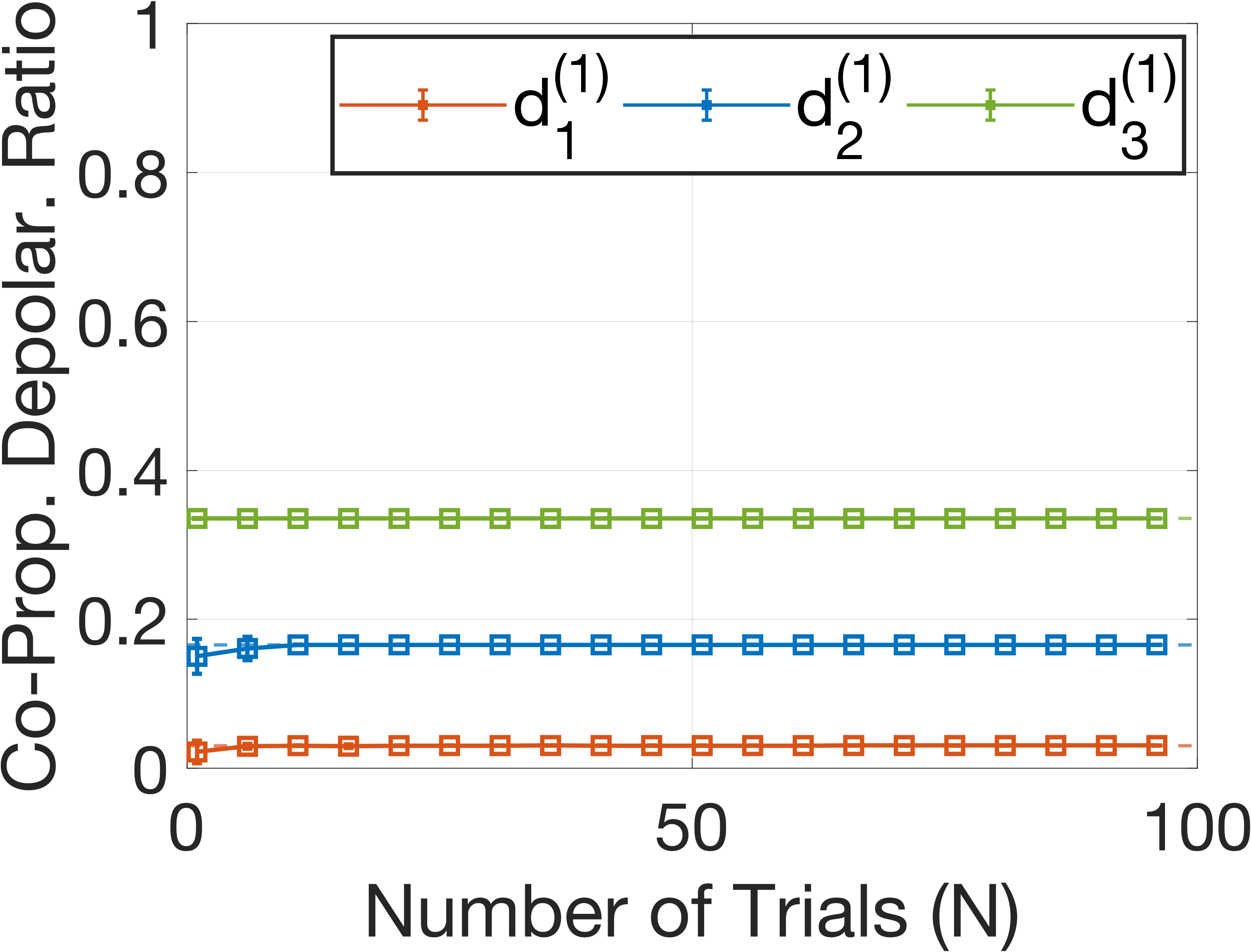}
        \caption{Per-link co-prop.\ depolar. ratios \(d_k^{(1)}\)}
    \end{subfigure}
    \begin{subfigure}{0.65\columnwidth}
        \centering
        \includegraphics[width=\columnwidth]{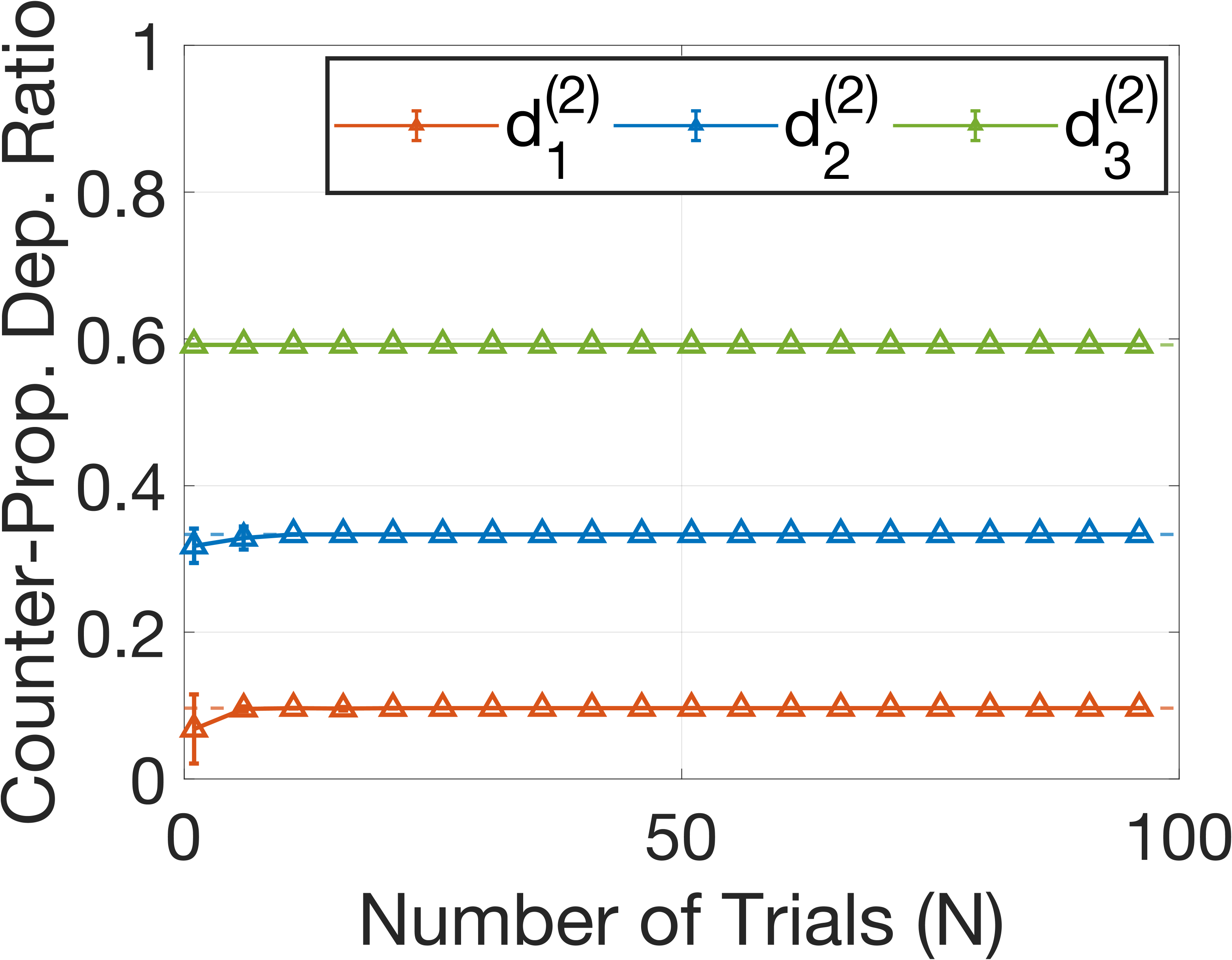}
        \caption{Per-link counter-prop.\ depolar. ratios \(d_k^{(2)}\)}
    \end{subfigure}
    \caption{{Estimated channel parameters under Type~I compilation (sample size \(10^{10}\), with normal-distribution approximation).} \newtext{Dashed lines and error bars are as in Fig.~\ref{fig:star-network-tomography-parameter-estimation-small-sample-size}: dashed lines mark the Table~\ref{tab:star-network-parameters} ground truth, and error bars show \(\pm1\) sample standard deviation across 10 macro-replicates. Note the narrower success-ratio axis relative to Fig.~\ref{fig:star-network-tomography-parameter-estimation-small-sample-size}.}}
    \label{fig:star-network-tomography-parameter-estimation-large-sample-size}
\end{figure*}

\begin{figure*}
    \centering
    \begin{subfigure}{0.65\columnwidth}
        \centering
        \includegraphics[width=\columnwidth]{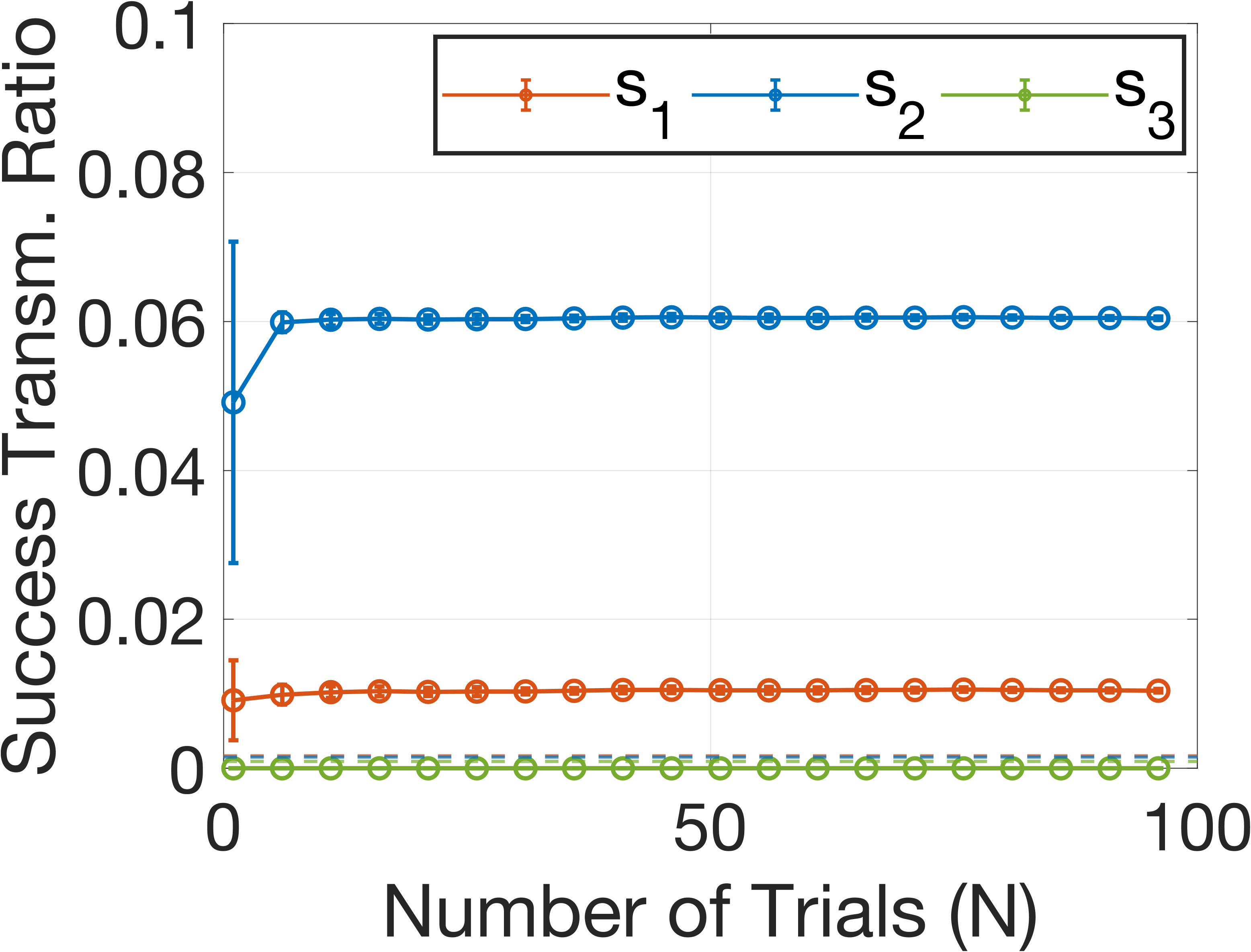}
        \caption{Per-link success ratios \(s_k\)}
    \end{subfigure}
    \begin{subfigure}{0.65\columnwidth}
        \centering
        \includegraphics[width=\columnwidth]{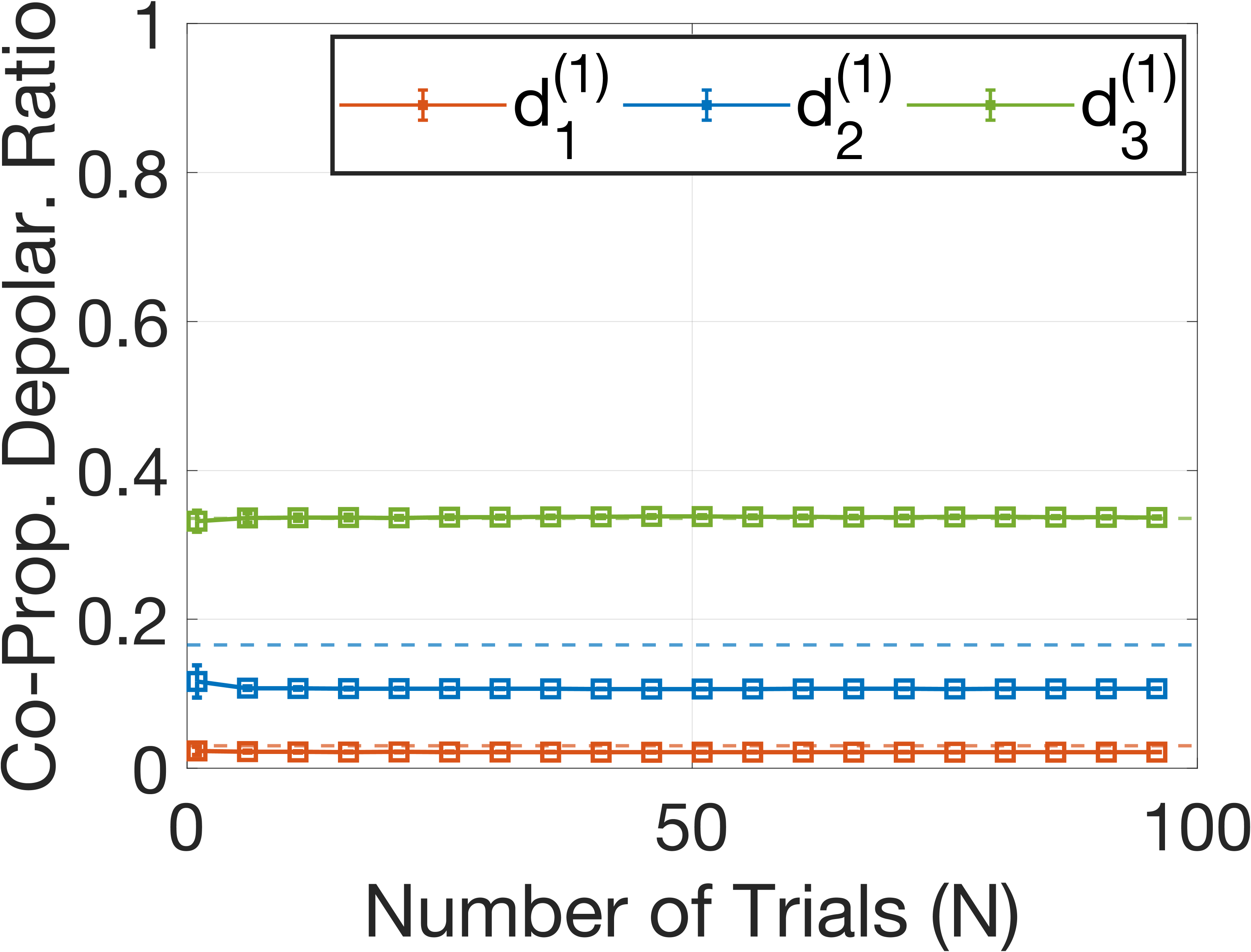}
        \caption{Per-link co-prop.\ depolar. ratios \(d_k^{(1)}\)}
    \end{subfigure}
    \begin{subfigure}{0.65\columnwidth}
        \centering
        \includegraphics[width=\columnwidth]{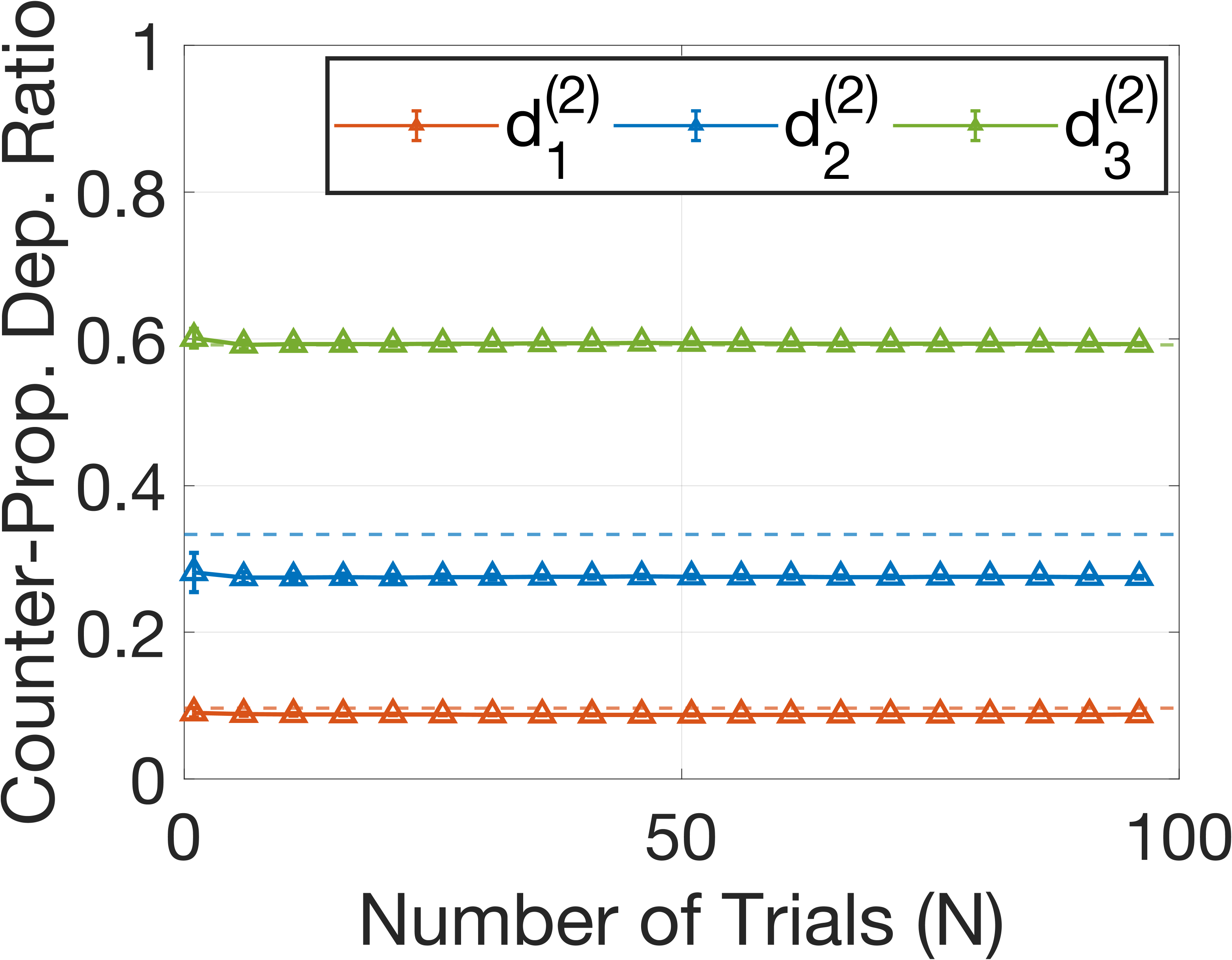}
        \caption{Per-link counter-prop.\ depolar. ratios \(d_k^{(2)}\)}
    \end{subfigure}
    \caption{{Estimated channel parameters under Type~II compilation (sample size \(10^{5}\)).} \newtext{Dashed lines and error bars are as in Fig.~\ref{fig:star-network-tomography-parameter-estimation-small-sample-size}. Because the ground truth (dashed) is set by the idealized depolarization model while the data are generated from the BPT channel maps, the residual offset visible for the 5-km link is the model mismatch discussed in the text, not an estimator failure.}}
    \label{fig:star-network-tomography-parameter-estimation-approximate-channel}
\end{figure*}

\begin{table}
    \centering
    \caption{{Channel parameters at 192.550~THz ($\approx$1555~nm)} for the three fiber lengths (dashed lines in Figs.~\ref{fig:star-network-tomography-parameter-estimation-small-sample-size}--\ref{fig:star-network-tomography-parameter-estimation-approximate-channel}).}
    \label{tab:star-network-parameters}
    \begin{tabular}{@{}lccc@{}}
        \toprule
        Channel & Success ratio & \multicolumn{2}{c}{Depolarization}                              \\
        \cmidrule(lr){3-4}
        length  & \(s\)         & Co-prop.\ \(d^{(1)}\)              & Counter-prop.\ \(d^{(2)}\) \\
        \midrule
        0.5~km  & 0.0017        & 0.0306                             & 0.0967                     \\
        5~km    & 0.0015        & 0.1659                             & \newtext{0.3334}           \\
        15~km   & 0.0009        & \newtext{0.3355}                   & 0.5916                     \\
        \bottomrule
    \end{tabular}
    \vspace{-0.2cm}
\end{table}

Table~\ref{tab:star-network-parameters} lists the link-level parameters used to configure the three-link star network.
To recover all six per-link depolarization parameters, the emulation realizes the two-configuration protocol of \S\ref{sec:network-tomography}: an original batch and a complementary batch in which the classical-signal direction on link~3 is flipped (interchanging \(d_3^{(1)}\) and \(d_3^{(2)}\) between the two probe directions, cf.~\eqref{eq:combined-network-tomography-equation-original}--\eqref{eq:combined-network-tomography-equation-alternative}), and the combined system is solved in closed form. Without the flip only five of the six depolarization equations are independent (Lemma~\ref{lem:star-identifiability}), so the six per-link depolarization parameters shown below could not be identified.
We apply the star network estimators of \S\ref{sec:network-tomography} to the emulated end-to-end measurements and examine convergence as a function of sample size.
For reference, the parameter values from Table~\ref{tab:star-network-parameters} are plotted as dashed lines in the figures below.
In these figures, each \emph{trial} corresponds to one emulation run with a fixed batch sample size (e.g., \(10^5\) or \(10^{10}\) photon transmissions); the \(x\)-axis shows the cumulative number of consecutive trials used to refine the estimates.
Error bars show \(\pm1\) sample standard deviation across 10 independent macro-replicates, each aggregating 100 batches of \(T\) photons.
As the number of trials grows, more measurement data is aggregated, and the link-level estimators converge more closely to the true parameters.

Figure~\ref{fig:star-network-tomography-parameter-estimation-small-sample-size} shows results under Type~I compilation with a batch sample size of \(10^5\): the success ratio \(s\) does not reliably converge (due to its low magnitude, \(\sim 10^{-3}\)), but the depolarization parameters \(d^{(1)}\) and \(d^{(2)}\) do.
Increasing the batch sample size to \(10^{10}\) (Figure~\ref{fig:star-network-tomography-parameter-estimation-large-sample-size}) substantially improves stability: the range of the success-ratio estimates narrows from \([0,0.1]\) to \([0,0.003]\).
Figure~\ref{fig:star-network-tomography-parameter-estimation-approximate-channel} shows Type~II compilation results using BPT-based channel maps with a batch sample size of \(10^5\), illustrating the estimator's behavior under more realistic channel conditions.
Notably, under Type~II compilation, a batch sample size of \(10^5\) already suffices for stable convergence.
However, because the BPT-predicted matched fraction for the 5-km link differs from the depolarization model's \(s+d^{(x)}/2\) (consistent with the imperfect process fidelity in Figure~\ref{fig:link-tomography-process-fidelity-comparison}), the estimates for that link, while stably converging, exhibit a persistent bias relative to the ground truth set by the idealized depolarization model; this mismatch enters through \(2M-R\) (hence \(s_k\)) and propagates to the depolarization estimates via \(d_k = r_k - s_k\).

\newtext{This also explains \emph{which} parameters carry the visible bias. Under Type~II only \(M\) is perturbed, so \(\hat{r}_k^{(x)}\)---which consumes \(R/T\) alone---stays unbiased to leading order and the mismatch is carried by \(\hat{s}_k\); since \(\hat{d}_k^{(x)}=\hat{r}_k^{(x)}-\hat{s}_k\), the displacement \(\operatorname{Bias}(\hat{d}_k^{(x)})=-\operatorname{Bias}(\hat{s}_k)\) is \emph{independent of \(x\)}. The bias is thus a per-\emph{link} effect: on the 5-km link \(\hat{s}_2\approx0.06\) against a truth of \(0.0015\), and both of its depolarization estimates are displaced downward by that same \(\approx0.06\); the 15-km link carries the opposite sign, since there \(\hat{s}_3\) sits at the estimator's positivity floor. Because the displacement is shared between the two modes, the \emph{relative} bias is larger for the smaller parameter (\(\approx36\%\) for \(d_2^{(1)}\) vs.\ \(\approx18\%\) for \(d_2^{(2)}\)), whereas counter-propagation carries the larger error \emph{bars} since \(\operatorname{Var}(\hat{d}^{(x)})\) grows with \(d^{(x)}\)---bias and variance rank the two modes oppositely.}


\paragraph{Why sample size matters}
The estimator~\eqref{eq:parameter-s-estimation} relies on the difference \(2M-R\), whose expectation is the product of two per-link success ratios, \(\mathbb{E}[2M_{k,\ell}-R_{k,\ell}] = T\,s_k s_\ell\), of order \(10^{-6}\,T\) in the coexistence regime. Because each received photon contributes a value in \(\{-1,+1\}\) to \(2M-R\) (\(+1\) if intact, \(\pm1\) with equal probability if depolarized), its variance is at most the received count, \(\operatorname{Var}(2M-R)\le \mathbb{E}[R]=T\,r_k r_\ell\) with \(r=s+d\). A Gaussian approximation then gives \(\Pr[2M-R<0]\approx\Phi\!\big(-\sqrt{T}\,s_k s_\ell/\sqrt{r_k r_\ell}\big)\), so the square-root estimator is reliable only when this tail is small, i.e.\ when \(T\gtrsim r_k r_\ell/(s_k s_\ell)^2\). For our star parameters this evaluates to \(\approx0.49\) at \(T=10^5\) (the success ratio behaves like a coin flip, explaining its failure to converge in Fig.~\ref{fig:star-network-tomography-parameter-estimation-small-sample-size}) and to \(\approx0\) by \(T=10^{10}\) (explaining the narrowing of the estimate range), with the longest, highest-loss link setting the binding threshold. This converts the qualitative observation into a falsifiable, figure-matching prediction and explains why coexistence network tomography needs sample sizes orders of magnitude larger than link-level estimation. A heuristic absolute value \(|2M-R|\) helps Type~I compilation but worsens the more realistic Type~II; the principled small-sample remedy is a non-negativity-constrained estimator on the binomial likelihood. In our implementation the moment ratio is clamped to the physical range and \(2M-R\) is floored at machine epsilon to guard against these sub-zero excursions, which the quoted success-ratio axes reflect.

\newtext{To quantify this, we re-ran the Type~I emulation with the two guards applied to \emph{identical} generated data (5 seeds \(\times\) 2000 macro-replicates). At \(T=10^5\) about half the symmetrized \(2M-R\) values are non-positive, and after 100 batches the floored estimator is \(\approx6.5\times\) worse in \(\mathrm{RMSE}(\hat{s})\) and \(\approx3.6\times\) worse in \(\mathrm{RMSE}(\hat{d})\) than the \(|2M-R|\) variant. At \(T=10^{10}\) over 100 batches nothing is floored and the two variants agree to three significant figures. The harm is not the clipped values collapsing toward zero, which costs at most the true \(s\sim10^{-3}\), but the reverse: a clipped value in the \emph{denominator} of~\eqref{eq:parameter-s-estimation} makes the ratio saturate the physical-range clamp and drives \(\hat{s}\) to \(\approx0.16\). (These figures are for Type~I; the Type~II variant additionally requires the BPT posterior samples.)}

To put the scale in perspective, resolving \(s\) to within its own magnitude requires \(\sim\!10^{10}\) input photons per end-pair; at an illustrative clocked source rate of \(10^7\)--\(10^8\) transmissions per second this is minutes of integration per end-pair, but because the requirement grows super-exponentially with depth (Fig.~\ref{fig:exp-topology-scaling}) the depth-5 case (\(2.5\times10^{11}\) probes) already pushes acquisition to hours. Recovering the small coexistence success-ratio signal---rather than the comparatively easy depolarization parameters---is therefore the binding practical bottleneck.

\subsection{General-Topology Tomography Simulation}\label{subsec:general-topology-experiment}

\begin{figure}[!t]
    \centering
    \includegraphics[width=\columnwidth]{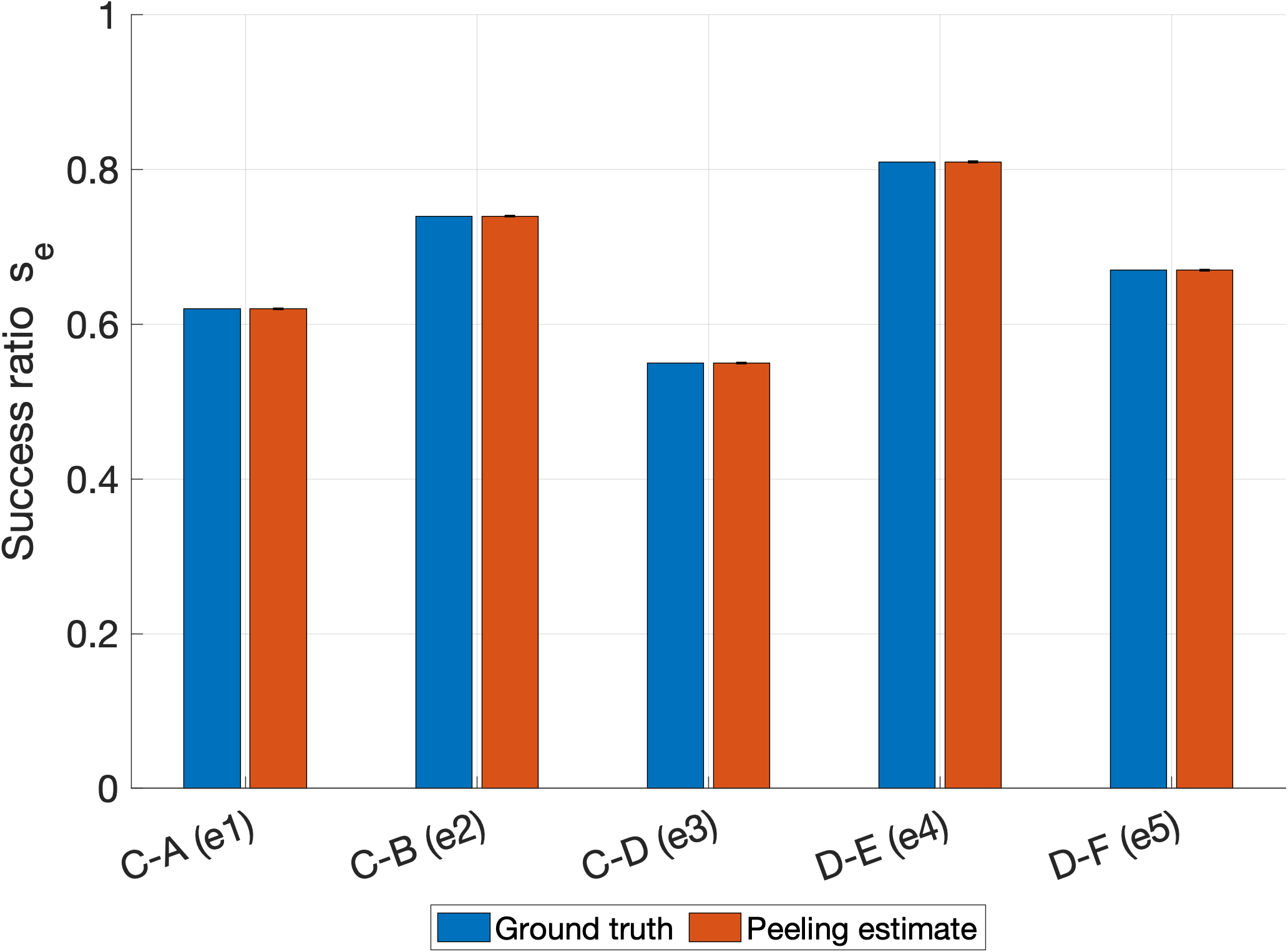}
    \caption{Per-link success-ratio recovery on a five-link tree
            ($A\!-\!C\!-\!B$, $C\!-\!D$, $D\!-\!E$, $D\!-\!F$; leaves $A,B,E,F$) using only
            leaf-to-leaf probes. Leaf links are estimated by
            $\hat{s}_e=\sqrt{Q_{v,j}Q_{v,k}/Q_{j,k}}$ and the internal link $C\!-\!D$ is
            recovered by peeling. Estimates (sample size $T=5\times10^{6}$, $40$ repeats)
            match ground truth for every link, including the internal one; error bars are
            smaller than the markers.}
    \label{fig:exp-topology-peeling}
\end{figure}

\begin{figure*}[!t]
    \centering
    \includegraphics[width=0.92\textwidth]{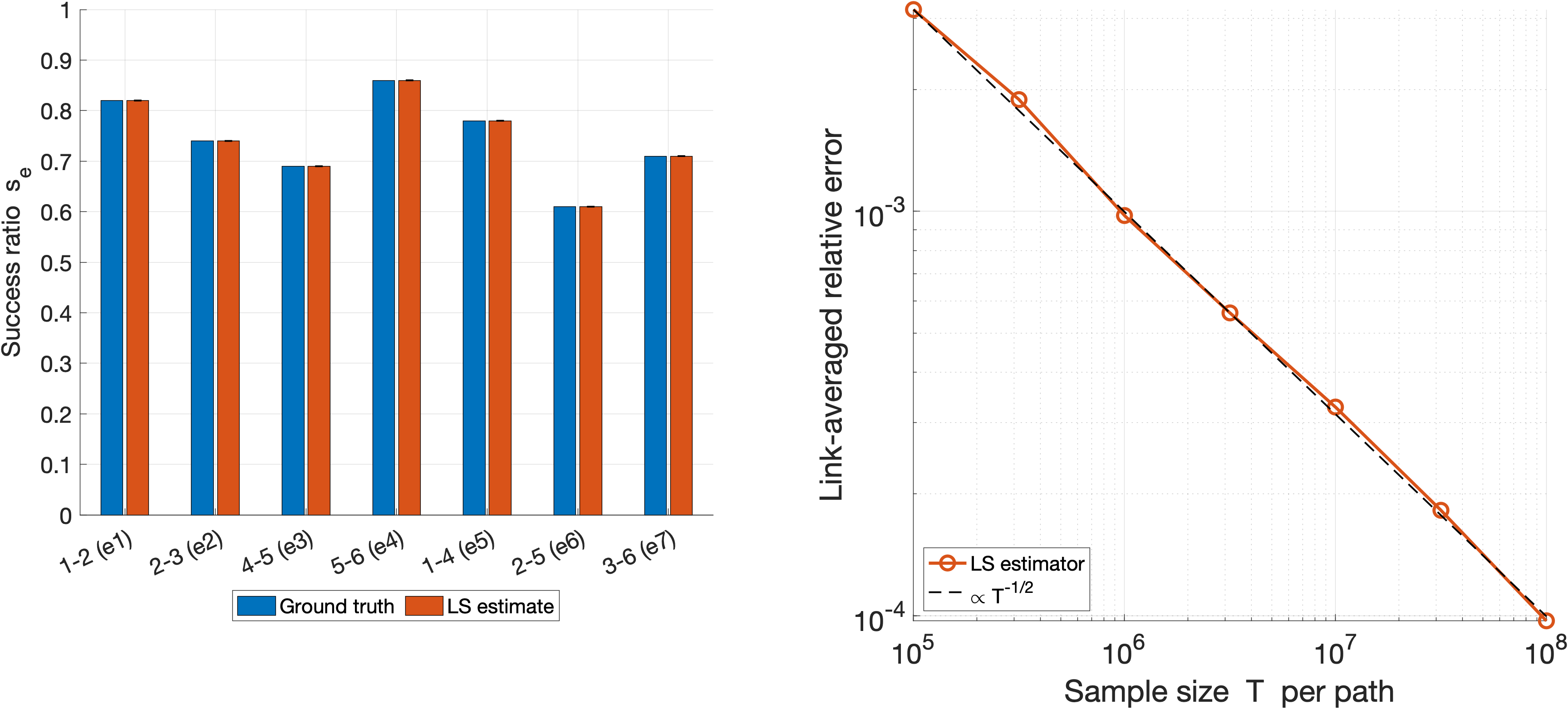}
    \caption{General-graph least-squares recovery on a cyclic mesh
            ($2\times3$ ladder/grid: top path $1\!-\!2\!-\!3$, bottom path $4\!-\!5\!-\!6$, rungs
            $1\!-\!4$, $2\!-\!5$, $3\!-\!6$; seven links, two $4$-cycles, no leaves). All per-link success
            ratios are inferred from corner-to-corner probes only (terminals $1,3,4,6$;
            nodes $2,5$ are relays) by solving the log-linearized system
            $A\mathbf{x}=\mathbf{b}$, $x_e=\log s_e$, in the least-squares sense over twelve
            paths (incidence matrix rank $7$, full column rank). \emph{Left:} recovered
            $s_e$ versus ground truth at $T=10^{7}$ per path. \emph{Right:} link-averaged
            relative error decreasing as $T^{-1/2}$.}
    \label{fig:exp-mesh-regression}
\end{figure*}

\begin{figure*}[!t]
    \centering
    \includegraphics[width=0.92\textwidth]{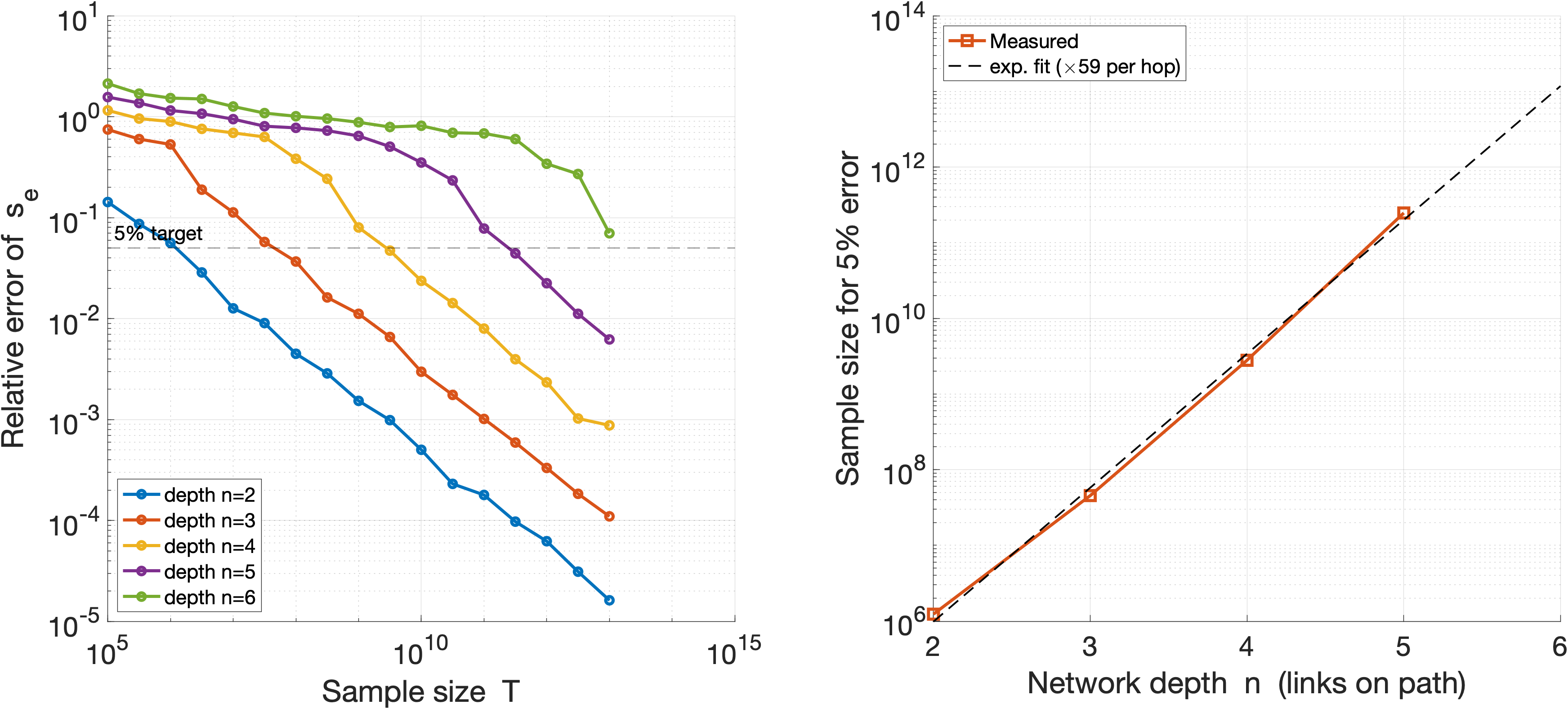}
    \caption{Sample-size requirements grow super-exponentially with network
            depth. \emph{Left:} relative error of the per-link estimator $\hat{s}$ versus
            sample size $T$ for identical-link chains of depth $n=2,\dots,6$ ($s_e=0.05$).
            \emph{Right:} the sample size needed to reach $5\%$ relative error versus depth
            ($1.2\times10^{6}$, $4.5\times10^{7}$, $2.8\times10^{9}$, $2.5\times10^{11}$
            probes for $n=2,\dots,5$). The consecutive per-hop multipliers \emph{increase}
            ($38\times,62\times,89\times$), consistent with the closed form
            $T_{\mathrm{req}}(n)=[s^n(1-s^n)+((s{+}d)^n-s^n)]/(n^2 s^{2n}\epsilon^2)$, whose
            per-hop ratio tends to $(s{+}d)/s^2$ (here $=100$); the $\approx\!59\times$
            quoted elsewhere is the geometric-mean slope of the semi-log fit at the
            illustrative $s_e=0.05$.}
    \label{fig:exp-topology-scaling}
\end{figure*}

    We finally validate the general-topology constructions of
    \S\ref{sec:general-topology}---the tree-peeling estimator, the general-graph
    least-squares solver, and the sample-size scaling---with Monte-Carlo simulations.
    These simulations reuse the multi-link path-composition sampler of the
    star-network emulation (\S\ref{subsec:star-experiment}), but with all per-link
    parameters assigned synthetically rather than taken from testbed measurements,
    so all data here are generated entirely numerically.
    These Monte-Carlo simulations are deliberately self-consistent: the path sampler composes the same depolarization channels the estimators invert, so they verify estimator correctness, numerical identifiability, and sample-complexity scaling under a correctly-specified model. They are not a test of model fidelity, which is established only at the link level (Figs.~\ref{fig:link-tomography-depolarization-probability-comparison},~\ref{fig:link-tomography-process-fidelity-comparison}) and stressed under Type~II compilation.

    \paragraph{Peeling on a tree}
    We construct a five-link tree with two internal nodes and four leaves and assign
    distinct per-link success ratios. Using only end-to-end probes between leaves, the
    symmetrized success ratios $Q=(2M-R)/T$ identify each leaf link through
    $\hat{s}_e=\sqrt{Q_{v,j}Q_{v,k}/Q_{j,k}}$~\eqref{eq:parameter-s-estimation}, and
    the interior link $C\!-\!D$ is then recovered by peeling its leaf contributions
    from a leaf-to-leaf path. Figure~\ref{fig:exp-topology-peeling} shows that all
    five per-link parameters are recovered accurately at $T=5\times10^{6}$ probes per
    path, confirming the correctness of the peeling construction for tree topologies.

    \paragraph{General graphs via least squares}
    For topologies with cycles the peeling recursion no longer applies, since no link
    is incident to a leaf. We instead solve the log-linearized system of
    \S\ref{sec:general-topology}: taking
    logarithms of \eqref{eq:general-S} turns each probe into the linear constraint
    $\sum_{e\in P}\log s_e=\log\big((2M_P-R_P)/T_P\big)$, and stacking a sufficient set
    of paths yields $A\mathbf{x}=\mathbf{b}$ with $x_e=\log s_e$ and $A$ the
    path-by-link incidence matrix. We validate this on a seven-link $2\times3$ ladder (grid)
    mesh in which only the four corner nodes (each of degree two) are probe terminals
    and the two degree-three nodes act as relays. Twelve corner-to-corner paths make the incidence
    matrix full column rank (rank $7$, $\mathrm{cond}(A^{\!\top}A)\approx14$), so all
    seven links---including the one between the two relays---are identifiable. Solving
    in the least-squares sense recovers every per-link success ratio, with the
    link-averaged relative error decreasing as $T^{-1/2}$
    (Figure~\ref{fig:exp-mesh-regression}). The same construction applied to the
    received ratios $r_e^{(x)}$ recovers the depolarization parameters once each link
    is probed in both propagation modes.

    \paragraph{Sample size grows with depth}
    Because the end-to-end success signal $\prod_{e\in P}s_e$ shrinks geometrically
    with path length, longer paths demand disproportionately more probes. We quantify
    this on chains of $n$ identical links ($s_e=0.05$, $d_e=0.20$) and record the sample size
    required for the per-link estimator $\hat{s}=\big((2M-R)/T\big)^{1/n}$ to reach
    $5\%$ relative error. A delta-method analysis predicts
    $T_{\mathrm{req}}(n)=[s^n(1-s^n)+((s{+}d)^n-s^n)]/(n^2 s^{2n}\epsilon^2)$, whose per-hop
    multiplier is not constant but grows ($38\times,62\times,89\times$ for the simulated points),
    approaching $(s{+}d)/s^2$ for long paths ($=100$ at $s_e=0.05$). Figure~\ref{fig:exp-topology-scaling}
    confirms this law, reproducing the simulated probe counts to within tens of percent; the
    $\approx\!59\times$ figure is the geometric-mean slope at this illustrative $s_e$, and in the
    experimental coexistence regime $s\sim10^{-3}$ (Table~\ref{tab:star-network-parameters}) the
    per-hop factor is far larger. This substantiates
    the observation (Section~\ref{sec:conclusion}) that network-level estimation in
    coexisting quantum networks demands sample sizes several orders of magnitude
    larger than link-level estimation, and motivates the variance-reduction techniques
    (symmetrized probes, repeated batches) discussed in \S\ref{sec:general-topology}.

%% file: sections/conclusion.tex
\section{Conclusion}\label{sec:conclusion}

This work presented a tomography framework for characterizing quantum-classical coexistence networks (QCNs) from end-to-end measurements.
We introduced a channel model for the coexisting fiber that decomposes the quantum signal evolution into photon loss, successful transmission, and direction-dependent depolarization, capturing the distinct noise signatures of co- and counter-propagating classical traffic due to Raman scattering.
From this model, we derived closed-form link-level estimators for all channel parameters, including the success ratio, photon loss ratio, and three depolarization ratios.
We then extended the approach to star-topology networks by formulating a system of multiplicative equations across end-node pairs, and proposed a classical-signal-direction-switching strategy to resolve all per-link depolarization parameters.

Analysis of real coexisting fiber testbed data validated the channel model at the link level---the link tomography estimates closely match Bayesian process tomography baselines across multiple fiber lengths and wavelengths---while the star-network estimators were verified for correctness and identifiability on multi-link paths emulated by composing the same testbed data (with robustness to model mismatch assessed using BPT-reconstructed channel maps), and the general-topology estimators and the sample-complexity scaling were verified in fully synthetic Monte-Carlo simulations.

Beyond the core framework, we developed two extensions: \newtext{a quantum-channel model that factorizes the coexisting fiber into a CPTP \emph{tensor product} of two maps on separate optical modes---a depolarizing-with-loss signal channel on the signal mode and a Raman-noise-injection channel on the broadband mode carrying the scattered photons---whose two-mode photon-number observable accounts for the extra photons that coexistence delivers, and whose link observables reduce exactly to the basic model,} and a generalization to arbitrary topologies via tree peeling and a general-graph least-squares estimator that we verified in Monte-Carlo simulation under a correctly-specified model.

Our results also revealed that network-level estimation in QCNs demands sample sizes several orders of magnitude larger than link-level estimation, because the signal of interest---the product of per-link success ratios---is extremely small in the coexistence regime. We derive a closed-form law for this cost, $T_{\mathrm{req}}(n)\propto[s^n(1-s^n)+((s+d)^n-s^n)]/(n^2 s^{2n})$, whose per-hop multiplier grows with depth toward $(s+d)/s^2$ (Figure~\ref{fig:exp-topology-scaling}); the often-quoted $\approx\!59\times$ per added link is the geometric-mean slope at the illustrative value $s_e=0.05$, and is far larger in the experimentally measured regime $s\sim10^{-3}$.

\noindent\textbf{Deployment guidelines.}
Three of our findings translate directly into operator decisions.
(1)~\emph{Prefer co-propagation routing where topology allows}: counter-propagating classical traffic inflates the depolarization probability by a factor of $1.8\times$--$3.2\times$ over co-propagation in our data (Table~\ref{tab:star-network-parameters}; largest at short reach), so co-propagating the quantum and classical channels on a shared link yields measurably higher fidelity.
(2)~\emph{Keep monitored path depth shallow}: because the end-to-end success signal is the product $\prod_{e\in P}s_e$, the probe budget to reach a fixed relative error grows super-exponentially with the number of hops (Figure~\ref{fig:exp-topology-scaling}), so where direct access is available one should characterize links individually and confine purely end-to-end tomography to paths of depth $\lesssim 2$--$3$.
(3)~\emph{Avoid quantum wavelengths inside the classical crosstalk band}: the anomalous depolarization feature near the $1542$~nm classical-laser line (Figure~\ref{fig:link-tomography-depolarization-probability-comparison}) shows that guard-band placement of the quantum channel away from the classical carrier is essential.

\noindent\textbf{Future Directions.}
Several promising directions remain open.
We establish exact identifiability of the per-link parameters---for the star via the gauge argument of Lemma~\ref{lem:star-identifiability} and for general \newtext{network} topologies via the incidence-rank condition of Proposition~\ref{prop:identifiability}---and give a sample-complexity scaling law for the success-ratio estimator; sharp finite-sample (e.g.\ minimax) bounds and optimal probe-set design for general topologies remain open.
Our current model also focuses on the depolarization noise structure induced by Raman scattering; other noise sources---such as direct crosstalk between the classical and quantum signals, signal conversion at the transmitter and receiver, or nonlinear effects such as four-wave mixing---introduce additional noise components in certain scenarios and could be incorporated into a more comprehensive channel model.
Moreover, the sample-size requirements for network-level estimation could potentially be reduced through more efficient probe allocation strategies or regularized estimators that exploit prior knowledge of the parameter regime.
Finally, the single highest-value next experiment is a direct physical two-link test---splicing two calibrated spools and measuring the end-to-end $(R/T, M)$ statistics against the composition law of~\eqref{eq:network-tomography-measurement}---which would close the gap between the composed-link emulation used here and a fully physical multi-link validation.

%% file: bibliography.bib
@article{thomas2024quantum,
  title     = {Quantum teleportation coexisting with classical communications in optical fiber},
  author    = {Thomas, Jordan M and Yeh, Fei I and Chen, Jim Hao and Mambretti, Joe J and Kohlert, Scott J and Kanter, Gregory S and Kumar, Prem},
  journal   = {Optica},
  volume    = {11},
  number    = {12},
  pages     = {1700--1707},
  year      = {2024},
  publisher = {Optica Publishing Group}
}

@article{chapman2023coexistent,
  title     = {Coexistent quantum channel characterization using spectrally resolved Bayesian quantum process tomography},
  author    = {Chapman, Joseph C and Lukens, Joseph M and Alshowkan, Muneer and Rao, Nageswara and Kirby, Brian T and Peters, Nicholas A},
  journal   = {Physical Review Applied},
  volume    = {19},
  number    = {4},
  pages     = {044026},
  year      = {2023},
  publisher = {APS}
}

@article{townsend1997simultaneous,
  title     = {Simultaneous quantum cryptographic key distribution and conventional data transmission over installed fibre using wavelength-division multiplexing},
  author    = {Townsend, Paul D},
  journal   = {Electronics Letters},
  volume    = {33},
  number    = {3},
  pages     = {188--190},
  year      = {1997},
  publisher = {IET}
}

@article{rubeling2024quantum,
  title     = {Quantum and coherent signal transmission on a single-frequency channel via the electro-optic serrodyne technique},
  author    = {R{\"u}beling, Philip and Heine, Jan and Johanning, Robert and Kues, Michael},
  journal   = {Science Advances},
  volume    = {10},
  number    = {30},
  pages     = {eadn8907},
  year      = {2024},
  publisher = {American Association for the Advancement of Science}
}

@article{wang2024time,
  title     = {Time-interleaved C-band Co-propagation of quantum and classical channels},
  author    = {Wang, Jing and Rollick, Brian J and Jia, Zhensheng and Huberman, Bernardo A},
  journal   = {Journal of Lightwave Technology},
  year      = {2024},
  publisher = {IEEE}
}

@article{wang2017long,
  title     = {Long-distance copropagation of quantum key distribution and terabit classical optical data channels},
  author    = {Wang, Liu-Jun and Zou, Kai-Heng and Sun, Wei and Mao, Yingqiu and Zhu, Yi-Xiao and Yin, Hua-Lei and Chen, Qing and Zhao, Yong and Zhang, Fan and Chen, Teng-Yun and others},
  journal   = {Physical Review A},
  volume    = {95},
  number    = {1},
  pages     = {012301},
  year      = {2017},
  publisher = {APS}
}

@article{chapman2023two,
  title     = {Two-mode squeezing over deployed fiber coexisting with conventional communications},
  author    = {Chapman, Joseph C and Miloshevsky, Alexander and Lu, Hsuan-Hao and Rao, Nageswara and Alshowkan, Muneer and Peters, Nicholas A},
  journal   = {Optics Express},
  volume    = {31},
  number    = {16},
  pages     = {26254--26275},
  year      = {2023},
  publisher = {Optica Publishing Group}
}

@article{patel2012coexistence,
  title     = {Coexistence of high-bit-rate quantum key distribution and data on optical fiber},
  author    = {Patel, KA and Dynes, JF and Choi, I and Sharpe, AW and Dixon, AR and Yuan, ZL and Penty, RV and Shields, AJ},
  journal   = {Physical Review X},
  volume    = {2},
  number    = {4},
  pages     = {041010},
  year      = {2012},
  publisher = {APS}
}

@inproceedings{wang2023field,
  title        = {Field trial of a dynamically switched quantum network supporting co-existence of entanglement, prepare-and-measure QKD and classical channels},
  author       = {Wang, Rui and Yang, R and Clark, MJ and Oliveira, Romerson D and Bahrani, Sima and Perani{\'c}, M and Lon{\v{c}}ari{\'c}, Martin and Stip{\v{c}}evi{\'c}, Mario and Rarity, J and Joshi, Siddarth Koduru and others},
  booktitle    = {IET Conference Proceedings CP839},
  volume       = {2023},
  number       = {34},
  pages        = {1682--1685},
  year         = {2023},
  organization = {IET}
}

@article{antesberger2024distribution,
  title     = {Distribution of telecom entangled photons through a 7.7 km antiresonant hollow-core fiber},
  author    = {Antesberger, Michael and Richter, Carla MD and Poletti, Francesco and Slav{\'\i}k, Radan and Petropoulos, Periklis and H{\"u}bel, Hannes and Trenti, Alessandro and Walther, Philip and Rozema, Lee A},
  journal   = {Optica Quantum},
  volume    = {2},
  number    = {3},
  pages     = {173--180},
  year      = {2024},
  publisher = {Optica Publishing Group}
}

@book{he2021network,
  title     = {Network tomography: identifiability, measurement design, and network state inference},
  author    = {He, Ting and Ma, Liang and Swami, Ananthram and Towsley, Don},
  year      = {2021},
  publisher = {Cambridge University Press}
}

@article{coates2002internet,
  title   = {Internet tomography},
  author  = {Coates, Mark and Hero III, Alfred O and Nowak, Robert and Yu, Bin},
  journal = {IEEE Signal Processing Magazine},
  volume  = {19},
  number  = {3},
  pages   = {47--65},
  year    = {2002}
}

@article{caceres1999multicast,
  title   = {Multicast-based inference of network-internal loss characteristics},
  author  = {C{\'a}ceres, Ram{\'o}n and Duffield, Nick G and Horowitz, Joseph and Towsley, Don},
  journal = {IEEE Transactions on Information Theory},
  volume  = {45},
  number  = {7},
  pages   = {2462--2480},
  year    = {1999}
}

@article{lopresti2002multicast,
  title   = {Multicast-based inference of network-internal delay distributions},
  author  = {Lo Presti, Francesco and Duffield, Nick G and Horowitz, Joseph and Towsley, Don},
  journal = {IEEE/ACM Transactions on Networking},
  volume  = {10},
  number  = {6},
  pages   = {761--775},
  year    = {2002}
}

@article{tsang2003network,
  title   = {Network delay tomography},
  author  = {Tsang, Yolanda and Coates, Mark and Nowak, Robert D},
  journal = {IEEE Transactions on Signal Processing},
  volume  = {51},
  number  = {8},
  pages   = {2125--2136},
  year    = {2003}
}

@inproceedings{deandrade2022quantum,
  title        = {Quantum network tomography with multi-party state distribution},
  author       = {De Andrade, Matheus Guedes and Diaz, Jerry and Navas, Jordan and Guha, Saikat and Monta{\~n}o, Ines and Smith, Brian and Raymer, Michael and Towsley, Don},
  booktitle    = {2022 IEEE International Conference on Quantum Computing and Engineering (QCE)},
  pages        = {400--409},
  year         = {2022},
  organization = {IEEE}
}

@inproceedings{deandrade2023characterization,
  title        = {On the characterization of quantum flip stars with quantum network tomography},
  author       = {De Andrade, Matheus Guedes and Navas, Jordan and Monta{\~n}o, Ines and Towsley, Don},
  booktitle    = {2023 IEEE International Conference on Quantum Computing and Engineering (QCE)},
  volume       = {1},
  pages        = {1260--1270},
  year         = {2023},
  organization = {IEEE}
}

@article{deandrade2024quantum,
  title   = {Quantum network tomography},
  author  = {De Andrade, Matheus Guedes and Navas, Jordan and Guha, Saikat and Monta{\~n}o, Ines and Raymer, Michael and Smith, Brian and Towsley, Don},
  journal = {IEEE Network},
  year    = {2024}
}

@inproceedings{wang2025optimal,
  title     = {Online Optimal Probe Allocation for Quantum Network Tomography},
  author    = {Wang, Xuchuang and Chen, Yu-Zhen Janice and Guedes de Andrade, Matheus and Hajiesmaili, Mohammad and Lui, John C.S. and He, Ting and Towsley, Don},
  booktitle = {International Conference on Quantum Communications, Networking, and Computing},
  year      = {2026}
}

@article{azuma2023quantum,
  title     = {Quantum repeaters: From quantum networks to the quantum internet},
  author    = {Azuma, Koji and Economou, Sophia E and Elkouss, David and Hilaire, Paul and Jiang, Liang and Lo, Hoi-Kwong and Tzitrin, Ilan},
  journal   = {Reviews of Modern Physics},
  volume    = {95},
  number    = {4},
  pages     = {045006},
  year      = {2023},
  publisher = {APS}
}

@article{bennett2014quantum,
  title     = {Quantum cryptography: Public key distribution and coin tossing},
  author    = {Bennett, Charles H and Brassard, Gilles},
  journal   = {Theoretical computer science},
  volume    = {560},
  pages     = {7--11},
  year      = {2014},
  publisher = {Elsevier}
}

@article{cacciapuoti2019quantum,
  title     = {Quantum internet: Networking challenges in distributed quantum computing},
  author    = {Cacciapuoti, Angela Sara and Caleffi, Marcello and Tafuri, Francesco and Cataliotti, Francesco Saverio and Gherardini, Stefano and Bianchi, Giuseppe},
  journal   = {IEEE Network},
  volume    = {34},
  number    = {1},
  pages     = {137--143},
  year      = {2019},
  publisher = {IEEE}
}

@article{guo2020distributed,
  title     = {Distributed quantum sensing in a continuous-variable entangled network},
  author    = {Guo, Xueshi and Breum, Casper R and Borregaard, Johannes and Izumi, Shuro and Larsen, Mikkel V and Gehring, Tobias and Christandl, Matthias and Neergaard-Nielsen, Jonas S and Andersen, Ulrik L},
  journal   = {Nature Physics},
  volume    = {16},
  number    = {3},
  pages     = {281--284},
  year      = {2020},
  publisher = {Nature Publishing Group UK London}
}

@inproceedings{wang2025learn,
  title     = {Learning Best Paths in Quantum Networks},
  author    = {Wang, Xuchuang and Liu, Maoli and Liu, Xutong and Li, Zhuohua and Hajiesmaili, Mohammad and Lui, John C.S. and Towsley, Don},
  booktitle = {Proceedings of the IEEE Conference on Computer Communications},
  year      = {2025}
}

@article{azuma2025networking,
  title     = {Networking quantum networks with minimum cost aggregation},
  author    = {Azuma, Koji},
  journal   = {npj Quantum Information},
  volume    = {11},
  number    = {1},
  pages     = {51},
  year      = {2025},
  publisher = {Nature Publishing Group UK London}
}

@article{jiang2007distributed,
  title     = {Distributed quantum computation based on small quantum registers},
  author    = {Jiang, Liang and Taylor, Jacob M and S{\o}rensen, Anders S and Lukin, Mikhail D},
  journal   = {Physical Review A—Atomic, Molecular, and Optical Physics},
  volume    = {76},
  number    = {6},
  pages     = {062323},
  year      = {2007},
  publisher = {APS}
}

@inproceedings{vardoyan2023quantum,
  title        = {Quantum network utility maximization},
  author       = {Vardoyan, Gayane and Wehner, Stephanie},
  booktitle    = {2023 IEEE International Conference on Quantum Computing and Engineering (QCE)},
  volume       = {1},
  pages        = {1238--1248},
  year         = {2023},
  organization = {IEEE}
}

@article{pant2019routing,
  title     = {Routing entanglement in the quantum internet},
  author    = {Pant, Mihir and Krovi, Hari and Towsley, Don and Tassiulas, Leandros and Jiang, Liang and Basu, Prithwish and Englund, Dirk and Guha, Saikat},
  journal   = {npj Quantum Information},
  volume    = {5},
  number    = {1},
  pages     = {25},
  year      = {2019},
  publisher = {Nature Publishing Group UK London}
}

@article{1253508,
  author   = {Mandelbaum, I. and Bolshtyansky, M.},
  journal  = {IEEE Photonics Technology Letters},
  title    = {Raman amplifier model in single-mode optical fiber},
  year     = {2003},
  volume   = {15},
  number   = {12},
  pages    = {1704-1706},
  doi      = {10.1109/LPT.2003.819760}
}

@article{Peters_2009,
  doi       = {10.1088/1367-2630/11/4/045012},
  url       = {https://doi.org/10.1088/1367-2630/11/4/045012},
  year      = {2009},
  month     = {apr},
  publisher = {IOP Publishing},
  volume    = {11},
  number    = {4},
  pages     = {045012},
  author    = {Peters, N A and Toliver, P and Chapuran, T E and Runser, R J and McNown, S R and Peterson, C G and Rosenberg, D and Dallmann, N and Hughes, R J and McCabe, K P and Nordholt, J E and Tyagi, K T},
  title     = {Dense wavelength multiplexing of 1550 nm QKD with strong classical channels in reconfigurable networking environments},
  journal   = {New Journal of Physics}
}

@article{wang2025quantum,
  title={Quantum network tomography for general topology with spam errors},
  author={Wang, Xuchuang and De Andrade, Matheus Guedes and Avis, Guus and Chen, Yu-Zhen Janice and Hajiesmaili, Mohammad and Towsley, Don},
  journal={arXiv preprint arXiv:2511.01074},
  year={2025}
}

@article{chuang1997prescription,
  title     = {Prescription for experimental determination of the dynamics of a quantum black box},
  author    = {Chuang, Isaac L. and Nielsen, M. A.},
  journal   = {Journal of Modern Optics},
  volume    = {44},
  number    = {11--12},
  pages     = {2455--2467},
  year      = {1997},
  publisher = {Taylor \& Francis}
}

@article{mohseni2008quantum,
  title     = {Quantum-process tomography: Resource analysis of different strategies},
  author    = {Mohseni, M. and Rezakhani, A. T. and Lidar, D. A.},
  journal   = {Physical Review A},
  volume    = {77},
  number    = {3},
  pages     = {032322},
  year      = {2008},
  publisher = {APS}
}

@article{shabani2011efficient,
  title     = {Efficient measurement of quantum dynamics via compressive sensing},
  author    = {Shabani, A. and Kosut, R. L. and Mohseni, M. and Rabitz, H. and Broome, M. A. and Almeida, M. P. and Fedrizzi, A. and White, A. G.},
  journal   = {Physical Review Letters},
  volume    = {106},
  number    = {10},
  pages     = {100401},
  year      = {2011},
  publisher = {APS}
}

@article{blumekohout2017demonstration,
  title     = {Demonstration of qubit operations below a rigorous fault tolerance threshold with gate set tomography},
  author    = {Blume-Kohout, Robin and Gamble, John King and Nielsen, Erik and Rudinger, Kenneth and Mizrahi, Jonathan and Fortier, Kevin and Maunz, Peter},
  journal   = {Nature Communications},
  volume    = {8},
  pages     = {14485},
  year      = {2017},
  publisher = {Nature Publishing Group}
}

@article{drummond2001quantum,
  title     = {Quantum noise in optical fibers. I. Stochastic equations},
  author    = {Drummond, P. D. and Corney, J. F.},
  journal   = {Journal of the Optical Society of America B},
  volume    = {18},
  number    = {2},
  pages     = {139--152},
  year      = {2001},
  publisher = {Optica Publishing Group}
}

@book{nielsen2010quantum,
  title     = {Quantum Computation and Quantum Information},
  author    = {Nielsen, Michael A. and Chuang, Isaac L.},
  edition   = {10th Anniversary},
  year      = {2010},
  publisher = {Cambridge University Press},
  address   = {Cambridge, UK}
}

@incollection{lawrence2007statistical,
  title     = {Statistical inverse problems in active network tomography},
  author    = {Lawrence, Earl and Michailidis, George and Nair, Vijayan N},
  booktitle = {Complex Datasets and Inverse Problems: Tomography, Networks and Beyond},
  series    = {IMS Lecture Notes--Monograph Series},
  volume    = {54},
  pages     = {24--44},
  year      = {2007},
  publisher = {Institute of Mathematical Statistics},
  doi       = {10.1214/074921707000000049}
}

@inproceedings{he2015fisher,
  title     = {Fisher information-based experiment design for network tomography},
  author    = {He, Ting and Liu, Chang and Swami, Ananthram and Towsley, Don and Salonidis, Theodoros and Bejan, Andrei Iu and Yu, Paul},
  booktitle = {Proceedings of the 2015 ACM SIGMETRICS International Conference on Measurement and Modeling of Computer Systems},
  pages     = {389--402},
  year      = {2015},
  doi       = {10.1145/2745844.2745862}
}

@article{kveton2022optimal,
  title   = {Optimal probing with statistical guarantees for network monitoring at scale},
  author  = {Kveton, Branislav and Amjad, Muhammad Jehangir and Diot, Christophe and Konomis, Dimitris and Soule, Augustin and Yang, Xiaolong},
  journal = {Computer Communications},
  volume  = {192},
  pages   = {119--131},
  year    = {2022},
  doi     = {10.1016/j.comcom.2022.05.023}
}
